\definecolor{BrickRed}{rgb}{0.8,0.25,0.33}
\providecommand\theHALG@line{\thealgorithm.\arabic{ALG@line}}
\declaretheorem[numberwithin=section,refname={Theorem,Theorems},Refname={Theorem,Theorems}]{theorem}
\declaretheorem[numberwithin=section,refname={Theorem,Theorems},Refname={Theorem,Theorems}]{thm}
\declaretheorem[numberlike=theorem]{lemma}
\declaretheorem[numberlike=theorem]{corollary}
\theoremstyle{definition}
\declaretheorem[numberlike=theorem]{definition}
\theoremstyle{plain}
\declaretheorem[numberlike=theorem]{remark}
\declaretheorem[numberlike=theorem,refname={Fact,Facts},Refname={Fact,Facts},name={Fact}]{fact}
\declaretheorem[numberlike=theorem,refname={Result,Results},Refname={Result,Results},name={Result}]{result}
\newcommand{\E}{\mathbb{E}}
\newcommand{\eps}{\varepsilon}
\newcommand{\poly}{\mathrm{poly}}
\newcommand{\udeg}{\mathrm{badness}}
\newcommand{\C}{\mathcal{C}}
\newcommand{\starcond}{{\color{blue}(\star)}}
\newcommand{\baddeg}{\mathrm{baddeg}}
\newenvironment{wrapper}[1]
{
	\smallskip
	\begin{center}
		\begin{minipage}{\linewidth}
			\begin{mdframed}[hidealllines=true, backgroundcolor=gray!20, leftmargin=0cm,innerleftmargin=0.35cm,innerrightmargin=0.35cm,innertopmargin=0.375cm,innerbottommargin=0.375cm,roundcorner=10pt]
				#1}
			{\end{mdframed}
		\end{minipage}
	\end{center}
	\smallskip
}
\newcommand{\bad}{\emph{bad}\xspace}
\newcommand{\Calg}{\C_{\mathrm{alg}}}
\newcommand{\Cgreedy}{\C_{\mathrm{greedy}}}
\renewcommand{\ln}{\log}
\title{Online Edge Coloring: Sharp Thresholds}
\date{}
\author[1]{Joakim Blikstad}
\author[2]{Ola Svensson\thanks{Supported by the Swiss State Secretariat for Education, Research and Innovation (SERI) under contract number MB22.00054.}}
\author[2]{Radu Vintan\protect\footnotemark[\value{footnote}]}
\author[3]{David Wajc\thanks{Supported by a Taub Family Foundation ``Leader in Science and Technology'' fellowship and ISF grant 3200/24.}}
\affil[1]{CWI Amsterdam}
\affil[2]{EPFL}
\affil[3]{Technion --- Israel Institute of Technology}
\begin{document}
\maketitle

\pagenumbering{gobble}

\begin{abstract}
Vizing’s theorem guarantees that every graph with maximum degree~$\Delta$ admits an edge coloring using~$\Delta + 1$ colors. In online settings---where edges arrive one at a time and must be colored immediately---a simple greedy algorithm uses at most~$2\Delta - 1$ colors. Over thirty years ago, Bar-Noy, Motwani, and Naor~[IPL'92] proved that this guarantee is optimal among deterministic algorithms when~$\Delta = O(\log n)$, and among randomized algorithms when~$\Delta = O(\sqrt{\log n})$. While deterministic improvements seemed out of reach, they conjectured that for graphs with~$\Delta = \omega(\log n)$, randomized algorithms can achieve~$(1 + o(1))\Delta$ edge coloring.
This conjecture was recently resolved in the affirmative: a~$(1 + o(1))\Delta$-coloring is achievable online using randomization  for all graphs with~$\Delta = \omega(\log n)$~[BSVW STOC'24]. 

\smallskip 

Our results go further, uncovering two findings not predicted by the original conjecture. First, we give a \emph{deterministic} online algorithm achieving~$(1 + o(1))\Delta$-colorings for all~$\Delta = \omega(\log n)$. Second, we give a \emph{randomized} algorithm achieving~$(1 + o(1))\Delta$-colorings already when~$\Delta = \omega(\sqrt{\log n})$.
Our results establish sharp thresholds for when greedy can be surpassed, and near-optimal guarantees can be achieved --- matching the impossibility results of [BNMN IPL'92], both deterministically and randomly. 
\end{abstract}

\newpage 
\tableofcontents
\newpage

\pagenumbering{arabic}

\section{Introduction}

In the online edge coloring problem, an unknown $n$-node graph $G = (V, E)$ of maximum degree~$\Delta$ is revealed over time, edge by edge. 
Upon the arrival of an edge, an algorithm must irrevocably assign it a color distinct from those used on adjacent edges. 
The goal is to minimize the total number of colors used, ideally approaching the offline optimum of $\Delta$ or $\Delta + 1$ colors guaranteed by Vizing’s theorem~\cite{vizing1964estimate}.\footnote{In the offline setting, a $\Delta+1$-coloring can be computed in near-linear time, by a recent breakthrough~\cite{assadi2025vizing}. In contrast, it is NP-hard to decide whether a graph can be edge-colored with $\Delta$ colors, which are necessary by pigeonhole principle \cite{holyer1981np}.} 
The greedy online algorithm, which assigns each edge an arbitrary available color or introduces a new one if needed, provides a simple upper bound: it uses at most~$2\Delta - 1$ colors—roughly twice as many as the offline optimum.

Somewhat discouragingly, Bar-Noy, Motwani, and Naor~\cite{bar1992greedy} showed that no online algorithm can improve upon greedy's bound by even a single color. However, this lower bound holds only for graphs of low maximum degree, namely~$\Delta = O(\log n)$ for deterministic algorithms and~$\Delta = O(\sqrt{\log n})$ for randomized algorithms. 
Aware of the limitations of their lower bounds, they conjectured that better online edge coloring algorithms exist for graphs of larger degree. Specifically, while they regarded deterministic improvements over greedy as implausible, they conjectured the existence of randomized algorithms which, applied to graphs of \emph{super-logarithmic} maximum degree, output a coloring with near-optimal $(1 + o(1))\Delta$ colors.\footnote{$(1+o(1))\Delta$ colors are optimal for \emph{online} algorithms, which require at least $\Delta+\Omega(\sqrt{\Delta})$ colors for any $\Delta$~\cite{cohen2019tight}.}

The online edge coloring conjecture of~\cite{bar1992greedy} has attracted significant attention over the years. It has been studied and confirmed in several restricted models: under random-order edge arrivals~\cite{aggarwal2003switch,bahmani2012online,bhattacharya2021online,kulkarni2022online,dudeja2025randomized}, and for bipartite graphs under one-sided vertex arrivals~\cite{cohen2019tight,blikstad2023simple}. Weaker guarantees have also been obtained in general graphs under vertex arrivals~\cite{saberi2021greedy} and edge arrivals~\cite{kulkarni2022online}. The conjecture was recently resolved in~\cite{blikstad2024online}, who showed that a~$(1 + o(1))\Delta$-edge coloring is achievable online  using a \emph{randomized} algorithm for all graphs with~$\Delta = \omega(\log n)$.

Interestingly, while the conjecture was stated for randomized algorithms, it matches the known lower bound only for \emph{deterministic} algorithms, where~\cite{bar1992greedy} proved that no algorithm can improve over greedy for~$\Delta = O(\log n)$. For randomized algorithms, the best known lower bound still only holds for~$\Delta = O(\sqrt{\log n})$, leaving open the possibility of achieving $(1 + o(1))\Delta$-colorings in a broader regime.
However, we note that algorithmically matching the lower bounds of \cite{bar1992greedy} would separate online edge coloring from the more general online \emph{list} edge coloring problem, to which current online edge coloring bounds extend seamlessly \cite{blikstad2024online} (see \Cref{sec:list-edge-coloring-lower-bound}).

While~\cite{bar1992greedy} expressed skepticism about improving upon the greedy algorithm deterministically,
recent work has begun to challenge that intuition. Progress has come via reductions from randomized algorithms that succeed even against stronger, adaptive adversaries. In particular,~\cite{dudeja2025randomized} showed that the candidate randomized greedy algorithm from~\cite{bar1992greedy} achieves a~$(1 + \varepsilon)\Delta$-coloring for~$\Delta \geq Cn$ and any constants~$\varepsilon > 0$, $C \leq 1$, even when the input is generated adaptively. A classical reduction~\cite{ben1994power} then yields a deterministic algorithm with the same guarantees. For bipartite graphs under one-sided arrivals, an~$(e/(e - 1) + o(1))\Delta$-coloring is achievable for~$\Delta = \omega(\log n)$ via the same reduction~\cite{blikstad2025deterministic}.

These developments, and the absence of stronger lower bounds for randomized algorithms, leave open the possibility of resolving a stronger version of the conjecture—achieving near-optimal edge colorings deterministically, and doing so randomly already for~$\Delta = \omega(\sqrt{\log n})$. Our main results close these gaps.

\subsection{Our Results}

Our first main result shows that deterministic algorithms can achieve the same near-optimal guarantees that were previously conjectured to be achievable only via randomization:

\begin{wrapper}
\begin{result}[Deterministic Online Edge Coloring; see \Cref{thm:deterministic-precise}]\label{result:adaptive}
    There is a deterministic online $(1 + o(1))\Delta$-edge-coloring algorithm for graphs with known maximum degree~$\Delta = \omega(\log n)$.
\end{result}
\end{wrapper}

Our second main result identifies the tight threshold for randomized algorithms to achieve near-optimal edge colorings:

\begin{wrapper}
\begin{result}[Randomized Online Edge Coloring; see \Cref{thm:oblivious_theorem}]\label{result:oblivious}
    There is a randomized online $(1 + o(1))\Delta$-edge-coloring algorithm for  graphs with known maximum degree~$\Delta = \omega(\sqrt{\log n})$.
\end{result}
\end{wrapper}

Together, these results match the known lower bounds of~\cite{bar1992greedy}, who showed that for graphs with~$\Delta = O(\log n)$ and~$\Delta = O(\sqrt{\log n})$, no online algorithm—deterministic or randomized, respectively—can outperform the greedy algorithm’s palette size of $2\Delta-1$. 
The more precise theorem statements (\Cref{thm:deterministic-precise,thm:oblivious_theorem}) prove that the lower bounds of \cite{bar1992greedy} are sharp thresholds: greedy can be improved upon (1) deterministically iff $\Delta=\Omega(\log n)$ is sufficiently large, and 
    (2) randomly iff $\Delta=\Omega(\sqrt{\log n})$ is sufficiently large.
    Moreover, one can achieve $(1+o(1))\Delta$-edge-coloring iff $\Delta$ is asymptotically higher than the above thresholds.

We note that \Cref{result:oblivious} represents a departure from all previous online edge coloring algorithms that improve upon greedy, which even in special models or restricted settings required~$\Delta = \Omega(\log n)$ or higher~\cite{aggarwal2003switch,bahmani2012online,cohen2019tight,bhattacharya2021online,saberi2021greedy,kulkarni2022online,blikstad2023simple,blikstad2024online,dudeja2025randomized,blikstad2025deterministic}. Indeed, the threshold~$\Delta = \Omega(\log n)$ naturally arises in settings that rely on standard Chernoff-style concentration, giving error probabilities of~$e^{-\Delta}$, which must be union-bounded across vertices or edges. 
In contrast, our results exploit concentration over $\Theta(\Delta^2)$ many edge-color pairs, which result in error probability $\exp(-\Delta^2)$, and thus $\ll 1/\textrm{poly}(n)$ already when  $\Delta = \omega(\sqrt{\log n})$, allowing us to succeed for such smaller degrees. We expand on this intuition in the technical overview.

\subsection{Technical Overview}

Our deterministic algorithm for online edge coloring (\Cref{result:adaptive}) is obtained using the same framework recently leveraged in special cases of deterministic algorithms~\cite{dudeja2025randomized,blikstad2025deterministic}: we first design a randomized algorithm that works against an \emph{adaptive adversary}—one that chooses the next edge based on the algorithm’s previous random choices. The classical reduction of~\cite{ben1994power} then yields a deterministic algorithm with matching guarantees. The techniques we develop for this robust randomized algorithm are then adapted to also yield our tight result against \emph{oblivious adversaries} (\Cref{result:oblivious}).

To construct such a strong randomized algorithm, we diverge from most prior approaches to online edge coloring without the random-order assumption~\cite{cohen2019tight, saberi2021greedy, kulkarni2022online, blikstad2024online}, which typically reduce the problem to online matching.\footnote{The exceptions to this rule are randomized greedy \cite{dudeja2025randomized}, and some algorithms restricted to bipartite graphs under one-sided vertex arrivals \cite{blikstad2023simple,blikstad2025deterministic}.} These reductions aim to ensure that each edge is matched with probability at least $1/(\alpha\Delta)$ to obtain a coloring with roughly $\alpha \Delta$ colors. However, this matching-based framework faces inherent limitations: when faced with adaptive adversaries, even guaranteeing that an edge is matched with positive probability becomes infeasible, as the adversary can tailor future edges based on the algorithm’s randomness. Moreover, such approaches usually rely on Chernoff-type concentration bounds with failure probabilities $\exp(-\Theta(\Delta))$, which require $\Delta = \Omega(\log n)$ to union bound across vertices. These barriers prevent such reductions from being used to obtain \cref{result:adaptive,result:oblivious}.

Motivated by these challenges, we take a direct, coloring-centric approach. A natural starting point is the \emph{randomized greedy algorithm} proposed by~\cite{bar1992greedy}:
\begin{center}
\begin{minipage}{0.95\textwidth}
\begin{mdframed}[hidealllines=true, backgroundcolor=gray!15]
Fix a palette $\Calg$.\\[0.2cm]
Upon the arrival of an edge $e$, assign it a uniformly random available color from $\Calg$.
\end{mdframed}
\end{minipage}
\end{center}

The algorithm succeeds if every edge sees a non-empty set of available colors. 
While~\cite{bar1992greedy} conjectured that the algorithm succeeds with a palette size $(1+o(1))\Delta$ against oblivious adversaries when $\Delta=\omega(\log n)$, 
progress has been limited.\footnote{This conjecture was proven under random-order edge arrivals by \cite{dudeja2025randomized}, who also showed that this algorithm beats greedy for adversarially-generated graphs with $\Delta=\Omega(n)$.
Both results for this algorithm are far from the results we establish.}
Moreover, we ran several experiments that suggest a palette of size at least $\frac{e}{e - 1} \cdot \Delta$ is required for this algorithm against adaptive adversaries even for $\Delta=\omega(\log n)$. (See \cref{sec:experiments}.) Somewhat counterintuitively, uniform random choices introduce long-range bias effects, leading our experiments to indicate the need for the larger palette size. 
Similarly, in \Cref{sec:oblivious_lowerbound} we show that randomized greedy fails to even beat greedy for graphs with $\Delta=O(\log n)$ against oblivious adversaries, for which we show that $\Delta=O(\sqrt{\log n})$ is sufficient to beat greedy.

Inspired by these observations and by~\cite{blikstad2024online}, we modify the approach to use \emph{history-dependent} color probabilities, so as to ensure that each edge is assigned color $c$ with approximately the right \emph{marginal} probability $\approx 1/\Delta$. Specifically, the palette $\Calg = \{1,2, \ldots, \Delta\}$ of the algorithm is of size $\Delta$, and for each potential edge $e$ and color $c \in \Calg$, we maintain a variable $P_{ec}$ representing the intended probability of assigning color $c$ to edge $e$. These are initialized to $P_{ec} = (1-o(1))/\Delta$. They are then updated to guarantee incident edges cannot pick the same color, while maintaining the $P_{ec}$ values in expectation, by appropriate scaling, i.e.,  use of Bayes's Law.
The reason for the ``slack'' $o(1)$ will become clear when we discuss the analysis below, where we want to upper bound the probability that $\sum_{c} P_{ec} > 1$. The high-level idea of our algorithm can now be described as follows (see \Cref{alg:edge-coloring} for full details):

\begin{center}
\begin{minipage}{0.95\textwidth}
\begin{mdframed}[hidealllines=true, backgroundcolor=gray!15]
Fix a palette $\Calg = \{1,2,\dots,\Delta\}$.\\[0.2cm]
For each $e \in \binom{V}{2}$ and $c \in \Calg$, initialize $P_{ec} = (1-o(1))/\Delta$.\\[0.2cm]
When an edge $e$ arrives:
\begin{itemize}
    \item Sample a color $c$ with probability $P_{ec}$; if no color is chosen (i.e., with probability $1 - \sum_{c} P_{ec}$), color $e$ using the greedy algorithm on a backup palette $\Cgreedy$.
    \item For each adjacent edge $f$ and color $c\in \Calg$, update:
    \[
    P_{fc} = \begin{cases}
        0 & \text{if } c \text{ was assigned to } e,\\
        P_{fc} / (1 - P_{ec}) & \text{otherwise}.
    \end{cases}
    \]
\end{itemize}
\end{mdframed}
\end{minipage}
\end{center}

However, this simplified high-level description differs in two ways from our formal~\cref{alg:edge-coloring}. First, the above rule is only valid when $\sum_c P_{ec} \leq 1$. If the sum exceeds $1$, we color $e$ using the greedy algorithm on a backup palette $\Cgreedy$. To ensure that the usage of the backup palette is rare (in particular, that the number of edges colored by $\Cgreedy$ incident to each vertex is $o(\Delta)$), we analyze the martingale $Z_e = \sum_c P_{ec}$. This value $Z_e$ is intuitively the probability that $e$ is colored using $\Calg$. Its initial value is $1 - o(1)$, and for our adaptive result, we aim to use concentration bounds on the martingale $Z_e$ (see \Cref{sec:martingales} for background) to argue that $Z_e \not\in [1-o(1), 1]$ only with probability $\exp(-\Theta(\Delta))$, which is less than $1/n^3$ for $\Delta = \omega(\log n)$, allowing for a union bound over all $\binom{n}{2}$ potential edges.

The second subtlety concerns the ability to prove such concentration bounds on $Z_e$, which relies on a small \emph{step size} of the martingale $Z_e$. To enforce a small step size in~\cref{alg:edge-coloring}, we cap updates to $P_{fc}$, and do not scale up this value whenever $P_{fc} > A$ for some small threshold $A$. This prevents large jumps in $Z_e$ but introduces another complication: since some $P_{ec}$ are no longer updated, $Z_e$ develops a \emph{negative drift} and becomes a supermartingale. We call colors $c$ for which $P_{fc}>A$ \emph{bad} for $e$. The challenge then becomes to bound the number of bad colors so as to control the negative drift.

Our analysis proves that for any edge $e$, the number of bad colors is $o(\Delta)$ with high probability, even against adaptive adversaries. To achieve this, we perform a union bound over all possible neighborhoods of $e$, of which there are $n^{\Theta(\Delta)}$. For a fixed potential neighborhood of $e$, $P_{ec}$ behaves similarly to the martingales in~\cite{blikstad2024online}, and we can prove that $c$ is bad for $e$ with probability $\exp(-\Theta(\Delta))$. While this is a small probability, it is not sufficiently small to union bound over $n^{\Theta(\Delta)}$ many potential neighborhoods of $e$. 
To deal with this, we use the intuition that the events of colors becoming bad are only mildly correlated. Indeed, if these events were independent, the probability that $\epsilon \Delta$ colors become bad would be $\exp(-\Theta(\Delta))^{\epsilon \Delta} = \exp(-\Theta(\epsilon \Delta^2)) = o(1/n^{\Theta(\Delta)})$ for an appropriately chosen $\epsilon = o(1)$. While the events are not independent, carefully chosen martingales allow us to give a similar bound, and we prove that the probability that too many colors become bad simultaneously is a small enough $o(1/n^{\Delta})$ term allowing to union bound over all neighborhoods. 

\paragraph{Randomized Algorithm (Oblivious Adversary).}
Finally, we adapt these techniques to handle oblivious adversaries (\Cref{result:oblivious}). The neighborhood of each edge $e$ is now fixed in advance, eliminating the need to union bound over $n^{\Theta(\Delta)}$  many possibilities. However, we now face the issue that the probability $Z_e \not\in [1-o(1), 1]$ is $\exp(-\Theta(\Delta))$, which for $\Delta = \Theta(\sqrt{\log n})$ is $\gg 1/n^2 $, and thus will unavoidably occur for some edges.  
Ideally, we would like to prove that while this can happen, it is very unlikely to happen for more than $\epsilon \Delta$ edges incident to a vertex $v$. Indeed, if these events were independent, this would happen with probability at most $\exp\left({-\Theta(\Delta)} \right)^{\epsilon \Delta}$, which, when $\Delta = \omega(\sqrt{\log n})$ and for an appropriately chosen $\epsilon = o(1)$, is smaller than $1/n^2$ and thus allows for a union bound over all vertices. However, this time the independence intuition turns out to be misleading, because other edges incident to $v$ might affect all the $\varepsilon \Delta$ edges we are considering, thus leading to very large correlations between these edges. 

Our approach therefore needs to be different. First, we observe that the probability of a vertex $v$ becoming \emph{bad}, i.e., having $\varepsilon \Delta$ incident edges $e$ with $Z_e \notin [1-o(1), 1]$, is upper bounded by $\exp(-\Theta(\Delta))$, since this holds even for a single edge $e$ incident to $v$ turning \emph{bad}. Then, we consider an arbitrary set of $\varepsilon \Delta$ vertices. Again, if independence were to hold, this time between the considered vertices, we would have probability at most $\exp\left({-\Theta(\Delta)} \right)^{\epsilon \Delta}$ that \emph{all} of these vertices turn \emph{bad}. This time this independence intuition can be formalized by a careful and fairly complex selection of martingales. The bound on the number of \emph{bad} vertices in the neighborhood of $v$ is important because of the following. To make sure that few edges incident to a vertex are colored using the backup palette $\Calg$, the algorithm is necessarily changed so that when a vertex becomes bad, it will always color arriving incident edges if possible. This change makes edges incident to a vertex $v$ incur drifts in our analysis of the martingales and to bound this drift we need to bound the number of bad vertices (similar to us bounding the number of bad colors in the adaptive analysis).

\paragraph{Paper outline.} We provide formal definitions in~\cref{sec:prelims}. The proof of~\cref{result:adaptive} is given in~\cref{sec:adaptive}; while similar, it is cleaner and slightly simpler than the proof of~\cref{result:oblivious}, which is presented in~\cref{sec:oblivious}.

\section{Preliminaries} \label{sec:prelims}

\paragraph{Problem statement.}
In the online edge-coloring problem, we consider an unknown graph $G=(V,E)$ with $n$ vertices and maximum degree $\Delta$, with both parameters known in advance.\footnote{By \cite{cohen2019tight}, knowledge of $\Delta$ is needed to even obtain better than $1.606\Delta$ coloring.}
At each time $t=1,\dots,|E|$, an edge $e_t$ is revealed.
An online algorithm must immediately and irrevocably decide what color to assign to each edge $e_t$ as it arrives. The subgraphs induced by each color must form a matching, meaning each vertex can have at most one edge of each color. The objective is to mimimize the number of colors, getting as close as possible to the $\Delta+1$ colors guaranteed by Vizing's Theorem \cite{vizing1964estimate}.

A simple folklore algorithm yields an edge coloring using at most $(2\Delta-1)$ colors.

\begin{lemma}[Greedy Edge Coloring.]
\label{lem:greedy}
There is a deterministic online edge coloring algorithm that uses at most $2\Delta-1$ colors without knowledge of $\Delta$.
\end{lemma}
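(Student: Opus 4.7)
The plan is to prove this by the standard greedy argument based on a degree-counting bound. The algorithm I would analyze is the following: maintain a palette indexed by $\{1,2,\dots\}$ of positive integers, and upon the arrival of an edge $e=(u,v)$, assign to $e$ the smallest positive integer that does not appear on any edge incident to $u$ or $v$. Note that this procedure does not use $\Delta$ at any point---it just picks the smallest available color.

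The correctness of this as an edge coloring is immediate: the rule forbids any color already used on an edge sharing an endpoint with $e$, so the resulting color classes form matchings. The remaining task is to bound the number of distinct colors ever used by $2\Delta-1$. For this I would consider the moment an edge $e=(u,v)$ is colored. The only colors forbidden for $e$ are those appearing on previously arrived edges incident to $u$ or to $v$. Since $\deg(u),\deg(v)\le \Delta$ in the final graph, and $e$ itself contributes one to each of their degrees, the number of other edges incident to $u$ or $v$ is at most $(\deg(u)-1)+(\deg(v)-1)\le 2\Delta-2$. Hence at most $2\Delta-2$ colors are forbidden, and therefore the smallest available positive integer is at most $2\Delta-1$.

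Since this holds for every edge, no color larger than $2\Delta-1$ is ever introduced, yielding an edge coloring with at most $2\Delta-1$ colors. The only ``obstacle'' worth mentioning is purely presentational: one must be a bit careful to state that the algorithm is well-defined without a priori knowledge of $\Delta$ (hence the ``smallest available positive integer'' formulation rather than fixing a palette of size $2\Delta-1$ up front), but the degree-counting argument itself ensures that the algorithm never actually exceeds $2\Delta-1$ colors.
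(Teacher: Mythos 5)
Your proposal is correct and is essentially the same argument as the paper's: both use the greedy degree-counting bound that at most $2(\Delta-1)$ colors are forbidden for an arriving edge, so $2\Delta-1$ colors always suffice. The only cosmetic difference is that you use a first-fit rule on $\{1,2,\dots\}$ while the paper grows an initially empty palette and adds a fresh color only when needed; both formulations avoid needing to know $\Delta$ in advance.
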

\begin{proof}
Use a palette $\Cgreedy$, initially empty. When an edge $e=\{u,v\}$ arrives, color it with a color from $\Cgreedy$ if possible. If not, add a fresh color to $\Cgreedy$ and color $\{u,v\}$ using it.
Note that there are at most $\Delta-1$ other edges incident at $u$, and similarly for $v$. So at most $2(\Delta-1)$ colors of $\Cgreedy$ could be unavailable for $r$, meaning that the greedy algorithm will at all times has $|\Cgreedy|\le 2\Delta-1$.
\end{proof}

\paragraph{Online adversaries.}
Deterministic algorithms must work no matter the input and is not allowed to use any randomness. 
If the algorithm is randomized, it is important to specify the \emph{adversary} it works against. An \emph{oblivious} adversary fixes the graph and arrival order in advance.
An \emph{adaptive} adversary is more powerful, and can generate the graph adaptively based on the algorithm's prior random choices.
A classic result asserts that randomization does not help against adaptive adversaries in the online setting: if one can design a randomized algorithm against adaptive adversaries, then there also exists a deterministic algorithm with the same guarantees \cite{ben1994power}.
As our deterministic algorithms in this work follow from the above simple reduction, we will only focus on randomized algorithms, specifying whether we are working in the easier oblivious or adaptive adversary settings. In the adaptive adversary setting, by Yao's minmax principle \cite{yao1977lemma}, we assume without loss of generality that the adversary is deterministic (picking the worst edge given prior choices of the algorithm).

\subsection{Martingales}\label{sec:martingales}
Martingales and their concentration inequalities are key tools in our algorithms' analyses.

\begin{definition}[Supermartingales and Martingales] \label{def:martingales}
    A sequence of random variables $Z_0,Z_1,\dots,Z_m$ is a \emph{supermartingale} with respect to another sequence of random variables $X_1,\dots,X_m$, if
    $Z_{i-1}$ is determined by $X_{1}, \ldots, X_{i-1}$ and
    $$\E[Z_i \mid X_1,\dots,X_{i-1}] \leq Z_{i-1} \qquad \forall i\in [m].\footnote{Strictly speaking, the right-hand-side is also conditional. Specifically, the right-hand side is $[Z_{i-1} \mid X_1,\dots,X_{i-1}]$. As writing out this conditioning explicitly results in cumbersome notation, it is common practice (which we will adopt) to only implicitly use this conditioning when proving that a sequence is a (super)martingale.
    }$$
    If the above inequality is, in fact, always an equality, the sequence is called a \emph{martingale}.
\end{definition}

\begin{lemma}[Azuma's inequality] \label{lemma:azuma}
    Let $Z_0,\dots,Z_m$ be a supermartingale w.r.t.\ the sequence $X_1,\dots,X_m$. Assume the \emph{step size} $A$ is a positive real number such that $|Z_i - Z_{i-1}| \leq A$ for all $i \geq 1$. Then, for any $i \in \{1,\dots,m\}$ and positive real $\lambda$, we have the following tail bound:
    \begin{equation*}
        \Pr[Z_i - Z_0 \geq \lambda] \leq \exp \left( - \frac{\lambda^2}{2iA^2} \right).
    \end{equation*}
    If the sequence is a martingale, then the tail bound in the other direction also holds:
    \begin{equation*}
        \Pr[Z_0 - Z_i \geq \lambda] \leq \exp \left( - \frac{\lambda^2}{2iA^2} \right).
    \end{equation*}
\end{lemma}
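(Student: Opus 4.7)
The plan is to prove Azuma's inequality by the standard exponential moment (Chernoff) method, adapted to the supermartingale setting. First I would introduce the martingale differences $D_i := Z_i - Z_{i-1}$, which by the definition of a supermartingale satisfy $\E[D_i \mid X_1,\dots,X_{i-1}] \leq 0$ and $|D_i| \leq A$ almost surely. Writing $Z_i - Z_0 = \sum_{j=1}^i D_j$, for any parameter $t > 0$ the plan is to bound $\Pr[Z_i - Z_0 \geq \lambda]$ via Markov's inequality applied to $\exp(t(Z_i - Z_0))$, reducing the problem to controlling the conditional moment generating function of each $D_j$.

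The key estimate is a Hoeffding-type lemma: for a random variable $D$ with $\E[D] \leq 0$ and $|D| \leq A$, one has $\E[e^{tD}] \leq \exp(t^2 A^2 / 2)$. I would prove this by convexity — writing $D$ as a convex combination $\frac{A+D}{2A}\cdot A + \frac{A-D}{2A}\cdot(-A)$, so that $e^{tD} \leq \frac{A+D}{2A} e^{tA} + \frac{A-D}{2A} e^{-tA}$, then taking expectations (using $\E[D] \leq 0$ so the coefficient in front of $e^{tA} - e^{-tA}$ is nonpositive) and bounding $\tfrac{1}{2}(e^{tA} + e^{-tA}) = \cosh(tA) \leq e^{t^2A^2/2}$ via a Taylor series comparison.

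With this in hand, the plan is to iterate: conditioning on $X_1,\dots,X_{i-1}$, apply the Hoeffding lemma to $D_i$ to obtain
\[
\E\bigl[e^{t(Z_i - Z_0)} \mid X_1,\dots,X_{i-1}\bigr] \leq e^{t^2 A^2 / 2} \cdot e^{t(Z_{i-1} - Z_0)},
\]
and then use the tower property repeatedly to deduce $\E[e^{t(Z_i - Z_0)}] \leq e^{i t^2 A^2 / 2}$. Combining with Markov gives $\Pr[Z_i - Z_0 \geq \lambda] \leq \exp(i t^2 A^2 / 2 - t\lambda)$, and optimizing by choosing $t = \lambda / (iA^2)$ yields the desired bound $\exp(-\lambda^2/(2iA^2))$.

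For the lower tail in the martingale case, the plan is to note that if $(Z_i)$ is a martingale, then $(-Z_i)$ is also a martingale (hence a supermartingale) with the same step size $A$, so applying the bound already proved to $(-Z_i)$ yields $\Pr[Z_0 - Z_i \geq \lambda] \leq \exp(-\lambda^2/(2iA^2))$. I do not anticipate a serious obstacle here; the only delicate point is the convexity argument for the Hoeffding lemma and being careful that in the supermartingale case we only get the upper tail (since the sign of $\E[D_i]$ matters when bounding the linear term).
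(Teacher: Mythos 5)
Your proposal is correct: it is the standard exponential-moment (Chernoff--Hoeffding) proof of Azuma's inequality, with the conditional Hoeffding lemma handled correctly for the supermartingale case (the sign of $\E[D_i \mid X_1,\dots,X_{i-1}]$ only helps the upper tail, as you note), and the lower tail for martingales obtained by negation. The paper states this lemma without proof as a standard result, so there is nothing to compare against; your argument is the textbook one and fills the gap correctly.
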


\section{Deterministic (via Adaptive) Online Edge Coloring}
\label{sec:adaptive}

In this section we prove that when $\Delta = \omega(\log n)$, one can edge-color a graph online using at most $(1+o(1))\Delta$  colors with a \emph{deterministic} algorithm. More generally, we prove the following.
\begin{theorem}\label{thm:deterministic-precise}
There is a deterministic online algorithm that edge-colors $n$-node graphs with known $n$ and $\Delta$ using
$$
\Delta + O(\Delta^{15/16}\log^{1/16}n) \textrm{ colors}.
$$
\end{theorem}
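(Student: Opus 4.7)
The plan is to design a randomized algorithm against an adaptive adversary achieving $(1+o(1))\Delta$ colors with high probability, and then invoke the derandomization of \cite{ben1994power}. Following the template of the technical overview, I would fix $\Calg = \{1,\dots,\Delta\}$ and a backup palette $\Cgreedy$ used via \Cref{lem:greedy}, and for every potential edge $e \in \binom{V}{2}$ and color $c \in \Calg$ maintain a target probability $P_{ec}$, initialized to $(1-\gamma)/\Delta$ for a slack $\gamma = o(1)$ to be tuned. On arrival of $e$, sample $c$ with probability $P_{ec}$; if no color is drawn (the ``null outcome'' having mass $1-\sum_c P_{ec}$) or, in the rare event that the sum exceeded $1$, color $e$ by $\Cgreedy$. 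For each still-potential edge $f$ sharing an endpoint with $e$ and each $c$, set $P_{fc}\!\leftarrow\! 0$ if $c$ was assigned to $e$, and otherwise rescale $P_{fc}\!\leftarrow\! P_{fc}/(1-P_{ec})$; by Bayes's rule this preserves the marginal $P_{fc}$ in expectation. To keep martingale step sizes small, I would cap the rescaling: no upward update is applied once $P_{fc}$ exceeds a threshold $A$. Call $c$ \emph{bad} for $f$ at a given moment if $P_{fc} > A$.

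The first analytic goal is to show that for each potential edge $e$ and at every time, $Z_e := \sum_{c \in \Calg} P_{ec}$ lies in $[1-2\gamma,\,1]$ except with probability $\ll 1/n^3$; a union bound over $\binom{n}{2}$ potential edges then shows that every arriving edge falls through to $\Cgreedy$ with probability at most $2\gamma$, so the number of $\Cgreedy$-colored edges at any vertex is $o(\Delta)$ w.h.p.\ by a further martingale/Chernoff argument over its $\le \Delta$ incident edges. The upper bound $Z_e \le 1$ is maintained by construction. For the lower bound, $Z_e$ is a supermartingale whose only negative drift comes from capped updates on \emph{bad} colors: each such capped update contributes at most $O(A)$ to the drift per incident arrival. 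Thus, provided the total number of colors ever bad for $e$ stays $\le \beta \Delta$ for some $\beta = o(1)$, the total drift is $O(\beta A \Delta^2)$, and Azuma's inequality (\Cref{lemma:azuma}) applied with step size $O(A)$ over $O(\Delta^2)$ updates yields a tail of $\exp(-\Theta(\Delta))$, which beats $1/n^3$ once $\Delta = \omega(\log n)$.

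The crux, and the step I expect to be the main obstacle, is therefore proving that with high probability no edge $e$ ever accumulates more than $\beta \Delta$ bad colors. Against an adaptive adversary this is delicate: the neighborhood of $e$ is not fixed in advance, so a direct union bound must range over the $n^{\Theta(\Delta)}$ possible neighborhoods. For a fixed such neighborhood and a fixed color $c$, $P_{ec}$ evolves as a bounded-step martingale in the spirit of \cite{blikstad2024online}, and Azuma gives that $c$ becomes bad with probability $\exp(-\Theta(\Delta))$; this alone cannot absorb the $n^{\Theta(\Delta)} = \exp(\Theta(\Delta \log n))$ union bound. The plan is to replace this per-color bound by a joint martingale that tracks the collective growth of a putative set $S$ of $\beta\Delta$ simultaneously-bad colors for $e$: were the per-color events independent, the probability that $|S| \ge \beta\Delta$ would be $\exp(-\Theta(\beta\Delta^2))$, and this passes the neighborhood union bound as soon as $\beta \Delta \gg \log n$. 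The technical work is to design martingales (summing, e.g., a carefully chosen convex potential of the $P_{fc}$'s over $c \in S$) that capture enough of this quasi-independence to deliver an $\exp(-\Theta(\beta \Delta^2))$-type tail without requiring literal independence; this is the one place where adaptivity of the adversary genuinely bites.

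Granted these two high-probability events, the total palette used is $|\Calg| + |\Cgreedy| \le \Delta + 2 \cdot o(\Delta) = (1+o(1))\Delta$. Quantitatively, balancing the Azuma exponents against the $\log n$ union-bound costs requires optimizing $A$, $\gamma$, and $\beta$ against $\Delta$ and $\log n$; choosing them so that the three error terms ($Z_e$ deviation, bad-color overflow, and $\Cgreedy$-load concentration) match, roughly $\gamma \asymp \beta \asymp (\log n / \Delta)^{1/16}$, yields exactly the stated bound $\Delta + O(\Delta^{15/16}\log^{1/16} n)$. Derandomization via \cite{ben1994power}, valid because the algorithm is analyzed against an adaptive adversary, then produces the deterministic algorithm claimed in \Cref{thm:deterministic-precise}.
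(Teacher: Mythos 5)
Your proposal follows essentially the same route as the paper: a randomized algorithm against an adaptive adversary (history-dependent probabilities $P_{ec}$ initialized to $(1-\eps)/\Delta$, Bayes rescaling, a cap $A$ creating \emph{bad} colors, a backup greedy palette), derandomized via \cite{ben1994power}, with the analysis resting on a supermartingale bound for $Z_e=\sum_c P_{ec}$, a drift controlled by the number of bad colors, and a union bound over the $n^{\Theta(\Delta)}$ potential neighborhoods absorbed by an $\exp(-\Theta(\beta\Delta^2))$ joint tail; even your parameter balancing $\gamma\asymp\beta\asymp(\log n/\Delta)^{1/16}$ matches the paper's choices.

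The one substantive gap is that you correctly isolate the crux --- joint concentration for a set $S$ of $\beta\Delta$ simultaneously bad colors --- but leave its construction as ``a carefully chosen convex potential of the $P_{fc}$'s over $c\in S$,'' and the naive version of this fails: a single arriving edge $e_{t}$ rescales $P_{fc}$ for \emph{every} $c\in S$, so a direct sum $\sum_{c\in S}P_{fc}$ could in principle have step size growing with $|S|$, killing the Azuma exponent. The paper's resolution (Definitions~\ref{def:introducing_Q} and~\ref{def:introducing_Q_extended_extended}, Lemmas~\ref{lemma:relating_S_to_Q}, \ref{corollary:characterization_of_bad_colors}, \ref{lemma:Q_is_supermartingale}, \ref{lemma:bounds_on_Q}) has two ingredients you would need to supply. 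First, it factors the scaling of $P_{ec}$ through the two endpoints: writing $R_{fc}=P_{fc}\prod_{g\in U_w,\,t_g\le t}(1-P_{gc})$ and $Q_{U_wc}=\sum_{f\in U_w}R^{(t_f-1)}_{fc}$, one gets the exact identity $S^{(t_e-1)}_{ec}=(1-Q_{U_uc})(1-Q_{U_vc})$, so ``$c$ is bad for $e$'' reduces to ``$Q_{U_wc}>1-\eps/2$ for some endpoint $w$,'' a per-vertex rather than per-edge event. Second, the summed quantity $Q_{U_wC}=\sum_{c\in C}Q_{U_wc}$ is a supermartingale whose step size is $O(A)$ \emph{independently of} $|C|$, precisely because any arriving edge $g$ touches at most two edges of $U_w$ and contributes total mass $\sum_c P_{gc}\le 1$ across all colors; meanwhile the deviation needed for all of $C$ to be bad scales linearly in $|C|$, so Azuma over the $O(\Delta^2)$ two-hop arrivals gives $\exp(-\Theta(\eps^{16}\Delta^2))$, which beats $n^{\Theta(\Delta)}$. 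Without this (or an equivalent) decoupling device, the quasi-independence you invoke remains an unproven assertion, and it is exactly the step where the proof could collapse.
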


We prove this
by designing a \emph{randomized} algorithm that works against \emph{adaptive} adversaries, and combining this with the classic reduction of \cite{ben1994power}.

\begin{theorem}
\label{thm:adaptive-theorem}
There is a randomized online algorithm that edge-colors $n$-node graphs with known $n$ and $\Delta$, possibly generated adaptively, using w.h.p.
$$
\Delta + O(\Delta^{15/16}\log^{1/16}n) \textrm{ colors}.
$$
\end{theorem}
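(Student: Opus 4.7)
I would prove \Cref{thm:adaptive-theorem} by formally implementing the history-dependent sampling scheme previewed in the technical overview and then controlling its two failure modes. Fix a slack $\eps = \Theta(\Delta^{-1/16}\log^{1/16}n)$, a cap $A$ (to be chosen of order $\Delta^{-\alpha}$ for a suitable $\alpha$), and the palette $\Calg=\{1,\dots,\Delta\}$. For every potential edge $e\in\binom{V}{2}$ and every $c\in\Calg$, initialize $P_{ec}=(1-\eps)/\Delta$. When an edge $e$ arrives, if $\sum_c P_{ec}\le 1$ sample $c$ with probability $P_{ec}$ (if nothing is sampled, or if the sum already exceeds $1$, color $e$ from the backup palette $\Cgreedy$ via \Cref{lem:greedy}); then update each incident pair $(f,c)$ by zeroing $P_{f,c'}$ on the assigned color $c'$ and Bayes-scaling the rest by $1/(1-P_{ec'})$, except we skip the scale-up on any pair with $P_{fc}>A$. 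The cap is what keeps $Z_e := \sum_c P_{ec}$ a supermartingale with step size $O(A)$.

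\paragraph{Key steps.} The first step is to verify that $Z_e$ is indeed a supermartingale whose per-step change is $O(A)$ in magnitude, with negative drift contributed only by \emph{bad} colors---those $c$ for which some incident edge $f$ already has $P_{fc}>A$. The second step, the heart of the argument, is to show that for every $e$, with probability $1-o(n^{-\Theta(\Delta)})$ no more than $\eps\Delta$ colors ever become bad. A single color $c$ becomes bad only if an Azuma-style concentration of the multiplicative walk $P_{fc}$ fails, which occurs with probability $\exp(-\Theta(\Delta))$ for a fixed potential neighborhood of $e$. Against an adaptive adversary we must union-bound over $n^{\Theta(\Delta)}$ possible neighborhoods, so this single-color tail alone is too weak; instead, I would construct a joint supermartingale tracking the number of bad colors and argue that its increments are decoupled enough to yield the product-form tail $\exp(-\Theta(\eps\Delta^2))$, which absorbs the $n^{\Theta(\Delta)}$ neighborhood count. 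The third step combines these: conditional on few bad colors, the accumulated negative drift on $Z_e$ across its at most $2\Delta$ updates is $O(\eps)$, and Azuma gives $\Pr[Z_e\notin[1-2\eps,1]]\le \exp(-\Theta(\eps^2/(A^2\Delta)))=\exp(-\Theta(\Delta))=o(n^{-3})$ for the chosen $A$, so we can union-bound over all $\binom{n}{2}$ potential edges.

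\paragraph{Finishing up and main obstacle.} Once every realized edge has $Z_e\in[1-2\eps,1]$ at its arrival, it is colored from $\Calg$ with probability at least $1-2\eps$. A per-vertex Azuma argument then bounds the maximum degree in the $\Cgreedy$-colored subgraph by $O(\eps\Delta)$ w.h.p., so \Cref{lem:greedy} on this subgraph consumes $O(\eps\Delta)$ backup colors, giving $\Delta+O(\eps\Delta)=\Delta+O(\Delta^{15/16}\log^{1/16}n)$ colors overall; applying \cite{ben1994power} then yields \Cref{thm:deterministic-precise}. The main obstacle is clearly Step 2: obtaining the joint tail $\exp(-\Theta(\eps\Delta^2))$ in the face of genuine correlation, since the probabilities $\{P_{fc}\}_c$ for incident $f$ share randomness through every Bayes update. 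The delicate piece is to design martingales whose increments across distinct $c$ behave as nearly independent, so that Azuma effectively multiplies rather than adds exponents; balancing the cap $A$ against $\eps$ and the $\Theta(\Delta)$ martingale length is exactly what pins down the $15/16$ exponent in the final bound.
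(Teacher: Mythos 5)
Your plan follows essentially the same route as the paper: the same capped, Bayes-rescaled sampling scheme, the supermartingale $Z_e$ with drift controlled by bad colors, the joint $\exp(-\Theta(\mathrm{poly}(\eps)\Delta^2))$ tail on the number of simultaneously bad colors to absorb the $n^{\Theta(\Delta)}$ union bound over adaptive neighborhoods, and a per-vertex Chernoff/coupling step to bound the backup palette. The obstacle you flag is exactly where the paper's technical work lies, and it is resolved there by summing two-hop scaling-factor supermartingales $Q_{U_wC}$ over fixed color subsets $C$ and union-bounding over all such subsets together with the neighborhoods.
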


Naturally, this section is dedicated to proving \Cref{thm:adaptive-theorem}.

\subsection{The Algorithm}\label{sec:algorithm}
The algorithm, whose pseudo-code is given in \cref{alg:edge-coloring}, is quite simple.
It intuitively attempts to assign each edge a color chosen uniformly at random from the palette $\Calg=[\Delta]$, or failing that, uses the greedy algorithm with a separate backup palette $\Cgreedy$.

How is this achieved concretely? Initially, we assign each possible future edge $e$ and color $c\in \Calg$ a uniform probability $P_{ec}\gets \frac{1-\eps}{\Delta}$.\footnote{See pseudocode or \Cref{def:parameters} for exact value of $\eps$ and $A$ below.}
These probabilities change over time, with their values at time $t$ denoted by $P^{(t)}_{ec}$, used and updated as follows. 
When edge $e_t$ arrives, we try to sample a color $K_t$ from $\Calg$, with each color $c\in \Calg$ picked with probability $P_{e_{t}c}^{(t-1)}$. 
If this is impossible, due to $\sum_{c\in \Calg} P^{(t-1)}_{e_tc}>1$, we \emph{mark} $e_t$ and color it greedily using colors from $\Cgreedy$.
Otherwise, we sample $K_t$ from $\Calg\cup \{\perp\}$, with each color $c$ sampled with probability $P_{e_tc}^{(t-1)}$, and $\perp$ sampled with the remaining probability, $1-\sum_{c\in \Calg} P_{e_tc}^{(t-1)}$. 
If we sample a color $K_t\in \Calg$, then we assign it to $e_t$, otherwise, we \emph{mark} $e_t$ and greedily color it.
We then update all probabilities of possible future edges $f \in F_t$ neighboring $e_t$. First, we zero out the probability $P_{fK_t}^{(t)}$, to prevent $f$ from using the color $K_t$ later. Then, in an attempt to (almost) preserve the $P$-values in expectation, we set $P^{(t)}_{fc}\gets {P^{(t-1)}_{fc}}/{(1-P_{e_tc}^{(t-1)})}$ for all other colors $c\in \Calg\setminus\{K_t\}$, but only if $P_{fc}^{(t-1)}$ is not too large, specifically $P^{(t-1)}_{fc}\leq A$ for a threshold $A$. For all other colors and possible future edges (where we denote the possible future edges at time $t$ by $F_t$), we have $P^{(t)}_{ec} \gets P^{(t-1)}_{ec}$.

\begin{algorithm}[ht!]
	\caption{Edge Coloring Algorithm Against Adaptive Adversaries}
	\label{alg:edge-coloring}
	\begin{algorithmic}[1]
        \Statex \underline{\smash{\textbf{Input:}}} Vertex set $V$ and maximum degree $\Delta\in \mathbb{Z}_{\geq 0}$ of the graph to arrive
        \Statex 
                      \textbf{\underline{Initialization:}} 
        $\C_{\mathrm{alg}} \leftarrow [\Delta]$, 
        $F_0 \leftarrow \binom{V}{2}$,
        $\eps \gets c_{\eps}\cdot(\frac{\ln n}{\Delta})^{1/16}$,
        $A \gets \frac{c_{A}}{\eps^2\Delta}$, where $c_\varepsilon := 10$ and $c_A := 4$.
        \Statex\qquad\qquad\qquad\;\;\textbf{for each} $e \in F_0$ and $c \in \Calg$:\; Set $P^{(0)}_{ec} \leftarrow \frac{1 - \varepsilon}{\Delta}$. 
        
        \Statex 
		\For{\textbf{each} online edge $e_t$ on arrival} 
            \State $F_t \leftarrow F_{t-1} \setminus \{e_t\}$.
            \State \textbf{for each} $f\in \binom{V}{2}$ and $c\in \Calg$: \; Set $P^{(t)}_{fc}\gets P^{(t-1)}_{fc}$. \Comment{May be overridden below if $f$ neighbors $e_t$}
            \If{$\sum_{c\in \Calg} P^{(t - 1)}_{e_tc} > 1$}
                Mark $e_t$.
                \label{line:mark_z_ge_1}
            \Else
                \State Sample $K_t$ from $\Calg \cup \{\perp\}$ with probabilities $\left(P^{(t-1)}_{e_t1},\dots,P^{(t-1)}_{e_t\Delta}, 1 - \sum_{c\in \Calg}P^{(t-1)}_{e_tc}\right)$.
                \label{line:sample}
                \If{$K_t\in \Calg$} Assign color $K_t$ to $e_t$. \label{line:color}
                \Else\, Mark $e_t$. \label{line:mark_bot}
                \EndIf
                \For{\textbf{each} $f \in F_t$ such that $f \cap e_t \neq \emptyset$}
                    \State $P^{(t)}_{fK_t} \leftarrow 0$ \textbf{if} $K_t \in \Calg$
                        \Comment{Prevent $f$ from being colored $K_t$}
                    \label{line:zero}
                        \State $P^{(t)}_{fc} \leftarrow\frac{P^{(t-1)}_{fc}}{1 - P^{(t-1)}_{e_tc}}$
                        \textbf{for each}
                    $c\in \Calg\setminus\{ K_t\}$ \textbf{where} {\color{blue} $P^{(t-1)}_{fc} \leq A$ ($\star$)}
                    \label{line:scaleup}
                    \label{line:star}
                \EndFor
            \EndIf

            \If{$e_t$ marked}
            \State Color $e_t$ using the greedy algorithm (\cref{lem:greedy}) with the separate palette $\Cgreedy$.
                \label{line:greedy}
            \EndIf
		\EndFor
	\end{algorithmic}
\end{algorithm}

It is easy to see that the algorithm outputs a valid edge coloring.
\begin{lemma} \label{lemma:valid_coloring}
    \Cref{alg:edge-coloring} produces a valid edge coloring on any input graph, even under adaptive edge arrivals.
\end{lemma}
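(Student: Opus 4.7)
The plan is to decouple the two sources of color: colors drawn from $\Calg$ via the sampling in line~7, and colors drawn from $\Cgreedy$ via the backup in line~13. Since $\Calg$ and $\Cgreedy$ are maintained as disjoint palettes, a $\Calg$-colored edge and a $\Cgreedy$-colored edge can never conflict even if they are adjacent. So the proof splits into two independent claims: (i) edges colored from $\Cgreedy$ form a valid (partial) edge coloring, and (ii) edges colored from $\Calg$ form a valid (partial) edge coloring.

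For (i), note that the marked edges are colored by invoking the folklore greedy algorithm of \Cref{lem:greedy} on a fresh palette $\Cgreedy$. That algorithm, by construction, never reuses a color already used on a neighbor, so validity on $\Cgreedy$ is immediate from \Cref{lem:greedy}.

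For (ii), the key invariant I would prove, by induction on $t$, is the following: for every edge $f\in F_t$ and every color $c\in\Calg$, if some previously arrived neighbor $e_s$ (with $s\le t$) was assigned color $c$ in line~\ref{line:color}, then $P^{(t)}_{fc}=0$. At time $s$, when $e_s$ is colored $K_s=c$, line~\ref{line:zero} explicitly sets $P^{(s)}_{fc}\leftarrow 0$ for every $f\in F_s$ adjacent to $e_s$. For $t>s$, the only ways $P^{(\cdot)}_{fc}$ can change are: being re-zeroed at line~\ref{line:zero} (keeps it $0$); being copied forward at line~3 (keeps it $0$); or being rescaled at line~\ref{line:scaleup} to $P^{(t-1)}_{fc}/(1-P^{(t-1)}_{e_tc})$, which also maps $0$ to $0$. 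Hence the invariant is preserved.

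Given this invariant, validity of part (ii) is immediate: when $e_t$ arrives and we reach line~\ref{line:sample}, we only enter the \textbf{else} branch (so $\sum_c P^{(t-1)}_{e_tc}\le 1$ and the distribution is well-defined), and the sampled color $K_t\in\Calg$ must satisfy $P^{(t-1)}_{e_tK_t}>0$. But by the invariant applied to $e_t\in F_{t-1}$, any color already used on an adjacent previously colored edge has $P^{(t-1)}_{e_tc}=0$, so $K_t$ differs from every such color. This holds even under an adaptive adversary, because the invariant is maintained deterministically as a function of the realized trajectory, independent of how the adversary generates $e_t$. I expect no real obstacle here; the only point that requires care is verifying that the ``cap'' at line~\ref{line:star} (the $(\star)$ condition) does not interfere with the invariant, which it does not, since the only relevant transitions from $0$ are either skipped (the value stays $0$) or applied (and $0$ is a fixed point of the rescaling).
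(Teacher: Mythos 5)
Your proposal is correct and follows essentially the same route as the paper's proof: marked edges are handled by the correctness of greedy on the disjoint palette $\Cgreedy$, and for $\Calg$-colored edges the key point is that once a color is assigned to a neighbor, its $P$-value is zeroed and never increases, so it can never be sampled again. Your only addition is to spell out the induction showing that $0$ is preserved under every update rule (including the $(\star)$-capped rescaling), which the paper states in one clause ("and never increase this $P$ value") without elaboration.
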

\begin{proof}
    Each edge that is marked is colored in \Cref{line:greedy} and assigned a different color than its neighboring edges, by correctness of the greedy algorithm (\cref{lem:greedy}) and disjointness of $\Calg$ and $\Cgreedy$. In contrast, whenever an edge $e_t$ is not marked and is thus colored using a color $K_t\in \Calg$ in \Cref{line:color}, we set the probabilities $P_{fK_t}^{(t)}$ of all possible future neighboring edges $f$ to zero (and never increase this $P$ value), meaning these edges cannot be assigned this color later, as this color will not be sampled for $f$ in \Cref{line:sample}. So, no two neighboring edges share a common color in $\Calg$, and \Cref{alg:edge-coloring} produces a valid edge~coloring.
\end{proof}

\Cref{alg:edge-coloring} uses colors from $\Calg$ and the ``emergency'' palette $\Cgreedy$, leading to a total of $|\Calg|+|\Cgreedy| = \Delta + |\Cgreedy|$ colors.
By \cref{lem:greedy}, we have that $|\Cgreedy|\le 2\Delta'$, where $\Delta'$ is the maximum degree in the graph of \emph{marked} edges, we conclude that \Cref{alg:edge-coloring} uses at most $3\Delta$ colors. However, this bound turns out to be very loose (if $\Delta = \omega(\log n)$). In this section, we will prove that, with high probability, each vertex is incident to at most $O(\eps \Delta)$ = $o(\Delta)$ \emph{marked} edges. Thus, $2\Delta' = o(\Delta)$, and the algorithm in fact uses (with high probability) less than $\Delta+O(\Delta')=\Delta(1+o(1))$ colors. 

Before proceeding with the analysis overview, we make brief observations about \cref{alg:edge-coloring} and its parameters which will prove useful in our analysis.

\begin{definition}[Parameters] \label{def:parameters}
    Let $\varepsilon := c_\varepsilon \cdot \left( \frac{\ln n}{\Delta} \right)^{1/16}$,\footnote{All our logarithms in this paper are natural logarithms (base $e$).}  and $A := \frac{c_A}{\varepsilon^2 \Delta}$, for constants $c_\varepsilon := 10$ and $c_A := 4$.
\end{definition}
Our target is to show that the algorithm uses $(1+O(\eps)) \Delta = \Delta + O(\Delta^{15/16}\log^{1/16} n)$ colors.
    When $\Delta < (10c_{\eps})^{16} \ln n$, this is trivial, as the algorithm never uses more than $3\Delta$ colors, which is at most $\Delta + O(\Delta^{15/16}\log^{1/16} n)$ in this case. In the rest of this section we thus assume, without loss of generality, that $\Delta \ge (10c_{\eps})^{16} \ln n$, implying that $\eps$ and $A$ are both small.

\begin{fact}
\label{fact:delta_large_enough}
    When $\Delta \ge (10c_{\eps})^{16} \ln n$, then $\eps \le \frac{1}{10}$ and $A \le \frac{1}{10}$.
\end{fact}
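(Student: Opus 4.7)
The plan is to verify both inequalities by direct substitution of the hypothesis $\Delta \ge (10 c_\varepsilon)^{16} \ln n$ into the definitions of $\varepsilon$ and $A$ from \Cref{def:parameters}. No new ideas are required; this is purely an arithmetic check, and the only ``obstacle'' is keeping track of the exponents $16$, $8$, and $7/8$ carefully.

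First I would bound $\varepsilon$. By definition, $\varepsilon = c_\varepsilon \cdot (\ln n / \Delta)^{1/16}$, so the assumption $\Delta \ge (10 c_\varepsilon)^{16} \ln n$ gives
\[
\varepsilon \;\le\; c_\varepsilon \cdot \left( \frac{\ln n}{(10 c_\varepsilon)^{16} \ln n} \right)^{1/16} \;=\; \frac{c_\varepsilon}{10 c_\varepsilon} \;=\; \frac{1}{10},
\]
which handles the first inequality.

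Next I would bound $A = c_A / (\varepsilon^2 \Delta)$, for which it suffices to show $\varepsilon^2 \Delta \ge 10 c_A = 40$. Substituting in the definition of $\varepsilon$,
\[
\varepsilon^2 \Delta \;=\; c_\varepsilon^2 \cdot (\ln n)^{1/8} \cdot \Delta^{7/8}.
\]
Applying the hypothesis $\Delta \ge (10 c_\varepsilon)^{16} \ln n$ to the factor $\Delta^{7/8}$ yields
\[
\varepsilon^2 \Delta \;\ge\; c_\varepsilon^2 \cdot (\ln n)^{1/8} \cdot (10 c_\varepsilon)^{14} \cdot (\ln n)^{7/8} \;=\; c_\varepsilon^2 (10 c_\varepsilon)^{14} \cdot \ln n,
\]
which (assuming $n$ is at least a small absolute constant so that $\ln n \ge 1$, and plugging in $c_\varepsilon = 10$) is vastly larger than $40$. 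Hence $A \le 1/10$, completing the check. The only place I would be careful is the convention that $n$ is large enough for $\ln n \ge 1$; for very small $n$ the statement becomes vacuous since the hypothesis $\Delta \ge (10 c_\varepsilon)^{16} \ln n$ is already implied by the running assumption that $\Delta$ is at least a large constant, and the fact can be handled by inspection.
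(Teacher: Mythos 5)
Your verification is correct, and the paper in fact states this as a \texttt{Fact} without any proof, treating it as a routine arithmetic consequence of \Cref{def:parameters}; your direct-substitution argument is exactly the intended (and essentially only) route. The small-$n$ caveat you flag is harmless: for $n\le 2$ the hypothesis $\Delta \ge (10c_\varepsilon)^{16}\ln n$ cannot hold for a simple graph anyway (as $\Delta\le n-1$), and for $n\ge 3$ one has $\ln n\ge 1$, so your bound $\varepsilon^2\Delta \ge c_\varepsilon^2(10c_\varepsilon)^{14}\ln n \gg 10c_A$ goes through.
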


Next, we note that throughout the algorithm's execution, the $P$-values are always small.
\begin{fact}
\label{lemma:invariant_ub_on_P}
    For any time step $t$, potential future edge $f \in F_t$ and color $c \in \Calg$, it holds that $P^{(t)}_{fc} \leq 2A$.
\end{fact}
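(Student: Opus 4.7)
The plan is to prove this by induction on the time step $t$, tracking how the quantity $P^{(t)}_{fc}$ can change from one step to the next. The key observation is that the algorithm imposes exactly three possible transitions: $P^{(t)}_{fc}$ either stays the same (if $f$ is not adjacent to $e_t$, if $e_t$ got marked in \Cref{line:mark_z_ge_1}, or if the condition $(\star)$ on \Cref{line:star} fails), is zeroed out (if $c = K_t \in \Calg$), or is scaled up by the factor $1/(1 - P^{(t-1)}_{e_t c})$. Only the last case can possibly violate the invariant, so the entire induction boils down to controlling this scale-up factor.

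For the base case $t=0$, note $P^{(0)}_{fc} = (1-\eps)/\Delta \le 1/\Delta$, and since $2A = 8/(\eps^2\Delta) \ge 1/\Delta$ (as $\eps \le 1$), the invariant trivially holds. For the inductive step, I would assume $P^{(t-1)}_{gc'} \le 2A$ for all potential future edges $g$ and colors $c'$. In the scale-up case, the check $(\star)$ guarantees $P^{(t-1)}_{fc} \le A$, and the inductive hypothesis applied to $e_t$ itself (viewed as an edge in $F_{t-1}$) gives $P^{(t-1)}_{e_t c} \le 2A$. Combined with $A \le 1/10$ from \Cref{fact:delta_large_enough}, this yields $1 - P^{(t-1)}_{e_tc} \ge 1 - 2A \ge 4/5$, so
\[
P^{(t)}_{fc} \;=\; \frac{P^{(t-1)}_{fc}}{1 - P^{(t-1)}_{e_tc}} \;\le\; \frac{A}{4/5} \;=\; \frac{5A}{4} \;\le\; 2A,
\]
closing the induction.

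There is essentially no obstacle here: the $2A$ cap is built into the algorithm precisely by the $(\star)$ guard, and the only thing one must verify is that the inductive hypothesis applied to $P^{(t-1)}_{e_tc}$ suffices to keep $1 - P^{(t-1)}_{e_tc}$ bounded away from zero. The only subtle point worth flagging is that the inductive hypothesis must be invoked on $e_t \in F_{t-1}$ (an edge that has not yet arrived from the perspective of the previous round), which is legitimate since at time $t-1$ all potential edges, including $e_t$, lie in $F_{t-1}$.
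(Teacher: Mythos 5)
Your proof is correct and follows essentially the same route as the paper's: induction on $t$, with the base case checked against $2A$ and the inductive step reduced to the scale-up case, where the $(\star)$ guard gives $P^{(t-1)}_{fc}\le A$ and the inductive hypothesis applied to $e_t$ bounds the denominator $1-P^{(t-1)}_{e_tc}\ge 1-2A$. The only (immaterial) difference is the constant bookkeeping: the paper bounds $A/(1-2A)\le 2A$ using $A\le 1/4$, while you use $A\le 1/10$ to get $5A/4\le 2A$.
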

\begin{proof}
    By induction on $t\geq 0$. The base case holds trivially for $t=0$, for which $P^{(0)}_{fc} = \frac{1-\eps}{\Delta} \leq \frac{2c_A}{\eps^2 \Delta} =  2A$, using that $\eps\leq 1\leq c_A$ by \Cref{fact:delta_large_enough}. For the inductive step, for edge-color pairs $(f,c)$ for which $P^{(t)}_{fc}=P^{(t-1)}_{fc}$, this follows by the inductive hypothesis. For other edge-color pairs $(f,c)$ we must have that $f$ neighbors $e_t$ and so either we set $P^{(t)}_{fc}=0\leq 2A$ in \Cref{line:zero}, or we set this value in \Cref{line:scaleup}, where we require $P^{(t-1)}_{fc}\leq A$. By the inductive hypothesis and that $A\leq \frac{1}{4}$ by \Cref{fact:delta_large_enough}, this case too yields the desired bound:
    \begin{align*}
    P^{(t)}_{fc} & = \frac{P^{(t-1)}_{fc}}{1 - P^{(t-1)}_{e_tc}} \leq \frac{A}{1 - 2A} \leq 2A. \qedhere 
    \end{align*}
\end{proof}

\subsection{Analysis} \label{sec:analysis}

\paragraph{Analysis Overview.} The probability of an arriving edge $e_t$ being successfully colored using the first palette $\Calg$ in \cref{alg:edge-coloring} is determined by
$$Z^{(t-1)}_{e_t}:=\sum_c P^{(t-1)}_{e_tc}.$$
Indeed, if $Z^{(t-1)}_{e_t} \leq 1$, then we pick some color $K_t \in \Calg$ to color $e_t$ with probability precisely $Z^{(t-1)}_{e_t}$. Therefore, if we prove that, with high probability, for all time steps $t$ we have both $Z^{(t-1)}_{e_t} \leq 1$ and $Z^{(t-1)}_{e_t} > 1 - O(\eps)$, then (intuitively, and formalized via a coupling argument in \Cref{lemma:decrease_of_deg_at_v} and Chernoff bounds) this will result w.h.p.~in all but $O(\eps\Delta)$ many edges $e_t$ of each vertex being successfully colored without needing to resort to $\Cgreedy$. 
Thus, w.h.p.~the ``emergency'' palette $\Cgreedy$ remains reasonably small, and so \cref{alg:edge-coloring} uses $(1 + O(\eps)) \Delta = (1+o(1))\Delta$ colors with high probability.

To argue that each $Z_{e_t}^{(t-1)}$ is likely in the range $[1-O(\eps),1]$, we study the sequence $Z^{(0)}_{f}, Z^{(1)}_{f},\ldots$ for an arbitrary edge $f \in F_0$. We prove that each such sequence is a supermartingale of bounded step size with respect to the samples $K_1,K_2,\dots$, even against an adaptive adversary.\footnote{Recall that the adversary can be assumed to be deterministic without loss of generality, and so the only randomness in the process is given by the samples $K_1,K_2,\dots$.} This follows from linearity and the fact that each edge-color probability sequence, $P^{(0)}_{fc},P^{(1)}_{fc},\dots$, is itself a supermartingale. Moreover, had we not imposed the $\starcond$-condition $P^{(t-1)}_{fc}\leq A$ for increasing $P^{(t)}_{fc}$ in \Cref{line:scaleup}, the sequence would in fact even be a martingale, though it would not have bounded step size. 

The following lemma proves that $Z^{(0)}_{f}, Z^{(1)}_{f},\ldots$ is indeed a supermartingale of small step size. While it is implied by  \Cref{lemma:Zlbhelperlemma1}, we provide a proof of this lemma and use it as a simple warm-up and template for subsequent analyses of (super)martingales we use.

\begin{lemma} \label{lemma:Z_is_supermartingale}
    Let $f \in F_0$. The sequence of random variables $Z^{(0)}_{f}, Z^{(1)}_{f}, \dots$ is a supermartingale with respect to the sequence of random variables $K_1, K_2, \dots$. Furthermore, the step size is bounded by $6A$.
\end{lemma}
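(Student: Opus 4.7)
The plan is to reduce both claims to a per-color analysis by using linearity: since $Z^{(t)}_f = \sum_{c\in \Calg} P^{(t)}_{fc}$, the drift $\E[Z^{(t)}_f\mid K_1,\dots,K_{t-1}] - Z^{(t-1)}_f$ decomposes as a sum over colors, and $|Z^{(t)}_f - Z^{(t-1)}_f|$ can be split into a single negative contribution plus nonnegative contributions from the scale-ups.

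First I would dispose of the trivial cases: if $f$ does not share a vertex with $e_t$, or if $\sum_c P^{(t-1)}_{e_tc} > 1$ so $e_t$ is marked in \Cref{line:mark_z_ge_1} without updating any neighbor's $P$-values, then $P^{(t)}_{fc} = P^{(t-1)}_{fc}$ for every $c$, and both the supermartingale and step-size conditions hold trivially. So the interesting case is when $f$ neighbors $e_t$ and $\sum_c P^{(t-1)}_{e_tc}\leq 1$, so that $K_t$ is sampled as in \Cref{line:sample}, taking value $c\in \Calg$ with probability $P^{(t-1)}_{e_tc}$ and $\perp$ with probability $1-\sum_{c'}P^{(t-1)}_{e_tc'}$.

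For the supermartingale property, fix a color $c\in \Calg$ and split on whether the $\starcond$-condition holds. If $P^{(t-1)}_{fc}\leq A$, then $P^{(t)}_{fc}=0$ with probability $P^{(t-1)}_{e_tc}$ (when $K_t=c$) and $P^{(t)}_{fc}=P^{(t-1)}_{fc}/(1-P^{(t-1)}_{e_tc})$ otherwise, so a direct computation gives $\E[P^{(t)}_{fc}\mid K_1,\dots,K_{t-1}] = P^{(t-1)}_{fc}$, i.e., the per-color sequence is a martingale. If instead $P^{(t-1)}_{fc}>A$, then $P^{(t)}_{fc}=0$ with probability $P^{(t-1)}_{e_tc}$ and $P^{(t)}_{fc}=P^{(t-1)}_{fc}$ otherwise, yielding $\E[P^{(t)}_{fc}\mid\cdots] = (1-P^{(t-1)}_{e_tc})P^{(t-1)}_{fc}\leq P^{(t-1)}_{fc}$. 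Summing over $c$ gives the supermartingale inequality for $Z^{(t)}_f$.

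For the step-size bound, I would write
\[
Z^{(t)}_f - Z^{(t-1)}_f \;=\; \underbrace{\mathbbm{1}[K_t\in \Calg]\cdot(-P^{(t-1)}_{fK_t})}_{\text{negative jump}} \;+\; \sum_{\substack{c\in \Calg,\,c\neq K_t \\ P^{(t-1)}_{fc}\leq A}} P^{(t-1)}_{fc}\cdot\frac{P^{(t-1)}_{e_tc}}{1-P^{(t-1)}_{e_tc}}.
\]
The negative jump has magnitude at most $P^{(t-1)}_{fK_t}\leq 2A$ by \Cref{lemma:invariant_ub_on_P}. For the sum, I use $P^{(t-1)}_{fc}\leq A$ (the $\starcond$-condition) together with $P^{(t-1)}_{e_tc}\leq 2A\leq 1/2$ (again by \Cref{lemma:invariant_ub_on_P} and \Cref{fact:delta_large_enough}), so $\frac{1}{1-P^{(t-1)}_{e_tc}}\leq 2$, giving an upper bound of $2A\sum_c P^{(t-1)}_{e_tc}\leq 2A$. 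Combining the two contributions yields $|Z^{(t)}_f - Z^{(t-1)}_f|\leq 4A\leq 6A$.

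The proof is essentially bookkeeping; there is no deep obstacle. The main pitfall is making sure to distinguish the three algorithmic branches (marked because $\sum>1$; $K_t=\perp$; $K_t\in \Calg$) and, within the last two, to separate colors that do or do not satisfy the $\starcond$-condition. The supermartingale property is an inequality (rather than equality) precisely because of the colors violating $\starcond$, which is the price paid for having a bounded step size.
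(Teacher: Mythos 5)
Your proof is correct and follows essentially the same route as the paper's: dispose of the trivial branches, decompose $Z_f$ color by color, use that each $P_{fc}$ is a martingale when $\starcond$ holds and a supermartingale otherwise, and bound the step via the invariant $P\leq 2A$ together with $\sum_c P^{(t-1)}_{e_tc}\leq 1$. Your step-size bound is marginally tighter ($4A$, since you keep the $\starcond$ restriction $P^{(t-1)}_{fc}\leq A$ in the scale-up sum rather than relaxing to all colors with the $2A$ bound as the paper does), which of course still implies the stated $6A$.
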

\begin{proof}
    Fix $t \geq 0$ and consider the $(t+1)$-th time step. If $f \notin F_{t+1}$ (i.e., $f$ already arrived by time $t+1$) or $Z^{(t)}_{e_{t+1}} > 1$ (leading to $e_{t+1}$ being \emph{marked} in \cref{line:mark_z_ge_1}), or $f \cap e_{t+1} = \emptyset$, then we have $Z^{(t+1)}_f = Z^{(t)}_f$, and so there is nothing to prove. So, assume $f \in F_{t+1}$, and  $Z^{(t)}_{e_{t+1}} \leq 1$, and $f\cap e_{t+1}\neq \emptyset$.
    Letting  $B = \{c \in \Calg: P^{(t)}_{fc} > A\}$, using $K_{\leq t}$ as shorthand for $K_1,\dots,K_t$, and inspecting the updates to $P$ in \cref{line:zero,line:scaleup}, we have:
    {
    \begin{align*}
        &\E\left[Z^{(t+1)}_{f} - Z^{(t)}_f \;\middle\vert\; K_1,\dots,K_t \right] = \sum_{c \in \Calg} \E\left[ P^{(t+1)}_{fc} - P^{(t)}_{fc}\;\middle\vert\; K_1,\dots,K_t \right]\\
        &= \sum_{c \in \Calg} \left[ \Pr\left[K_{t+1} = c \mid K_{\le t}\right] \cdot \left( -P^{(t)}_{fc} \right) + \mathds{1}[c \notin B] \cdot \left( 1 - \Pr\left[K_{t+1} = c\mid K_{\le t} \right] \right) \cdot \left( \frac{P^{(t)}_{fc}}{1 - P^{(t)}_{e_{t+1}c}} - P^{(t)}_{fc} \right) \right] \\
        &\leq \sum_{c \in \Calg} \left[ P^{(t)}_{e_{t+1}c} \cdot \left( -P^{(t)}_{fc} \right) + \left( 1 - P^{(t)}_{e_{t+1}c} \right) \cdot \left( \frac{P^{(t)}_{fc}}{1 - P^{(t)}_{e_{t+1}c}} - P^{(t)}_{fc} \right) \right] \\
        &= 0.
    \end{align*}}
    This proves that the sequence $Z^{(0)}_{f}, Z^{(1)}_{f}, \dots$ is a supermartingale. It remains to bound the step size:
    \begin{align*}
        \left| Z^{(t+1)}_{f} - Z^{(t)}_f \right| &\leq \left|-\mathds{1}[K_{t+1} \neq \perp] \cdot P^{(t)}_{fK_{t+1}} + \sum_{c \in \Calg \setminus (B \cup \{K_{t+1}\})} \left( \frac{P^{(t)}_{fc}}{1 - P^{(t)}_{e_{t+1}c}} - P^{(t)}_{fc}\right)\right| \\
        &\leq \mathds{1}[K_{t+1} \neq \perp] \cdot P^{(t)}_{fK_{t+1}} + \sum_{c \in \Calg} \left( \frac{P^{(t)}_{fc}}{1 - P^{(t)}_{e_{t+1}c}} - P^{(t)}_{fc}\right) \\
        &\leq 2A + \sum_{c \in \Calg} \frac{P^{(t)}_{fc} \cdot P^{(t)}_{e_{t+1}c}}{1 - P^{(t)}_{e_{t+1}c}} \\
        &\leq 2A + 2\cdot \sum_{c \in \Calg} P^{(t)}_{fc} \cdot P^{(t)}_{e_{t+1}c}.
    \end{align*}
    Above, we first make use of the triangle inequality, and increase the second sum by adding the terms corresponding to the colors from $B \cup \{K_{t+1}\}$ (which were excluded before).  Then, we use the invariant from \cref{lemma:invariant_ub_on_P} to upper bound the first term by $2A$, and then use the invariant again to argue $1/(1 - P^{(t)}_{e_{t+1}c}) \leq 1/(1-2A) \leq 1/2$.

    It remains to argue that $\sum_{c \in \Calg} P^{(t)}_{fc} \cdot P^{(t)}_{e_{t+1}c} \leq 2A$, since that will imply $\left| Z^{(t+1)}_{f} - Z^{(t)}_f \right| \leq 6A$, which is what we need to prove. To see this fact, first note that $Z^{(t)}_{e_{t+1}} = \sum_{c \in \Calg} P^{(t)}_{e_{t+1}c} \leq 1$, by the assumption made at the start of the proof.
    Hence, again by \cref{lemma:invariant_ub_on_P}, 
    \begin{align*}
        \sum_{c \in \Calg} P^{(t)}_{fc} \cdot P^{(t)}_{e_{t+1}c} & \leq \max\left\{P^{(t)}_{fc} : c \in \Calg\right\} \leq 2A. \qedhere 
    \end{align*}
\end{proof}

That $Z^{(0)}_{f},Z^{(1)}_{f},\dots$ is a supermartingale is sufficient to argue that $Z_{e_t}^{(t-1)}$ does not deviate much \emph{above} $Z_{e_t}^{(0)} = 1-\eps$ with high probability  (using Azuma's inequality, as applied in the later \cref{lemma:bounds_on_Z}). However, we also need to bound the deviation \emph{below} $1-\eps$. Continuing from the preceding discussion, if we were to drop the $\starcond$-condition $P^{(t-1)}_{fc}\leq A$ in \Cref{line:scaleup}, the sequences $P^{(0)}_{fc},P^{(1)}_{fc},\dots$, and hence $Z^{(0)}_{f},Z^{(1)}_{f},\dots$, would be \emph{martingales} (and not just supermartingales) with respect to $K_1,K_2,\dots$. 
This would suffice to argue the desired statement that $Z_{e_t}^{(t-1)}$ is unlikely to deviate much \emph{below} $Z_{e_t}^{(0)}=1-\eps$. 

Motivated by the above, we define a sequence $Y$ that considers the per-step change \emph{if we were} to allow to (temporarily) scale up the $P$-values ignoring the $\starcond$-condition $P^{(t-1)}_{fc} \leq A$. We will formalize the discussion in the previous paragraph by showing that $Y$ is a martingale. Furthermore, if there were no ``bad'' edge-color pairs which violate the $\starcond$-condition, then this martingale $Y$ would be exactly $Z$. Intuitively, and as we formalize shortly, if there are few such pairs, $Y$ and $Z$ are almost equal. This allows us to lower bound $Z$ with high probability in terms of the martingale $Y$, which is sharply concentrated. $Y$ is defined as follows:
\begin{definition} \label{def:lb_on_Z}
For each $f\in F_{0}$, color $c\in \Calg$, and time step $t$, let
\begin{equation*}
\overline{P^{(t)}_{fc}}
=
\begin{cases}
P^{(t-1)}_{fc} / (1-P^{(t-1)}_{e_tc}) & \text{if in \Cref{line:scaleup} no assignment to $P^{(t)}_{fc}$ is made because of ${\color{blue}P^{(t-1)}_{fc} > A\quad (\star)}$} \\
P^{(t)}_{fc} & \text{otherwise.}
\end{cases}
\end{equation*}
Also define:
    \begin{equation*}
        \overline{Z^{(t)}_{e}} = \sum_{c \in \Calg} \overline{P^{(t)}_{ec}} \text{ \ \ and \ \ } Y^{(t)}_e = \sum_{t'=1}^t \left( \overline{Z^{(t')}_{e}} - Z^{(t'-1)}_e \right).
    \end{equation*}
\end{definition}
Next, we prove that the sequence $Y$ is a martingale with bounded step size, similarly to the proof of \Cref{lemma:Z_is_supermartingale} (see \Cref{sec:proofs_Zlbhelperlemmas}).
\begin{restatable}{lemma}{Zlbhelperlemmaone} \label{lemma:Zlbhelperlemma1}
    Let $f \in F_0$. The sequence of random variables $Y^{(0)}_f, Y^{(1)}_f,\dots$ is a martingale with respect to the sequence of random variables $K_1,K_2,\dots$ Furthermore, the step size is bounded by $6A$.
\end{restatable}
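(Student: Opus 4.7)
The plan is to mirror the structure of \Cref{lemma:Z_is_supermartingale}, treating the ``overline'' construction as restoring the martingale property that was lost when the $\starcond$-condition was imposed on \Cref{line:scaleup}. Since $Y^{(t+1)}_f - Y^{(t)}_f = \overline{Z^{(t+1)}_f} - Z^{(t)}_f$, it suffices to show that this single-step difference has conditional expectation zero given $K_1,\dots,K_t$ and absolute value at most $6A$.

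The first step is to dispose of the trivial cases: if $f\notin F_{t+1}$, or $Z^{(t)}_{e_{t+1}}>1$ (so $e_{t+1}$ is marked in \Cref{line:mark_z_ge_1} and no updates occur), or $f\cap e_{t+1}=\emptyset$, then $\overline{P^{(t+1)}_{fc}} = P^{(t+1)}_{fc} = P^{(t)}_{fc}$ for all $c$, and the difference is identically $0$. In the remaining case $f\in F_{t+1}$, $Z^{(t)}_{e_{t+1}}\le 1$, $f\cap e_{t+1}\neq\emptyset$, the main conceptual point is to observe that, regardless of whether a color $c$ is ``bad'' (i.e.\ $P^{(t)}_{fc}>A$) or ``good'', the overlined value collapses to the uniform closed form
\[
\overline{P^{(t+1)}_{fc}} \;=\; \begin{cases} 0 & \text{if } K_{t+1}=c,\\ P^{(t)}_{fc}/(1-P^{(t)}_{e_{t+1}c}) & \text{if } K_{t+1}\neq c.\end{cases}
\]
This requires a short case check against the definition of $\overline{P^{(t)}_{fc}}$: when $K_{t+1}=c$ the assignment in \Cref{line:scaleup} is skipped due to the $c\in\Calg\setminus\{K_{t+1}\}$ restriction (not the $\starcond$-condition), so the ``otherwise'' branch applies and gives $P^{(t+1)}_{fc}=0$; when $K_{t+1}\neq c$, the scale-up either happens (good $c$) or is blocked precisely by the $\starcond$-condition (bad $c$), and in both sub-cases $\overline{P^{(t+1)}_{fc}}$ equals the scaled-up value. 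The case $K_{t+1}=\perp$ is folded into ``$K_{t+1}\neq c$'' for every $c$. This unification is the only place the definition of $\overline{P}$ is actually used, and is the main (mild) obstacle in the proof.

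Given this closed form, the martingale identity is obtained color-by-color: using $\Pr[K_{t+1}=c\mid K_{\le t}]=P^{(t)}_{e_{t+1}c}$ (valid since $Z^{(t)}_{e_{t+1}}\le 1$, so \Cref{line:sample} produces the claimed marginals), a direct calculation shows $\E[\overline{P^{(t+1)}_{fc}}-P^{(t)}_{fc}\mid K_{\le t}]=0$ for each $c$. Summing over $c\in\Calg$ yields $\E[\overline{Z^{(t+1)}_f}-Z^{(t)}_f\mid K_{\le t}]=0$. Note that, compared with \Cref{lemma:Z_is_supermartingale}, the indicator $\mathds{1}[c\notin B]$ is gone: that indicator is exactly what turned a martingale into a supermartingale, and the overline construction removes it by design.

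Finally, for the step size bound, the same triangle-inequality estimate as in \Cref{lemma:Z_is_supermartingale} goes through verbatim: the ``zero-out'' contribution $\mathds{1}[K_{t+1}\neq\perp]\cdot P^{(t)}_{fK_{t+1}}$ is at most $2A$ by \Cref{lemma:invariant_ub_on_P}, and the ``scale-up'' contribution is $\sum_{c\in\Calg}P^{(t)}_{fc}P^{(t)}_{e_{t+1}c}/(1-P^{(t)}_{e_{t+1}c})\le 2\sum_{c}P^{(t)}_{fc}P^{(t)}_{e_{t+1}c}\le 2\cdot 2A\cdot Z^{(t)}_{e_{t+1}}\le 4A$, using $P^{(t)}_{fc}\le 2A$ and $Z^{(t)}_{e_{t+1}}\le 1$. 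Adding these gives $|\overline{Z^{(t+1)}_f}-Z^{(t)}_f|\le 6A$, completing the plan.
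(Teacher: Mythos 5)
Your proposal is correct and follows essentially the same route as the paper's proof: reduce to the single-step difference $\overline{Z^{(t+1)}_f}-Z^{(t)}_f$, observe that dropping the $\starcond$-restriction removes the indicator $\mathds{1}[c\notin B]$ so the color-by-color expectation is exactly zero, and reuse the step-size bound from \Cref{lemma:Z_is_supermartingale}. Your explicit case check that the definition of $\overline{P^{(t+1)}_{fc}}$ collapses to the uniform closed form (in particular that $K_{t+1}=c$ falls into the ``otherwise'' branch) is a detail the paper leaves implicit, but it is the intended reading.
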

Next, we use the sequence $Y$ to represent $Z$ as the sum between a martingale and a ``drift''. We recall that $Z_e^{(0)}=1-\eps$, and by telescoping can thus be written as $Z^{(t)}_e = 1-\eps + \sum_{t'=1}^{t} \left( Z^{(t')}_e -  Z^{(t'-1)}_e \right)$. Rewriting this in terms of $Y$, we obtain the following decomposition of $Z$:
\begin{equation}\label{eqn:decomposition}
    Z^{(t)}_e  = 1 - \varepsilon + Y^{(t)}_e - \sum_{t'=1}^{t} \left( \overline{Z^{(t')}_e} -  Z^{(t')}_e \right).
\end{equation}
Note that since $\overline{Z^{(t)}_e} \geq Z^{(t)}_e$ always, this implies that $Z^{(t)}_e \leq 1-\eps + Y^{(t)}_e$, and so \Cref{lemma:Zlbhelperlemma1} together with Azuma's inequality give an alternative proof of the upper tail bound for $Z^{(t)}$ which was proven directly in \cref{lemma:Z_is_supermartingale}, using the fact that $Z^{(t)}_e$ is a supermartingale. More importantly, the decomposition above implies that $Z^{(t)}_e$ differs from $Y^{(t)}_e +1-\eps$ (which is sharply concentrated around $1-\eps$, using Azuma's inequality) by a sum of bounded terms (by \Cref{lemma:invariant_ub_on_P}). Moreover, these terms are only non-zero for those timesteps $t'$ and those colors $c$ such that $c$ is \emph{bad} at time step $t'$, meaning that $\overline{P^{(t')}_{ec}}\neq P^{(t')}_{ec}$.
This motivates the following lemma, which together with the preceding discussion, will allow us to lower bound $Z$ with high probability:
\begin{restatable}[Few Bad Colors Lemma]{lemma}{MainLemma} \label{lemma:main_lemma}
    With probability at least $1-1/n^{100}$, at all time steps $t$ during the execution of \cref{alg:edge-coloring}, and for all edges $f \in F_t$, we have:
    \begin{equation*}
        \left| B^{(t)}_f \right| \leq 2\varepsilon^5 \Delta,
    \end{equation*}
    where $B^{(t)}_f := \left\{c \in \Calg: P^{(t)}_{fc} > A \right\}$ is called the set of \emph{bad} colors w.r.t.\ the potential future edge $f$ at time $t$.
\end{restatable}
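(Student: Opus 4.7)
The plan is to combine a union bound with a concentration argument on a carefully constructed auxiliary martingale. First, it suffices to union-bound over (i) a potential edge $f \in \binom{V}{2}$ and a time $t$ ($n^{O(1)}$ choices); (ii) a candidate bad-color subset $S \subseteq \Calg$ with $|S| = \lceil 2\varepsilon^5\Delta \rceil$ ($\binom{\Delta}{|S|}$ choices); and (iii) a potential \emph{neighborhood trajectory} $N$ of $f$ through time $t$---a sequence of at most $\Delta$ edges incident to $f$ together with their arrival times ($n^{O(\Delta)}$ choices). Thus it suffices to show, for each such tuple, that the probability that $N$ is realized as $f$'s neighborhood trajectory \emph{and} $P^{(t)}_{fc} > A$ for every $c \in S$ is at most $e^{-\omega(\Delta\log n)}$.

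For a fixed tuple $(f,t,S,N)$, I would decompose the joint event as (a) \emph{no color in $S$ is zero-ed out}, i.e., $K_s \notin S$ for every $s$ with $e_s \in N$; and (b) for every $c \in S$, the cumulative scale-ups push $P_{fc}$ above $A$, which (to leading order) is equivalent to $\sum_{s \in N} P^{(s-1)}_{e_sc} \ge L := \Theta(\log(1/\varepsilon))$. Given the realized $P$-values, the conditional probability of (a) is $\prod_{s \in N}\bigl(1-\sum_{c \in S}P^{(s-1)}_{e_sc}\bigr) \le e^{-T_S}$, where $T_S := \sum_{s \in N}\sum_{c \in S}P^{(s-1)}_{e_sc}$, and summing the constraint in (b) over $c \in S$ forces $T_S \ge |S|L = \Theta(\varepsilon^5\Delta\log(1/\varepsilon))$. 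Together:
\[
\Pr[\text{joint event}] \le \E\bigl[e^{-T_S}\mathbf{1}(T_S \ge |S|L)\bigr] \le e^{-|S|L}\cdot \Pr[T_S \ge |S|L].
\]

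The factor $e^{-|S|L} = \varepsilon^{\Theta(\varepsilon^5\Delta)}$ alone is not small enough to beat the $n^{O(\Delta)}$ union bound, so the core of the plan is to further bound $\Pr[T_S \ge |S|L]$ by a Freedman-style large-deviation inequality. Using the uniform bound $P^{(s-1)}_{e_sc} \le 2A = \Theta(1/(\varepsilon^2\Delta))$ from \Cref{lemma:invariant_ub_on_P} together with the constraint $\sum_c P^{(s-1)}_{e_sc} \le 1$, and heuristically treating the $\approx \varepsilon^5\Delta^2$ summands as approximately independent with per-term mean $\Theta(1/\Delta)$, one should obtain $\Pr[T_S \ge |S|L] \le \exp(-\Omega(\varepsilon^9\Delta^2\log^2(1/\varepsilon)))$. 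Since $\varepsilon^9\Delta\log^2(1/\varepsilon) \gg \log n$ whenever $\Delta = \omega(\log n)$ (given the choice $\varepsilon = c_\varepsilon(\log n/\Delta)^{1/16}$), this comfortably dominates the union-bound cost and yields the desired $\Pr[|B^{(t)}_f| > 2\varepsilon^5\Delta] \le 1/n^{100}$.

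\textbf{Main obstacle.} The hardest step is rigorously proving the concentration of $T_S$. The summands $P^{(s-1)}_{e_sc}$ across different $(s,c)$ pairs are strongly coupled through the shared randomness $K_1,\ldots,K_{s-1}$, so the independence heuristic is only suggestive. The ``carefully chosen martingales'' advertised in the technical overview presumably refer to constructing an appropriate auxiliary (super)martingale---possibly an exponential or product martingale capturing the joint evolution of the $|S|$ values $\{P^{(s)}_{fc}\}_{c \in S}$---whose deviation implies the desired tail bound on $T_S$, and whose design exploits algebraic identities specific to \cref{alg:edge-coloring} (in particular that all incident edges share the same scale-up factor $1/(1-P^{(s-1)}_{e_sc})$ for each color $c$, and that the total outgoing probability at each step satisfies $Z^{(s-1)}_{e_s} = \sum_c P^{(s-1)}_{e_sc} \le 1$).
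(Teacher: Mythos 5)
Your overall skeleton matches the paper's: union-bound over the $n^{O(\Delta)}$ potential neighborhoods, over a candidate subset of $\sim\varepsilon^5\Delta$ colors, and reduce to showing that a sum of $P$-values over (neighbor, color) pairs concentrates with failure probability $\exp(-\Omega(\mathrm{poly}(\varepsilon)\Delta^2)) \le n^{-\Omega(\Delta)}$. Your reduction of ``$P_{fc}>A$'' to ``$\sum_{s} P^{(s-1)}_{e_sc} \ge L = \Theta(\log(1/\varepsilon))$'' is also essentially what the paper does (there, phrased multiplicatively via a scaling factor $S_{ec}=\prod_f(1-P^{(t_f-1)}_{fc})$ that must drop below $\approx\varepsilon^2$). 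However, the step you flag as the ``main obstacle'' --- concentration of $T_S$ --- is a genuine gap, and the direct route you sketch does not work. The partial-accumulation process for $T_S$ (freeze terms for arrived neighbors, track current $P$-values for future ones) is indeed a supermartingale, but its increments are not $O(A)$: when an edge $g$ incident to $u$ (or $v$) arrives, it simultaneously rescales $P_{e_s c}$ for \emph{all} up to $\Delta$ future neighbors $e_s\ni u$, and if $K_{t_g}\in S$ it zeroes $P_{e_sK_{t_g}}$ for all of them. This produces increments of magnitude up to $\Theta(\Delta A)=\Theta(1/\varepsilon^2)$ over the $\Theta(\Delta)$ such arrivals, and Azuma/Freedman then gives an exponent of order $\varepsilon^{14}\Delta\log^2(1/\varepsilon)\ll \Delta\log n$, far short of what the $n^{O(\Delta)}$ union bound requires. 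The independence heuristic hides exactly this star-shaped correlation, which is the whole difficulty.

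The paper's proof overcomes this with two ideas your proposal is missing. First, it treats the two endpoints separately (a color is bad only if the scaling attributable to one endpoint $w\in\{u,v\}$ is already large, so one can union-bound over $w$ and over subsets of $\varepsilon^5\Delta$ colors per endpoint). Second, and crucially, it replaces the raw sum $\sum_{f\in U_w} P^{(t_f-1)}_{fc}$ by the discounted quantities $R^{(t)}_{fc}=P^{(t)}_{fc}\prod_{g\in U_w,\,t_g\le t}(1-P^{(t_g-1)}_{gc})$, which satisfy the telescoping identity $\sum_{f\in U_w}R^{(t_f-1)}_{fc}=1-\prod_{f\in U_w}(1-P^{(t_f-1)}_{fc})$. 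This discounting makes arrivals at $w$ itself non-increasing for the process $Q_{U_wC}=\sum_{c\in C}\sum_f R_{fc}$ (the scale-up by $1/(1-P_{gc})$ is exactly cancelled by the new discount factor), so the only positive increments come from $2$-hop arrivals not incident to $w$, each of which touches at most two edges of $U_w$ and changes $Q$ by $O(A)$. Azuma over the $O(\Delta^2)$ such steps then yields the needed $\exp(-\Omega(\varepsilon^{16}\Delta^2))=n^{-10\Delta}$ bound, and the identity converts the multiplicative badness condition into the additive one ``$Q_{U_wc}>1-\varepsilon/2$'' with each color contributing at most $1$. Without this (or an equivalent device that neutralizes the same-endpoint correlations), the concentration claim at the heart of your argument cannot be established.
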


The above lemma is the technical meat of our analysis of \Cref{alg:edge-coloring}, and we  dedicate \cref{sec:arguing_main_lemma} to its proof. 
For now, we note that together with the preceding discussion, it implies the following lower tail bound on $Z$ quite directly (see \cref{sec:proofs_Zlbhelperlemmas} for a full proof). 

\begin{restatable}{lemma}{Zlbhelperlemmatwo} \label{lemma:Zlbhelperlemma2}
    Let $f \in F_0$ and $t$ be an arbitrary time step. We have, with probability at least $1-1/n^{100}$:
    \begin{equation*}
        Z^{(t)}_{f} \geq 1 - \varepsilon + Y^{(t)}_f - 20 \varepsilon^5 \Delta^2 A^2.
    \end{equation*}
\end{restatable}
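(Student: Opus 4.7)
The plan is to start from the telescoping decomposition given in equation (\ref{eqn:decomposition}) and upper bound the ``drift'' term $\sum_{t'=1}^t\bigl(\overline{Z^{(t')}_f}-Z^{(t')}_f\bigr)$ by $20\,\varepsilon^5\Delta^2 A^2$, using the Few Bad Colors Lemma (\Cref{lemma:main_lemma}). Concretely, I would condition on the high-probability event from \Cref{lemma:main_lemma} that $|B^{(t')}_{f}|\le 2\varepsilon^5\Delta$ for every potential edge and every time step, which costs us only the failure probability $1/n^{100}$ in the final bound.

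First I would identify the only time steps that contribute a nonzero term. By inspecting \Cref{def:lb_on_Z} and the algorithm's update rules, $\overline{P^{(t')}_{fc}}-P^{(t')}_{fc}$ is nonzero only when: (i) $f$ neighbors $e_{t'}$, (ii) $Z^{(t'-1)}_{e_{t'}}\le 1$ (otherwise no $P$-values are updated), (iii) $c\neq K_{t'}$, and (iv) the $(\star)$-condition is violated, i.e., $c\in B^{(t'-1)}_f$. In that case one has the explicit expression
\[
\overline{P^{(t')}_{fc}}-P^{(t')}_{fc}=\frac{P^{(t'-1)}_{fc}\,P^{(t'-1)}_{e_{t'}c}}{1-P^{(t'-1)}_{e_{t'}c}}.
\]
Next I would apply \Cref{lemma:invariant_ub_on_P}, which gives $P^{(t'-1)}_{fc},P^{(t'-1)}_{e_{t'}c}\le 2A$, together with $A\le 1/10$ from \Cref{fact:delta_large_enough} (so $1/(1-2A)\le 5/4$), to conclude that each such term is at most $\tfrac{5}{4}\cdot(2A)\cdot(2A)=5A^2$.

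Summing over colors $c\in B^{(t'-1)}_f$ and invoking \Cref{lemma:main_lemma}, the per-step contribution is at most
\[
\overline{Z^{(t')}_f}-Z^{(t')}_f\ \le\ |B^{(t'-1)}_f|\cdot 5A^2\ \le\ 10\,\varepsilon^5\Delta A^2.
\]
Finally I would observe that the only nonzero contributions come from time steps $t'$ at which $f$ neighbors $e_{t'}$, of which there are at most $2(\Delta-1)\le 2\Delta$ (at most $\Delta-1$ per endpoint of $f$). Summing the per-step bound over these time steps yields
\[
\sum_{t'=1}^t\bigl(\overline{Z^{(t')}_f}-Z^{(t')}_f\bigr)\ \le\ 2\Delta\cdot 10\,\varepsilon^5\Delta A^2\ =\ 20\,\varepsilon^5\Delta^2 A^2,
\]
which, plugged into the decomposition (\ref{eqn:decomposition}), gives exactly the desired lower bound on $Z^{(t)}_f$. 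The only subtle point is getting the constant right: the tighter estimate $1/(1-2A)\le 5/4$ (rather than the looser $\le 2$) is what shaves the factor from $32$ down to $20$; everything else is a routine consequence of the Few Bad Colors Lemma, which carries the entire probabilistic content of the proof.
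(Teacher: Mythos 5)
Your proposal is correct and follows essentially the same route as the paper's proof: the telescoping decomposition \eqref{eqn:decomposition}, the observation that only the at most $2\Delta$ arrivals of edges incident to $f$ contribute, the per-color bound $\frac{P^{(t'-1)}_{fc}P^{(t'-1)}_{e_{t'}c}}{1-P^{(t'-1)}_{e_{t'}c}}\le 5A^2$ via \Cref{lemma:invariant_ub_on_P} and \Cref{fact:delta_large_enough}, and the Few Bad Colors Lemma to cap the number of contributing colors per step at $2\varepsilon^5\Delta$. The constants, including the $1/(1-2A)\le 5/4$ refinement, match the paper exactly.
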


We proceed by leveraging the (super)martingales $Z$ and $Y$ to prove our desired w.h.p.~bounds on $Z^{(t)}_e$. Recall that these bounds will help us later argue that only few edges per vertex are marked and forwarded to greedy, and so the ``emergency'' palette $\Cgreedy$ will be small.
\begin{lemma} \label{lemma:bounds_on_Z}
    Fix a potential future edge $e \in F_0$, and consider the associated supermartingale $Z^{(0)}_{e}, Z^{(1)}_{e},\dots$. With probability at least $1 - 1/n^{99}$, for any time step $t$ such that $e \in F_t$, we have:
    \begin{equation*}
        1 - 2 \varepsilon - 20 \varepsilon^5 \Delta^2 A^2 \leq Z^{(t)}_e \leq 1.
    \end{equation*}
\end{lemma}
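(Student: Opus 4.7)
The plan is to apply Azuma's inequality (\cref{lemma:azuma}) twice: once to the supermartingale $Z^{(0)}_e, Z^{(1)}_e, \ldots$ from \cref{lemma:Z_is_supermartingale} for the upper bound, and once to the martingale $Y^{(0)}_e, Y^{(1)}_e, \ldots$ from \cref{lemma:Zlbhelperlemma1} for the lower bound, combining the latter with the relation between $Z$ and $Y$ established in \cref{lemma:Zlbhelperlemma2}.

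For the upper bound, the key accounting observation is that a step $Z^{(t+1)}_e - Z^{(t)}_e$ can be nonzero only when $e_{t+1}$ shares an endpoint with $e$. Since each endpoint of $e$ has degree at most $\Delta$ and $e \in F_t$ has not yet arrived, at most $2(\Delta-1)$ such ``active'' time steps occur. Applying \cref{lemma:azuma} (after passing to the compressed subsequence of active steps, which inherits the supermartingale property and step size at most $6A$) with $Z^{(0)}_e = 1-\eps$ and $\lambda = \eps$ yields
\[
\Pr\!\left[Z^{(t)}_e \geq 1\right] \;\leq\; \exp\!\left(-\frac{\eps^2}{2 \cdot 2(\Delta-1) \cdot (6A)^2}\right) \;=\; \exp\!\left(-\Omega(\eps^6\Delta)\right).
\]

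For the lower bound, I will first invoke \cref{lemma:Zlbhelperlemma2} to get $Z^{(t)}_e \geq 1 - \eps + Y^{(t)}_e - 20\eps^5\Delta^2 A^2$ with probability at least $1 - 1/n^{100}$. Then I will apply the martingale (two-sided) version of \cref{lemma:azuma} to $(Y^{(t)}_e)$, again restricted to its at most $2(\Delta-1)$ active steps, using $Y^{(0)}_e = 0$ and step size $6A$, to conclude $Y^{(t)}_e \geq -\eps$ with probability at least $1 - \exp(-\Omega(\eps^6\Delta))$. Chaining the two inequalities gives the desired bound $Z^{(t)}_e \geq 1 - 2\eps - 20\eps^5\Delta^2 A^2$.

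To convert the per-$t$ tail bounds into a bound that holds for all $t$, I will union bound over the at most $O(n^2)$ arrival times. Substituting $\eps = c_\eps(\ln n/\Delta)^{1/16}$ and using \cref{fact:delta_large_enough} (which is derived from the assumption $\Delta \geq (10c_\eps)^{16}\ln n$), we get $\eps^6\Delta = c_\eps^6(\ln n)^{3/8}\Delta^{5/8} \gg \ln n$, so each Azuma tail is at most $1/n^{102}$, and the total failure probability is at most $1/n^{99}$ after a union bound, as claimed.

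The main obstacle I anticipate is the step-counting: applying Azuma naively over all $|E|$ arrival times would introduce a factor of $|E| = \Theta(n\Delta)$ in the denominator of the exponent, producing an essentially useless tail bound. The critical step is restricting attention to the at most $2\Delta$ active steps incident to $e$, which is what makes the exponent $\eps^6\Delta$ rather than $\eps^6/n$-scale, and hence large enough to survive the union bound over all $t$. This is a legitimate operation since the compressed subsequence obtained by keeping only active time steps remains a (super)martingale with respect to the correspondingly coarsened filtration, and all intermediate ``inactive'' $Z^{(t)}_e$ values agree with one of the entries of the compressed sequence.
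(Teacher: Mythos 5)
Your proposal is correct and follows essentially the same route as the paper: Azuma's inequality applied to the supermartingale $Z_e$ for the upper tail and to the martingale $Y_e$ (combined with \cref{lemma:Zlbhelperlemma2}) for the lower tail, with the crucial observation that only the at most $2\Delta$ steps incident to $e$ contribute to the martingale, followed by the same parameter check. The paper phrases the step-counting via the quantity $t^+$ of non-zero increments rather than an explicit "compressed subsequence," but this is the identical idea.
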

\begin{proof}
    We begin by proving the upper bound. Since $Z^{(0)}_e = 1-\varepsilon$, it suffices to prove:
    \begin{equation*}
        \Pr\left[ Z^{(t)}_e - Z^{(0)}_e > \varepsilon \right] \leq \frac{1}{n^{100}}.
    \end{equation*}
    To set up Azuma's inequality, let $\lambda = \varepsilon$ and $t^+ = \left|\left\{ t' < t - 1: Z^{(t'+1)}_e \neq Z^{(t')}_e \right\} \right|$. These are the number of time steps that lead to (non-zero) changes in the supermartingale $Z_e$. Apply \cref{lemma:azuma} to obtain:
    \begin{equation*}
        \Pr\left[ Z^{(t)}_e - Z^{(0)}_e > \varepsilon \right] \leq \exp\left( -\frac{\varepsilon^2}{2t^+(6A)^2} \right).
    \end{equation*}
    It suffices to argue that:
    \begin{equation*}
        \frac{\varepsilon^2}{2t^+(6A)^2} \geq 100 \ln n.
    \end{equation*}
    Note that $t^+ \leq 2\Delta$, since $Z_e$ can only change due to online arrivals of edges $f$ such that $f \cap e \neq \emptyset$. Since $\Delta$ is the maximum degree in the graph, there are at most $2\Delta$ such edges. 
    Hence, it suffices to argue that:
    \begin{equation*}
        \frac{\varepsilon^2}{144\Delta A^2} \geq 100 \ln n.
    \end{equation*}
    Plugging in $A = c_A/(\varepsilon^2\Delta)$, the above inequality can be simplified to:
    \begin{equation*}
        \frac{\varepsilon^6 \Delta}{(12c_A)^2} \geq 100 \ln n.
    \end{equation*}
    Since $\varepsilon^6 \geq \varepsilon^{16} = c^{16}_\varepsilon \cdot (\ln n/\Delta)$, it suffices to  argue that:
    \begin{equation*}
        \frac{c_\varepsilon^{16}}{(12c_A)^2} \geq 100,
    \end{equation*}
    which holds by the choices of the constants $c_\varepsilon$ and $c_A$.

    To prove the lower bound, thanks to \cref{lemma:Zlbhelperlemma2} (which says $Z_{e}\ge 1-\eps + Y_{e}-20\eps^5\Delta^2A^2$ holds with probability at least $1-1/n^{100}$),
    it suffices to argue that $Y^{(t)}_{e} > -\varepsilon$ with high probability. Since $Y^{(0)}_e = 0$, it suffices to prove:
    \begin{equation*}
        \Pr\left[ Y^{(0)}_e - Y^{(t)}_{e} > \varepsilon \right] \leq \frac{1}{n^{100}}.
    \end{equation*}
    This can be done again by applying Azuma's inequality (\cref{lemma:azuma}) in an identical way to above, using \Cref{lemma:Zlbhelperlemma1}.
\end{proof}
\begin{corollary} \label{corollary:Z_is_good_whp_always}
    Let $\text{Zgood}$ be the event that, in an execution of \cref{alg:edge-coloring}, for any time step $t$, and any potential future edge $e \in F_t$, it holds that $1 - 2 \varepsilon - 20  \varepsilon^5 \Delta^2 A^2 \leq Z^{(t)}_e \leq 1$. We call such executions of \cref{alg:edge-coloring} \emph{good}. Then:
    \begin{equation*}
        \Pr[\mathrm{Zgood}] \geq 1 - \frac{1}{n^{90}}.
    \end{equation*}
\end{corollary}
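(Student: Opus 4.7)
The plan is to derive the corollary from \cref{lemma:bounds_on_Z} by a straightforward union bound over all potential future edges. The lemma already establishes, for each fixed potential future edge $e \in F_0$, that with probability at least $1 - 1/n^{99}$ the inequalities $1 - 2\varepsilon - 20\varepsilon^5 \Delta^2 A^2 \leq Z^{(t)}_e \leq 1$ hold simultaneously across all time steps $t$ with $e \in F_t$. So the only remaining task is to promote this per-edge guarantee to one that is uniform over all potential edges.

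Concretely, for each $e \in F_0$ I would let $\mathcal{E}_e$ denote the event that the inequalities above fail for some $t$ with $e \in F_t$; by \cref{lemma:bounds_on_Z}, $\Pr[\mathcal{E}_e] \leq 1/n^{99}$. Since $F_0 = \binom{V}{2}$ has $\binom{n}{2} \leq n^2$ elements, a union bound yields
$$\Pr\left[\bigcup_{e \in F_0} \mathcal{E}_e\right] \leq \binom{n}{2} \cdot \frac{1}{n^{99}} \leq \frac{1}{n^{90}}.$$
The complement of $\bigcup_{e \in F_0} \mathcal{E}_e$ is exactly the event $\mathrm{Zgood}$ defined in the corollary: its quantification over potential future edges $e \in F_t$ for arbitrary $t$ is subsumed by quantifying over $e \in F_0$, since $F_t \subseteq F_0$ for all $t$. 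Hence this inequality immediately yields $\Pr[\mathrm{Zgood}] \geq 1 - 1/n^{90}$.

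There is no real technical obstacle here --- all of the delicate (super)martingale work and Azuma applications have already been carried out in \cref{lemma:bounds_on_Z} (and, beneath it, in \cref{lemma:Z_is_supermartingale,lemma:Zlbhelperlemma1,lemma:Zlbhelperlemma2}). The only thing to verify is that the arithmetic lines up, and it does: the slack between the per-edge failure probability $1/n^{99}$ provided by the lemma and the $1/n^{90}$ required by the corollary is tailored precisely so that a trivial $\binom{n}{2}$-fold union bound goes through with plenty of room to spare.
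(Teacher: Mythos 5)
Your proposal is correct and is essentially the paper's own argument: a union bound over the at most $\binom{n}{2}\le n^2$ potential edges using the per-edge guarantee of \cref{lemma:bounds_on_Z}, whose statement already quantifies over all time steps $t$ with $e\in F_t$. (The paper's one-line proof additionally union bounds over the $\le n^2$ time steps, which is harmless but redundant given how the lemma is stated; either way the slack to $1/n^{90}$ is ample.)
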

\begin{proof}
    This follows from \cref{lemma:bounds_on_Z} by taking an union bound over all $\leq n^2$ time steps and $\leq n^2$ potential future edges.
\end{proof}

The above corollary allows us to focus on \emph{good} executions, i.e., executions for which $Z_{e}$ is always well concentrated. In this case, all vertices $v$ in the input graph have all but a small fraction of their edges successfully colored using the initial palette $\Calg$. To obtain a formal argument of this statement, we first show that each edge incident at a vertex $v$ is almost surely colored using $\Calg$ (equivalently, it is not \emph{marked} by \cref{alg:edge-coloring}), independently of what happened with previous edges incident at $v$. Together with a simple coupling argument, this fact allows us to apply Chernoff bounds to bound with high probability the number of uncolored edges of $v$.

{
\begin{lemma} \label{lemma:decrease_of_deg_at_v}
    Let $v \in V$ be a fixed vertex in the input graph, and consider its incident edges $e_{t_1} < e_{t_2} < \dots < e_{t_k}$ in order of their arrival $t_i$. Let $M_v$ be the number of edges incident at $v$ which are marked by the algorithm. We have that:
    \begin{equation*}
        \Pr\left[ M_v \geq c \cdot (\varepsilon \Delta) + \sqrt{300 \cdot \Delta \ln n} \right] \leq \frac{1}{n^{85}},
    \end{equation*}
    where $c := 2 + 20c_A$ is a constant.
\end{lemma}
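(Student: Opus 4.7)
The strategy is to reduce $M_v$ to a standard binomial concentration via stochastic domination, using the global event $\mathrm{Zgood}$ from \Cref{corollary:Z_is_good_whp_always} to control the per-edge marking probability. Since $A = c_A/(\varepsilon^2\Delta)$ gives $20\varepsilon^5\Delta^2 A^2 = O(\varepsilon)$, under $\mathrm{Zgood}$ we have $Z^{(t_i-1)}_{e_{t_i}} \in [1 - c\varepsilon, 1]$ for every edge $e_{t_i}$ incident to $v$, with $c$ as in the statement. An arriving edge $e_{t_i}$ is marked only if either $Z^{(t_i-1)}_{e_{t_i}} > 1$ (triggering \Cref{line:mark_z_ge_1}, excluded under $\mathrm{Zgood}$) or the sampled color $K_{t_i}$ equals $\bot$ (triggering \Cref{line:mark_bot}); the conditional probability of the latter, given the history up to time $t_i - 1$, is exactly $1 - Z^{(t_i-1)}_{e_{t_i}}$.

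To turn these history-dependent per-edge bounds into concentration for $M_v$, I would introduce the indicator $X_i := \mathbf{1}[e_{t_i}\text{ is marked}] \cdot \mathbf{1}\!\left[Z^{(t_i-1)}_{e_{t_i}} \in [1 - c\varepsilon, 1]\right]$. By the observation above, $\Pr[X_i = 1 \mid \text{history through time } t_i - 1] \leq c\varepsilon$ \emph{unconditionally}: either the $Z$-indicator is $0$, forcing $X_i = 0$, or $Z^{(t_i-1)}_{e_{t_i}} \in [1-c\varepsilon,1]$, in which case the marking probability on that history is $1 - Z^{(t_i-1)}_{e_{t_i}} \leq c\varepsilon$. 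A standard chain-rule coupling then implies that $\sum_{i=1}^{k} X_i$ is stochastically dominated by $\mathrm{Bin}(k, c\varepsilon) \preceq \mathrm{Bin}(\Delta, c\varepsilon)$. Crucially, on $\mathrm{Zgood}$ every $Z$-indicator equals $1$, so $M_v = \sum_i X_i$ pointwise on this event.

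Finally I would apply Hoeffding's inequality to the dominating binomial with deviation $\tau := \sqrt{300 \Delta \ln n}$ above its mean $c\varepsilon\Delta$:
$$\Pr\left[\mathrm{Bin}(\Delta, c\varepsilon) \geq c\varepsilon\Delta + \tau\right] \leq \exp(-2\tau^2/\Delta) = \exp(-600 \ln n) = n^{-600}.$$
A union bound with $\Pr[\overline{\mathrm{Zgood}}] \leq n^{-90}$ then yields $\Pr[M_v \geq c\varepsilon\Delta + \tau] \leq n^{-85}$, as required. The only conceptually delicate point is decoupling the per-step marking probability bound from the global event $\mathrm{Zgood}$, which the indicator in the definition of $X_i$ handles cleanly; the Hoeffding step is then routine.
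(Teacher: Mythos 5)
Your proposal is correct and follows essentially the same route as the paper's proof: the same truncated indicators $X_i$ gating on $Z^{(t_i-1)}_{e_{t_i}}\in[1-c\varepsilon,1]$, the same conditional bound $\Pr[X_i=1\mid\text{history}]\le c\varepsilon$, the same coupling to $\mathrm{Bin}(\Delta,c\varepsilon)$ with a Hoeffding tail at deviation $\sqrt{300\Delta\ln n}$, and the same union bound with $\mathrm{Zgood}$. No gaps.
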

\begin{proof}

For edge $e_{t_i}$, define an indicator random variable $X_{i}$ that is $1$ if and only if the edge was marked in \cref{line:mark_bot} \emph{and} $Z^{(t_{i}-1)}_{e_{t_i}} \in [1-c\cdot \eps, 1]$ (and otherwise $X_{i} = 0$). Note that $c\cdot \eps = 2 + 20c_A\eps = 2 + 20 \eps^{5}\Delta^2A^2$.

By definition, $\Pr[X_{i+1} \mid X_{1}, X_{2}, \ldots, X_{i}] \le c\cdot \eps$, as, no matter the history, the edge will be marked in \cref{line:mark_bot} with probability $1-Z^{(t_i - 1)}_{e_{t_i}}$. Since $v$ has degree at most $\Delta$, we have $\E[\sum_{i} X_i] \le c\eps\Delta$. Via standard coupling arguments $\sum_{i} X_i$ is upper bounded by the sum of $\Delta$ independent Bernoulli$(c\eps)$ random variables, and we can apply the following Chernoff-Hoeffding bound for $\delta := \sqrt{300 \cdot \Delta \ln n}$:
    \begin{align*}
        \Pr\left[ \sum_{i} X_i \geq c\eps \Delta + \delta \right]  
        \leq \exp \left( -\frac{2\delta^2}{\Delta} \right)
        \le \frac{1}{n^{100}}.
    \end{align*}

We finish the proof by noting that the number of marked edges $M_v$ incident to $v$ is at most $\sum_{i} X_i$ plus the number of edges $e_{t_{i}}$ where $Z^{(t_i - 1)}_{e_{t_i}}\notin [1-c\cdot\eps, 1]$, and by \cref{corollary:Z_is_good_whp_always} the latter term is $0$ with probability $1-1/n^{90}$ (so we can union bound over this event and the Chernoff bound above being concentrated).
\end{proof}
}

\begin{corollary} \label{corollary:v_is_good_whp_always}
    Let $\text{DegGood}$ be the event that
    all vertices $v$ have at most $c \cdot (\varepsilon \Delta) + \sqrt{300 \cdot \Delta \ln n}$ incident marked edges. We have that:
    \begin{equation*}
        \Pr\left[ \mathrm{DegGood} \right] \geq 1 - \frac{1}{n^{80}}.
    \end{equation*}
\end{corollary}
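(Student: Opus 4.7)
The plan is to derive this corollary as an immediate consequence of \cref{lemma:decrease_of_deg_at_v} via a union bound over all $n$ vertices in $V$. Since \cref{lemma:decrease_of_deg_at_v} already establishes the per-vertex tail bound $\Pr[M_v \geq c \cdot (\varepsilon \Delta) + \sqrt{300 \cdot \Delta \ln n}] \leq 1/n^{85}$, no further probabilistic machinery is needed—the slack in the exponent (going from $n^{85}$ to $n^{80}$) is exactly what lets us absorb the $n$ factor from the union bound with room to spare.

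Concretely, I would let $\mathcal{E}_v$ denote the bad event that vertex $v \in V$ has more than $c \cdot (\varepsilon \Delta) + \sqrt{300 \cdot \Delta \ln n}$ incident marked edges. By \cref{lemma:decrease_of_deg_at_v}, $\Pr[\mathcal{E}_v] \leq 1/n^{85}$ for each individual $v$. The complement of $\mathrm{DegGood}$ is precisely $\bigcup_{v \in V} \mathcal{E}_v$, so a union bound yields
\begin{equation*}
\Pr[\overline{\mathrm{DegGood}}] \;\leq\; \sum_{v \in V} \Pr[\mathcal{E}_v] \;\leq\; n \cdot \frac{1}{n^{85}} \;=\; \frac{1}{n^{84}} \;\leq\; \frac{1}{n^{80}},
\end{equation*}
which proves the desired bound.

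There is no genuine obstacle here: the corollary is a routine union-bound corollary of the preceding lemma, mirroring the way \cref{corollary:Z_is_good_whp_always} was derived from \cref{lemma:bounds_on_Z}. The only minor subtlety is just confirming that the per-vertex failure probability from \cref{lemma:decrease_of_deg_at_v} is small enough to accommodate a union bound over $|V| = n$ vertices with the stated final failure probability of $1/n^{80}$, which it comfortably is.
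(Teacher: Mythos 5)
Your proof is correct and matches the paper's argument exactly: both derive the corollary by applying \cref{lemma:decrease_of_deg_at_v} to each of the $\leq n$ vertices and taking a union bound, with the exponent slack ($n^{85}$ to $n^{80}$) comfortably absorbing the factor of $n$.
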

\begin{proof}
    This follows from \cref{lemma:decrease_of_deg_at_v} by taking an union bound over all $\leq n$ vertices.
\end{proof}

Given that an execution of \cref{alg:edge-coloring} is \emph{good} with high probability, and that \emph{good} executions guarantee a decrease to $o(\Delta)$ $\Calg$-uncolored edges per vertex with high probability, we can now argue that \cref{alg:edge-coloring} will use at most $o(\Delta)$ colors from the ``emergency'' palette $\Cgreedy$. The below lemma finishes our proof of \cref{thm:adaptive-theorem}.
\begin{lemma}
    With probability at least $1 - 1/n^{70}$, \cref{alg:edge-coloring} colors all edges of the input graph using
    \begin{equation*}
        2c \cdot (\varepsilon \Delta) + 2 \cdot \sqrt{300 \Delta \ln n} = O(\eps \Delta) = O\left(\Delta^{15/16} \log^{1/16}n\right)
    \end{equation*}
    colors from $\Cgreedy$.
\end{lemma}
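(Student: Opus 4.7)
The statement follows almost immediately by combining \Cref{corollary:v_is_good_whp_always} with the guarantee of greedy edge coloring (\Cref{lem:greedy}). The plan is to condition on the event $\mathrm{DegGood}$, which holds with probability at least $1 - 1/n^{80} \geq 1 - 1/n^{70}$. On this event, every vertex $v \in V$ has at most $c \cdot (\varepsilon \Delta) + \sqrt{300 \Delta \ln n}$ incident marked edges, and in particular the maximum degree $\Delta'$ of the subgraph $H \subseteq G$ consisting of the marked edges satisfies
\[
\Delta' \leq c \cdot (\varepsilon \Delta) + \sqrt{300 \Delta \ln n}.
\]

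Since \Cref{alg:edge-coloring} colors every marked edge in \Cref{line:greedy} using the greedy algorithm on the separate palette $\Cgreedy$, applying \Cref{lem:greedy} to $H$ (whose edges arrive online in the order in which they are marked) shows that the greedy procedure uses at most $2 \Delta' - 1 \leq 2c \cdot (\varepsilon \Delta) + 2\sqrt{300 \Delta \ln n}$ colors from $\Cgreedy$.

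It remains to verify the asymptotic rewriting. By \Cref{def:parameters}, $\varepsilon \Delta = c_\varepsilon \cdot \Delta^{15/16} \ln^{1/16} n$, so $c \cdot (\varepsilon \Delta) = O(\Delta^{15/16} \log^{1/16} n)$. Moreover, since we may assume $\Delta \geq (10 c_\varepsilon)^{16} \ln n$ as noted after \Cref{def:parameters}, we have $\ln n / \Delta \leq 1$, and hence
\[
\sqrt{300 \Delta \ln n} = O\!\left(\Delta \cdot \sqrt{\ln n / \Delta}\right) = O\!\left(\Delta \cdot (\ln n / \Delta)^{1/16}\right) = O(\Delta^{15/16} \log^{1/16} n),
\]
where the middle inequality uses $(\ln n / \Delta)^{1/2} \leq (\ln n / \Delta)^{1/16}$ for this range of $\Delta$. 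Combining these two bounds yields the claimed $O(\varepsilon \Delta) = O(\Delta^{15/16} \log^{1/16} n)$ total usage of $\Cgreedy$, with the stated failure probability.

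\textbf{Main obstacle.} There is essentially no technical obstacle here — all the heavy lifting has already been done in \Cref{lemma:decrease_of_deg_at_v} and its corollary. The only thing to be mildly careful about is the step from a per-vertex marked-degree bound to a bound on the size of $\Cgreedy$: one must invoke \Cref{lem:greedy} on the subgraph of marked edges (which inherits online arrival order from the original stream) rather than on $G$ itself, so that the relevant maximum degree is $\Delta'$ and not $\Delta$. Once this observation is made, the proof reduces to plugging in the parameters from \Cref{def:parameters} and verifying the final asymptotic rewriting.
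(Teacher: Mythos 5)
Your proof is correct and follows essentially the same route as the paper: condition on $\mathrm{DegGood}$, bound the maximum degree $\Delta'$ of the marked subgraph, and invoke the greedy guarantee (\Cref{lem:greedy}) on that subgraph to get $2\Delta'$ colors from $\Cgreedy$. Your verification of the asymptotic rewriting of $\sqrt{300\Delta\ln n}$ is in fact slightly more explicit than the paper's one-line remark.
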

\begin{proof}
We first note that $2\sqrt{300\Delta \log n} = O(\eps \Delta)$ whenever $\Delta = \Omega(\log n)$ by definition.
    Let $G' \subseteq G$ be the subgraph of the input graph $G$ which contains all marked edges. In a \emph{good} execution of \cref{alg:edge-coloring}, \cref{corollary:v_is_good_whp_always} shows that the maximum degree of $G'$ is, with probability at least $1 - 1/n^{80}$, bounded by $\Delta' = c \cdot (\varepsilon \Delta) + \sqrt{300 \Delta \ln n}$. Since this graph will be colored greedily using the palette $\Cgreedy$ by \cref{alg:edge-coloring}, it easily follows that $\Cgreedy$ contains at most $2\Delta'$ colors.

    Hence, the only cases in which \cref{alg:edge-coloring} potentially uses more than $\Delta + 2\Delta'$ colors is if the execution is not \emph{good} or the execution is \emph{good} but $\text{DegGood}$ fails to hold. From \cref{corollary:Z_is_good_whp_always} and \cref{corollary:v_is_good_whp_always} these events can happen with probability at most $1/n^{100} + 1/n^{80} \leq 1/n^{70}$. This gives the statement of the theorem.
\end{proof}

\subsection{Bounding the Drift: Proofs of \cref{lemma:Zlbhelperlemma1,lemma:Zlbhelperlemma2}} \label{sec:proofs_Zlbhelperlemmas}

For convenience, we reproduce the statements next to the proofs of the two relevant lemmas:
\Zlbhelperlemmaone*
\begin{proof}
    The proof is closely related to the proof of \cref{lemma:Z_is_supermartingale}. Fix $t \geq 0$ and consider the $(t+1)$-th time step. Let $f \in F_{t+1}$ be without loss of generality, for otherwise $\overline{Z^{(t+1)}_f} = Z^{(t)}_f$ trivially. Further, $\overline{Z_f}$ can only change if $f \cap e_{t+1} \neq \emptyset$ (where $e_{t+1}$ is the edge arriving at time $t+1$) and $Z^{(t)}_{e_{t+1}} \leq 1$, so we can also assume w.l.o.g.\ that this is the case. If $K_{t+1} \in \Calg \cup \{\perp\}$ is the color sampled by \cref{alg:edge-coloring} in \cref{line:sample}, we have:
    \begin{align*}
        &\E\left[ Y^{(t+1)}_f - Y^{(t)}_f \;\middle\vert\; K_1,\dots,K_t \right] = \E\left[\overline{Z^{(t+1)}_{f}} - Z^{(t)}_f \;\middle\vert\; K_1,\dots,K_t \right] = \\
        &= \sum_{c \in \Calg} \left[ \Pr\left[K_{t+1} = c \mid K_{\le t} \right] \cdot \left( -P^{(t)}_{fc} \right) + \left( 1 - \Pr\left[K_{t+1} = c\mid K_{\le t} \right] \right) \cdot \left( \frac{P^{(t)}_{fc}}{1 - P^{(t)}_{e_{t+1}c}} - P^{(t)}_{fc} \right) \right] \\
        &= \sum_{c \in \Calg} \left[ P^{(t)}_{e_{t+1}c} \cdot \left( -P^{(t)}_{fc} \right) + \left( 1 - P^{(t)}_{e_{t+1}c} \right) \cdot \left( \frac{P^{(t)}_{fc}}{1 - P^{(t)}_{e_{t+1}c}} - P^{(t)}_{fc} \right) \right]
    \end{align*}
    This is identical to the sum obtained in the proof of \cref{lemma:Z_is_supermartingale} (in that case, as an upper bound on the expectation, as there the second term excluded all bad colors), and thus it equals $0$. The bound on the step size is also identical.
\end{proof}

\Zlbhelperlemmatwo*
\begin{proof}
    Note that (using $Z^{(0)}_{f} = 1 -\varepsilon$), and as observed in \Cref{eqn:decomposition}:
    \begin{equation*}
        Z^{(t)}_f = 1 - \varepsilon + \sum_{t' = 0}^{t-1} \left(Z^{(t' + 1)}_f - Z^{(t')}_f\right) = 1 - \varepsilon + Y^{(t)}_f - \sum_{t'=0}^{t-1} \left( \overline{Z^{(t' + 1)}_f} -  Z^{(t' + 1)}_f \right),
    \end{equation*}
    such that it suffices to upper bound the sum:
    \begin{equation*} 
        S_{\overline{Z}, Z} = \sum_{t'=0}^{t-1} \left( \overline{Z^{(t' + 1)}_f} -  Z^{(t' + 1)}_f \right)
    \end{equation*}
    on the right hand side by $20 \cdot \varepsilon^5 \Delta^2 A^2$. Let $t' \geq 0$ and consider the $(t' + 1)$-th time step. Let $f \in F_{t'+1}$ without loss of generality, for otherwise $\overline{Z^{(t' + 1)}_f} = Z^{(t' + 1)}_f = Z^{(t')}_{f}$ trivially. Further, the difference $\overline{Z^{(t' + 1)}_f} -  Z^{(t' + 1)}_f$ can be non-zero only if $f \cap e_{t'+1} \neq \emptyset$ (where $e_{t'+1}$ is the edge arriving at time $t' + 1$), so we can also assume w.l.o.g.\ that this is the case. Since there are at most $2\Delta$ edges incident to $f$, in particular we have that:
    \begin{equation*}
        \left| \left\{ t': \overline{Z^{(t' + 1)}_f} -  Z^{(t' + 1)}_f \neq 0 \right\} \right| \leq 2\Delta.
    \end{equation*}
    Hence, if we prove an unconditional upper bound $\zeta$ of $\overline{Z^{(t' + 1)}_f} -  Z^{(t' + 1)}_f$ for an arbitrary index $t'$ such that this difference is non-zero, we will obtain $S_{\overline{Z}, Z} \leq 2\Delta \zeta$. Hence, it suffices to argue for such $t'$ that $\overline{Z^{(t' + 1)}_f} -  Z^{(t' + 1)}_f \leq 10 \cdot \varepsilon^5 \Delta A^2$. We fix such a $t'$, and define $B := \{ c \in \Calg : P^{(t')}_{fc} > A \}$ as the set of bad colors for $f$ at time $t'$. By \cref{lemma:main_lemma} we have $|B| \leq 2 \varepsilon^5 \Delta$ with high probability. Further, we have that:
    \begin{align*}
        \overline{Z^{(t' + 1)}_f} -  Z^{(t' + 1)}_f &\leq \sum_{c \in B} \left( \frac{P^{(t')}_{fc}}{1 - P^{(t')}_{e_{t'+1}c}} - P^{(t')}_{fc} \right) \\
        &= \sum_{c \in B} \frac{P^{(t')}_{fc}P^{(t')}_{e_{t'+1}c}}{1 - P^{(t')}_{e_{t'+1}c}} \\
        & \leq \sum_{c \in B} 5A^2 \\
        & \leq 10 \cdot \varepsilon^5 \Delta A^2.
    \end{align*}
    The second inequality above follows from the facts that $P^{(t')}_{fc}, P^{(t')}_{e_{t'+1}c} \leq 2A$ (by \cref{lemma:invariant_ub_on_P}), and the fact that $1 - P^{(t')}_{e_{t'+1}c} \geq 1 - 2A \geq 4/5$ (\cref{fact:delta_large_enough}). The final inequality uses the already established fact that $|B| \leq 2 \varepsilon^5 \Delta$.
\end{proof}

\subsection{Few Bad Colors: Proof of \Cref{lemma:main_lemma}} \label{sec:arguing_main_lemma}

For convenience, we recall the main result that is to be proven in this section:
\MainLemma*

In the following, we consider a potential future edge $e := \{u,v\} \in F_0$. Moreover, we fix its arrival time $T_e = t_e$ arbitrarily, and its \emph{potential neighborhood} $U_e$, which is defined as follows:
\begin{definition} \label{def:potential_neighborhood}
    For a fixed edge $e = \{u,v\}$, and a fixed arrival time $t_e$ of this edge, we call any set $U_e$ of the form $U_e = \{(f,t_f) : f \cap e \neq \emptyset, t_f < t_e\}$ a \emph{potential neighborhood} of the edge $e$. Here, each pair $(f,t_f)$ stands for an edge $f$ intersecting with $e$, and its arrival time $t_f \leq n^2$. Further, for $w \in \{u,v\}$, let $U_w = \{(f,t_f) \in U_e : w \in f \}$ be the corresponding induced neighborhood of vertex $w$. We only consider \emph{potential neighborhoods} $U_e$ such that $\max\{|U_u|, |U_v|\} \leq \Delta$, since the maximum degree of the graph is guaranteed to be $\Delta$.
\end{definition}
How many such pairs \((t_e, U_e)\) are possible? Note that there are at most \( n^2 \) possible values for the arrival time \( t_e \), and each edge has at most \( 2\Delta \) neighbors. For each neighbor, we must choose the vertex it connects to (with \( n \) options) and the time it arrives (with \( n^2 \) options). This results in at most \( n^{6\Delta} \) possible values for \( U_e \). 

Our goal is to show that, regardless of the specific choice of \( t_e \) and \( U_e \), the probability---over the randomness of the algorithm---that edge \( e \) has many \emph{bad} colors is very small. Specifically, we aim to bound this probability by roughly \( 1/n^{10\Delta} \). By a union bound, this ensures that, no matter how the adaptive adversary selects the future neighborhood of $e$, with high probability, the edge \( e \) will have only a few bad colors.

The proof will proceed in multiple steps. First, we focus on a fixed $e := \{u,v\} \in F_0$ with arrival time $t_e$, and a fixed potential neighborhood $U_e$ of this edge. We prove preliminary upper bounds on $P^{(t)}_{ec}$ for time steps $t < t_e$ in \cref{sec:arguing_main_lemma_part1}. These bounds are in terms of other random variables, which we extend to accommodate subsets of colors in \cref{sec:arguing_main_lemma_part2}. In \cref{sec:arguing_main_lemma_part3} we prove concentration inequalities on these latter random variables, which turn out to be supermartingales.

\subsubsection{Setup} \label{sec:arguing_main_lemma_part1}

In the following, consider an arbitrary color $c \in \Calg$. The next definition introduces some essential quantities that we will make use of in the proof of \cref{lemma:main_lemma}:
\begin{definition} \label{def:introducing_Q}
    For any $t < t_e$, and for the fixed color $c \in \Calg$, let:
    \begin{equation*}
        S^{(t)}_{ec} = \prod_{\substack{f \in U_e \\ t_f \leq t}} \left( 1 - P^{(t_f - 1)}_{fc} \right).
    \end{equation*}
    For $w\in \{u,v\}$, any $(f,t_f) \in U_w$, and for the fixed color $c \in \Calg$, let:
    \begin{equation*}
        R^{(t)}_{fc} = P^{(t)}_{fc} \cdot \prod_{\substack{g \in U_w \\ t_g \leq t}} \left( 1 - P^{(t_g-1)}_{gc} \right).
    \end{equation*}
    Finally, define $Q^{(t)}_{U_{w}c} = \sum_{f \in U_w} R^{(\min\{t,t_f - 1)\}}_{fc}$. In particular, note that $Q^{(t_e-1)}_{U_{w}c} = \sum_{f \in U_w} R^{(t_f - 1)}_{fc}$.
    (Note that the definitions of $S_{ec}$, $R_{fc}$, and $Q_{U_wc}$ depend on the edge $e$ and neighborhood $U_e$ we already fixed).
\end{definition}
\paragraph{Intuition Behind \cref{def:introducing_Q}} Consider an execution of \cref{alg:edge-coloring} which leaves color $c$ free by the time $t_e$ at which edge $e$ arrives, and assume that there would be no \color{blue}($\star$)\color{black}-condition in \cref{line:star} for the edge~$e$, i.e., $P_{ec}$ would always be allowed to scale up. Then we would have $P^{(t)}_{ec} = P^{(0)}_{ec}/S^{(t)}_{ec}$, and so $S$ is defined as the ``non-bounded'' scaling-factor of $P$ for executions in which $c$ is available for $e$ on arrival. 

\begin{figure}[ht!]
\centering
\includegraphics[width=0.5\textwidth]{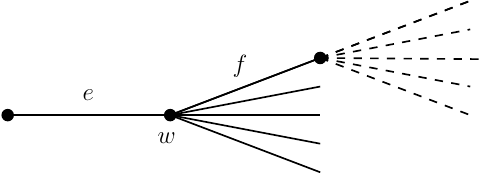}
\caption{The variables $R_{fc}$ do \emph{not} accumulate scalings induced by any of the solid edges, which arrived before $f$ and are incident at $w$.}
\label{fig:twohop}
\end{figure}

The variables $R^{(t)}_{fc}$, defined for potential future edges $f$ connected to $e$ at $w \in \{u,v\}$, accumulate the scalings produced by the arrival of other edges $g$ incident at $f$, but they ``revert'' all potential scalings caused by previous such edges which are also incident at $w$.\footnote{Note that even if, for some arriving $g \in U_w$ in \cref{alg:edge-coloring} with $t_g \leq t$,
\cref{alg:edge-coloring} did \emph{not} scale $P_{fc}$ up by $(1 - P^{(t_g-1)}_{gc})^{-1}$ (either because $Z_{g} > 1$ or $P_{fc} > A$), we still multiply the corresponding $R_{fc}$ by $(1 - P^{(t_g-1)}_{gc})$. In this case we ``revert'' the scaling up even though the scaling up is not performed. This is not essential in our definition, but it helps avoid having to introduce further notation.} For a better understanding, see \cref{fig:twohop}: Assume the edge $f \in U_w$ arrives second-to-last, before $e$, but after all the other edges in the picture. Any dashed edge connecting to $f$, which is in the $2$-hop of $w$, might lead to an increase of $R_{fc}$ due to scaling up. However, solid edges connecting to $w$, which also arrive before $f$, can never lead to an increase of $R_{fc}$ by definition, since we revert such a scaling if it happens.

The variables $Q^{(t)}_{U_wc}$ accumulate the sum of the $R^{(t)}_{fc}$-s (frozen at $t_f-1$ if $t > t_f-1$) for $f$ incident at $w$. They will be shown to be (almost) supermartingales, that can only increase due to arrivals of the $\leq \Delta^2$ many edges in the $2$-hop of $w$.
As we will see later in \cref{lemma:relating_S_to_Q}, $S_{ec} = (1-Q_{U_uc})\cdot(1-Q_{U_vc})$. Thus, showing bounds on the $Q$'s will be helpful to bound the scaling factor $S_{ec}$, and in turn show that the event $P_{ec} > A$ is unlikely (and only happens for few colors). Concretely, \cref{corollary:characterization_of_bad_colors} will reduce the task of showing that a color $c \in \Calg$ is not \emph{bad} for edge $e$ to the task of upper bounding $Q_{U_uc}$ and $Q_{U_vc}$:
\begin{lemma} \label{corollary:characterization_of_bad_colors}
    Consider instances in which edge $e$ arrives at time $t_e$ and has neighborhood given by $U_e$. If $Q^{(t_e-1)}_{U_wc} \leq 1 - \frac{\varepsilon}{2}$ for all $w \in \{u,v\}$, then the color $c \in \Calg$ is not bad for $e$, i.e., $c \notin B^{(t)}_{e}$ for all $t < t_e$. 
\end{lemma}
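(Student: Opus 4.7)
The plan is to first observe that if any neighbor of $e$ was already assigned color $c$ at some earlier time step $t_f \leq t$, then \cref{line:zero} sets $P^{(t_f)}_{ec}=0$ and no subsequent \cref{line:scaleup} update can raise it back, so $P^{(t)}_{ec}=0\leq A$ trivially. Hence I may assume throughout that no neighbor of $e$ is assigned color $c$ during $[1,t_e-1]$.

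Under this assumption I trace how $P_{ec}$ evolves: whenever a neighbor $f \in U_e$ arrives at time $t_f \leq t$, either $P^{(t_f)}_{ec} = P^{(t_f-1)}_{ec}$ (if $e_{t_f}$ is marked in \cref{line:mark_z_ge_1}, or if the $\starcond$ condition on $e$ fails), or $P^{(t_f)}_{ec} = P^{(t_f-1)}_{ec}/(1 - P^{(t_f-1)}_{fc})$. In both cases $P^{(t_f)}_{ec} \leq P^{(t_f-1)}_{ec}/(1 - P^{(t_f-1)}_{fc})$ since the scaling factor is $\geq 1$. Iterating this bound from time $0$ to $t$ gives
\[
P^{(t)}_{ec} \;\leq\; \frac{P^{(0)}_{ec}}{S^{(t)}_{ec}} \;=\; \frac{(1-\varepsilon)/\Delta}{S^{(t)}_{ec}}.
\]
Since every factor of $S^{(t)}_{ec}$ lies in $(0,1]$, the sequence $S^{(t)}_{ec}$ is non-increasing in $t$, so $S^{(t)}_{ec} \geq S^{(t_e-1)}_{ec}$.

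The crucial step is then the identity $S^{(t_e-1)}_{ec} = (1-Q^{(t_e-1)}_{U_uc})(1-Q^{(t_e-1)}_{U_vc})$, asserted in the discussion following \cref{def:introducing_Q}. This follows from two facts: $U_u$ and $U_v$ partition $U_e$ (each $f\in U_e$ shares exactly one endpoint with $e$, since $f\neq e$), and the telescoping identity $\sum_i p_i\prod_{j<i}(1-p_j)=1-\prod_i(1-p_i)$ applied to $Q^{(t_e-1)}_{U_wc}$ with the edges of $U_w$ ordered by arrival time and $p_f = P^{(t_f-1)}_{fc}$. Combined with the hypothesis $Q^{(t_e-1)}_{U_wc}\leq 1-\varepsilon/2$ for each $w\in\{u,v\}$, this yields $S^{(t_e-1)}_{ec}\geq (\varepsilon/2)^2 = \varepsilon^2/4$.

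Putting the pieces together,
\[
P^{(t)}_{ec} \;\leq\; \frac{(1-\varepsilon)/\Delta}{\varepsilon^2/4} \;=\; \frac{4(1-\varepsilon)}{\varepsilon^2\Delta} \;\leq\; \frac{4}{\varepsilon^2\Delta} \;=\; A,
\]
using $c_A=4$. Hence $c\notin B^{(t)}_e$ for every $t<t_e$, as required. The only nontrivial ingredient in this plan is the telescoping identity connecting $S$ with $Q$; everything else is either direct algorithmic inspection of \cref{line:mark_z_ge_1,line:zero,line:scaleup} or a simple monotonicity argument. I expect the identity to be the main obstacle for a self-contained write-up, but it is exactly the content of the forthcoming \cref{lemma:relating_S_to_Q}, which we may invoke here.
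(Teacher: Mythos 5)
Your proposal is correct and follows essentially the same route as the paper: you re-derive inline the bound $P^{(t)}_{ec} \leq \frac{1-\varepsilon}{\Delta}\cdot\frac{1}{S^{(t)}_{ec}}$ (the paper's \cref{lemma:ub_on_P_in_terms_of_S}), use monotonicity of $S$, and invoke the product identity $S^{(t_e-1)}_{ec} = (1-Q^{(t_e-1)}_{U_uc})(1-Q^{(t_e-1)}_{U_vc})$ (the paper's \cref{lemma:relating_S_to_Q}) before the same final arithmetic. No gaps.
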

For any fixed color $c \in \Calg$, the initial value $Q^{(0)}_{U_wc}$ of $Q_{U_wc}$ is at most $1-\varepsilon$. It is not hard to show that $Q^{(t)}_{U_wc}$ increases by less than $\varepsilon/2$ with high probability in $n$. However, this is not enough, since we need to union bound over $n^{\Theta(\Delta)}$ events (for each possible fixed $t_e$ and fixed potential neighborhood $U_e$). If the events $Q^{(t)}_{U_wc} > Q^{(0)}_{U_wc} + \varepsilon/2$ were independent across colors $c \in \Calg$, we would however be able to conclude that any $\sim \varepsilon^5 \Delta$ colors can be simultaneously bad with at most $1/n^{\Theta(\varepsilon^5 \Delta)}$ probability, which would allow applying the previously mentioned union bound. Because these events are in fact correlated, we will prove this fact by the use of martingales in \cref{sec:arguing_main_lemma_part2,sec:arguing_main_lemma_part3}.

We resume by providing the proof of \cref{corollary:characterization_of_bad_colors}, which is obtained from the following sequence of lemmas:
\begin{lemma} \label{lemma:ub_on_P_in_terms_of_S}
    Consider instances in which edge $e$ arrives at time $t_e$ and has neighborhood given by $U_e$. For any $t < t_e$, we have:
    \begin{equation*}
        P^{(t)}_{ec} \leq \frac{1 - \varepsilon}{\Delta} \cdot \frac{1}{S^{(t)}_{ec}}.
    \end{equation*}
\end{lemma}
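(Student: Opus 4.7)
The plan is to prove this by induction on $t$, with a careful case analysis of how $P_{ec}$ and $S_{ec}$ each evolve during time step $t$. The base case $t=0$ is immediate: $P^{(0)}_{ec} = (1-\varepsilon)/\Delta$ by initialization, and $S^{(0)}_{ec} = 1$ as an empty product, so equality holds.

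For the inductive step, suppose the bound holds at time $t-1 < t_e - 1$. If no neighbor of $e$ arrives at time $t$ (i.e.\ $e_t \cap e = \emptyset$), then both $P^{(t)}_{ec} = P^{(t-1)}_{ec}$ and $S^{(t)}_{ec} = S^{(t-1)}_{ec}$ are unchanged, and the bound transfers directly. Otherwise $e_t \in U_e$ (and $t = t_{e_t}$), so by definition $S^{(t)}_{ec} = S^{(t-1)}_{ec} \cdot (1 - P^{(t-1)}_{e_tc})$, which is at most $S^{(t-1)}_{ec}$. Now the plan is to split into the subcases governed by \cref{alg:edge-coloring}:
\begin{enumerate}
    \item If $Z^{(t-1)}_{e_t} > 1$ then $e_t$ is marked in \cref{line:mark_z_ge_1} and no $P$-values are updated, so $P^{(t)}_{ec} = P^{(t-1)}_{ec}$. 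Combined with $1/S^{(t)}_{ec} \geq 1/S^{(t-1)}_{ec}$, the inductive hypothesis yields the bound.
    \item If $Z^{(t-1)}_{e_t} \leq 1$ and the sample $K_t = c$, then \cref{line:zero} sets $P^{(t)}_{ec} = 0$, which trivially satisfies the inequality.
    \item If $Z^{(t-1)}_{e_t} \leq 1$, $K_t \neq c$, and the $\starcond$-condition $P^{(t-1)}_{ec} \leq A$ holds, then \cref{line:scaleup} produces the exact scaling $P^{(t)}_{ec} = P^{(t-1)}_{ec}/(1 - P^{(t-1)}_{e_tc})$, so the inductive hypothesis gives
    $$P^{(t)}_{ec} \leq \frac{1-\varepsilon}{\Delta \cdot S^{(t-1)}_{ec}(1 - P^{(t-1)}_{e_tc})} = \frac{1-\varepsilon}{\Delta \cdot S^{(t)}_{ec}}.$$
    \item If $Z^{(t-1)}_{e_t} \leq 1$, $K_t \neq c$, but $P^{(t-1)}_{ec} > A$, then the $\starcond$-condition fails and $P^{(t)}_{ec} = P^{(t-1)}_{ec}$ is left unchanged. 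Again $1/S^{(t)}_{ec} \geq 1/S^{(t-1)}_{ec}$ lets us conclude the bound from induction.
\end{enumerate}

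The main point driving all four cases is the same simple observation: the actual $P$-value is always dominated by the ``unconditional'' scaled value $P^{(t-1)}_{ec}/(1-P^{(t-1)}_{e_tc})$, since whenever the algorithm skips a scaling step, the corresponding factor $S^{(t)}_{ec}$ still shrinks by $(1-P^{(t-1)}_{e_tc}) \leq 1$ in the denominator of the target bound. There is no real obstacle here; the lemma is essentially an unrolling of the recursive update rule, and the only thing to verify carefully is that the definition of $S^{(t)}_{ec}$ accumulates the factor $(1-P^{(t-1)}_{e_tc})$ regardless of whether the algorithm actually performs the scale-up, which is exactly why the bound is stated as an inequality rather than an equality.
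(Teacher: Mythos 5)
Your proof is correct and follows essentially the same route as the paper's: induction on $t$ with the base case at $t=0$, the observation that $S^{(t)}_{ec}$ only shrinks over time, and the case split between $P_{ec}$ being unchanged, zeroed, or exactly scaled by $1/(1-P^{(t-1)}_{e_tc})$ in lockstep with $S_{ec}$. The paper merely groups the cases slightly more coarsely (unchanged vs.\ zero vs.\ scaled) rather than branching on the algorithm's control flow, but the substance is identical.
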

\begin{proof}
    We prove the inequality by induction over $t \geq 0$. For $t = 0$ we have $P^{(0)}_{ec} = (1 - \varepsilon)/\Delta$ and $S^{(0)}_{ec} = 1$, such that the inequality is clearly true. Now consider a time step $(t + 1) < T_e$, and assume $P^{(t)}_{ec} \leq (1-\varepsilon)/\Delta \cdot 1/(S^{(t)}_{ec})$. By the definition of $S^{(t)}_{ec}$, it is clear that $S^{(t)}_{ec} \geq S^{(t+1)}_{ec}$. Hence, if $P^{(t+1)}_{ec} = P^{(t)}_{ec}$, it immediately follows that $P^{(t+1)}_{ec} \leq (1 - \varepsilon)/\Delta \cdot 1/(S^{(t+1)}_{ec})$. Hence, we may assume that $(t+1)$ is a time step such that $P^{(t+1)}_{ec} \neq P^{(t)}_{ec}$. 
    
    Further, if $P^{(t+1)}_{ec} = 0$ the inequality to be proven follows immediately, so we consider the remaining case, which is that $P^{(t+1)}_{ec}$ is scaled up at this time step. This implies that the arriving edge $e_{t+1}$ at time $t$ is such that $e_{t+1} \in U_e$ with arriving time $T_{e_{t+1}} = t+1$. Hence:
    \begin{equation*}
        P^{(t+1)}_{ec} = \frac{P^{(t)}_{ec}}{1 - P^{(t)}_{e_{t+1}c}} \ \ \text{ and  } \ \ 
        S^{(t+1)}_{ec} = S^{(t)}_{ec} \cdot \left( 1 - P^{(t)}_{e_{t+1}c}\right).
    \end{equation*}
    These two identities, together with the assumption by induction that $P^{(t)}_{ec} \leq (1 - \varepsilon)/\Delta \cdot 1/(S^{(t)}_{ec})$, immediately imply the desired inequality that $P^{(t+1)}_{ec} \leq (1 - \varepsilon)/\Delta \cdot 1/(S^{(t+1)}_{ec})$.
\end{proof}
Now we show that the scaling factor $S_{ec}$ for edge $e = \{u,v\}$ can be decomposed into scaling factors for each vertex $w\in \{u,v\}$.
\begin{lemma} \label{lemma:relating_S_to_Q}
    Consider instances in which edge $e$ arrives at time $t_e$ and has neighborhood given by $U_e$. We have:
    \begin{equation*}
        S^{(t_e-1)}_{ec} = \left(1 - Q^{(t_e-1)}_{U_uc} \right) \cdot \left(1 - Q^{(t_e-1)}_{U_vc} \right).
    \end{equation*}
\end{lemma}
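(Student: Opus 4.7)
The plan is to establish the identity by reducing both sides to the same product over $U_e$ of terms $(1 - P^{(t_f-1)}_{fc})$. Unpacking the definitions, the right-hand side factors cleanly over $U_u$ and $U_v$ via a short telescoping calculation, and the left-hand side is, by definition, a product over $U_e = U_u \sqcup U_v$.

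First, I would establish the key algebraic identity that for any $w \in \{u,v\}$,
\[
1 - Q^{(t_e-1)}_{U_wc} \;=\; \prod_{f \in U_w}\bigl(1 - P^{(t_f-1)}_{fc}\bigr).
\]
To see this, order the edges of $U_w$ by arrival time as $f_1, f_2, \ldots, f_k$ (so $t_{f_1} < t_{f_2} < \cdots < t_{f_k}$), and abbreviate $p_i := P^{(t_{f_i}-1)}_{f_ic}$. Expanding the definition of $R$ and $Q$ gives
\[
Q^{(t_e-1)}_{U_wc} \;=\; \sum_{i=1}^{k} R^{(t_{f_i}-1)}_{f_ic} \;=\; \sum_{i=1}^{k} p_i \prod_{j<i}(1-p_j),
\]
since for $f = f_i$ only the factors corresponding to $f_j$ with $t_{f_j} \leq t_{f_i} - 1$ (i.e., $j<i$) survive the constraint in $R^{(t_{f_i}-1)}_{f_ic}$. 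A straightforward induction on $k$ then shows
\[
\sum_{i=1}^{k} p_i \prod_{j<i}(1-p_j) \;=\; 1 - \prod_{i=1}^{k}(1-p_i),
\]
which is precisely the claimed identity for $1 - Q^{(t_e-1)}_{U_wc}$.

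Next, I would observe that $U_u$ and $U_v$ partition $U_e$. Indeed, every $(f, t_f) \in U_e$ satisfies $f \cap e \neq \emptyset$ and $t_f < t_e$; since $f \neq e$ (otherwise $t_f = t_e$), the edge $f$ shares exactly one endpoint with $e$, placing it in exactly one of $U_u$ or $U_v$. Combining this disjoint-union decomposition with the identity above yields
\[
\bigl(1 - Q^{(t_e-1)}_{U_uc}\bigr)\bigl(1 - Q^{(t_e-1)}_{U_vc}\bigr) \;=\; \prod_{f \in U_u}\bigl(1 - P^{(t_f-1)}_{fc}\bigr) \cdot \prod_{f \in U_v}\bigl(1 - P^{(t_f-1)}_{fc}\bigr) \;=\; \prod_{f \in U_e}\bigl(1 - P^{(t_f-1)}_{fc}\bigr),
\]
which equals $S^{(t_e-1)}_{ec}$ by \Cref{def:introducing_Q}, finishing the proof.

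There is no genuine obstacle: the only mildly tricky step is the telescoping identity for $1 - \prod_{i}(1-p_i)$, and even this is completely routine. The main conceptual point being exploited is that the ``reverts'' built into the definition of $R^{(t)}_{fc}$ (the multiplication by $\prod_{g \in U_w, t_g \leq t}(1 - P^{(t_g-1)}_{gc})$) are exactly what is needed to make the sum $Q_{U_wc}$ telescope into a clean product, so that $S_{ec}$ factors across the two endpoints $u$ and $v$.
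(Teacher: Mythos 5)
Your proof is correct and follows essentially the same route as the paper: both reduce the claim to the per-vertex identity $1 - Q^{(t_e-1)}_{U_wc} = \prod_{f \in U_w}(1 - P^{(t_f-1)}_{fc})$, prove it by the standard telescoping induction on the edges of $U_w$ ordered by arrival time, and then multiply over $w \in \{u,v\}$ using the disjoint decomposition $U_e = U_u \sqcup U_v$. No gaps.
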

\begin{proof}
    Since $U_e = U_u \cup U_v$ and $U_u \cap U_v = \emptyset$, it suffices to prove that for any fixed $w \in \{u,v\}$:
    \begin{equation} \label{eq:relating_S_to_Q_splitting_prod}
        \prod_{\substack{f \in U_w \\ t_f \leq t}} \left( 1 - P^{(t_f-1)}_{fc} \right) = 1 - Q^{(t_e-1)}_{U_wc}.
    \end{equation}
    Order all edges $f_1, f_2,\dots,f_k \in U_w$ by their arrival times $t_{f_1}<t_{f_2}<\dots,t_{f_k} < t_e$. We prove by induction on $0 \leq j \leq k$ that:
    \begin{equation} \label{eq:induction_for_relating_S_to_Q}
        \prod_{i = 1}^j \left( 1 - P^{(t_{f_i}-1)}_{f_ic} \right) = 1 -\sum_{i=1}^j R^{(t_{f_i}-1)}_{f_ic}.
    \end{equation}
    Note that for $j = k$ this implies \eqref{eq:relating_S_to_Q_splitting_prod} and thus the statement of the lemma. For $j = 0$ the equality is clear, such that we proceed with the induction step $j \to j+1$. We use \cref{def:introducing_Q} and the fact that $\{g \in U_w : t_g < t_{f_{j+1}}\} = \{f_1,\dots,f_j\}$ to obtain that:
    \begin{align*}
        \prod_{i = 1}^{j+1} \left( 1 - P^{(t_{f_i}-1)}_{f_ic} \right) &= \prod_{i = 1}^{j} \left( 1 - P^{(t_{f_i}-1)}_{f_ic} \right) \cdot \left( 1 - \frac{R^{(t_{f_{j+1}}-1)}_{f_{j+1}c}}{\prod_{\substack{g \in U_w \\ T_g < T_{f_{j+1}} }} \left( 1 - P^{(t_{g}-1)}_{gc} \right)} \right)\\
        &=\prod_{i = 1}^{j} \left( 1 - P^{(t_{f_i}-1)}_{f_ic} \right) \cdot \left( 1 - \frac{R^{(t_{f_{j+1}}-1)}_{f_{j+1}c}}{\prod_{i=1}^j \left( 1 - P^{(t_{f_i}-1)}_{f_ic} \right)} \right) \\
        &= \prod_{i = 1}^{j} \left( 1 - P^{(t_{f_i}-1)}_{f_ic} \right) - R^{(t_{f_{j+1}}-1)}_{f_{j+1}c}.
    \end{align*}
    Applying the induction hypothesis for the product on the right hand side gives \eqref{eq:relating_S_to_Q_splitting_prod} as desired, and the statement of the lemma follows.
\end{proof}

\begin{proof}[Proof of \cref{corollary:characterization_of_bad_colors}.]
    From \cref{lemma:ub_on_P_in_terms_of_S} and \cref{lemma:relating_S_to_Q}, we have:
    \begin{equation*}
        P^{(t)}_{ec} \leq \frac{1 - \varepsilon}{\Delta} \cdot \frac{1}{\left(1 - Q^{(t_e-1)}_{U_uc}\right) \left(1 - Q^{(t_e-1)}_{U_vc}\right)}.
    \end{equation*}
    Under the assumed upper bound on $Q^{(t_e-1)}_{U_wc}$, for both $w \in \{u,v\}$, this can in turn be upper bounded by:
    \begin{equation*}
        \frac{1 - \varepsilon}{\Delta} \cdot \frac{1}{\left( \frac{\varepsilon}{2}\right)^2} = \frac{4 \cdot (1 - \varepsilon)}{\varepsilon^2 \Delta} \leq \frac{4}{\varepsilon^2 \Delta} \leq A.
    \end{equation*}
    This shows that $P^{(t)}_{ec} \leq A$ and thus $c \notin B^{(t)}_e$.
\end{proof}

\paragraph{Preliminary Conclusions.} Recall that we aim to prove \cref{lemma:main_lemma}, i.e., to prove a (high probability) $2\varepsilon^5 \Delta$ upper bound on the cardinality of the set of bad colors $B^{(t)}_{e} = \{c \in \Calg : P^{(t)}_{ec} > A\}$. \cref{corollary:characterization_of_bad_colors} reduces the task of proving a color is \emph{not} bad to the task of upper bounding $Q^{(t_e-1)}_{U_wc}$ for both vertices $w \in \{u,v\}$. In the next section we proceed to deal with this task.

\subsubsection{Upper Bound on the Scaling}  \label{sec:arguing_main_lemma_part2}

To summarize the work done so far, we fixed a potential future edge $e := \{u,v\}$, its arrival time $t_e$, and its \emph{potential neighborhood} $U_e$ (as defined in \cref{def:potential_neighborhood}). Then, we showed in \cref{corollary:characterization_of_bad_colors} that a color $c \in \Calg$ cannot be bad if $Q^{(t_e-1)}_{U_wc}$ is bounded from above for all $w \in \{u,v\}$. Proving \cref{lemma:main_lemma} requires showing that $|B^{(t)}_{f}| \leq 2 \varepsilon^5 \Delta$ with high probability for any time step $t < t_e$, where $B^{(t)}_{e} = \{c \in \Calg : P^{(t)}_{ec} > A\}$.

For the rest of this section, we additionally fix a particular $w \in \{u,v\}$. Further, we concentrate on a fixed subset $C \subseteq \Calg$ of $2 \varepsilon^5 \Delta$ colors, i.e., $|C| = 2 \varepsilon^5 \Delta$, and give an upper bound on the probability that \emph{all} of these colors are bad at some time step $t$. We extend the definition of $Q$ in the following way to account for the fixed subset $C$ of colors:
\begin{definition} \label{def:introducing_Q_extended_extended}
    For any $t < t_e$, $w \in \{u,v\}$ and for any subset of colors $C \subseteq \Calg$, let:
    \begin{equation*}
        Q^{(t)}_{U_wC} = \sum_{c \in C} Q^{(t)}_{U_wc}.
    \end{equation*}
\end{definition}
We will prove the following:
\begin{restatable}{lemma}{LemmaBoundsOnQ} \label{lemma:bounds_on_Q}
    Fix $w \in \{u,v\}$, and fix a subset of $ \varepsilon^5 \Delta$ colors $C \subseteq \Calg$. Then, we have:
    \begin{equation*}
        \Pr\left[ Q^{(T_e-1)}_{U_wC} \geq \varepsilon^5\Delta - \frac{\varepsilon^6\Delta}{2} \right] \leq \frac{1}{n^{10\Delta}}.
    \end{equation*}
\end{restatable}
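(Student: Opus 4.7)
Since $R_{fc}^{(0)}=(1-\varepsilon)/\Delta$ (the product in its definition is empty at $t=0$), the initial value satisfies
\[
Q^{(0)}_{U_wC} \;=\; |U_w|\,|C|\,\tfrac{1-\varepsilon}{\Delta} \;\le\; (1-\varepsilon)\,\varepsilon^5\Delta \;=\; \varepsilon^5\Delta-\varepsilon^6\Delta,
\]
so the conclusion is equivalent to bounding the probability of an upward deviation of at least $\varepsilon^6\Delta/2$ from $Q^{(0)}_{U_wC}$ by $1/n^{10\Delta}$.

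The first step is to show that each $R_{fc}^{(t)}$, and therefore $Q_{U_wC}^{(t)}=\sum_{f\in U_w,\,c\in C}R_{fc}^{(t)}$, is a supermartingale with respect to $K_1,K_2,\ldots$, following the template of \cref{lemma:Z_is_supermartingale}. The case analysis splits on whether the arriving edge $e_{t+1}$ misses $f$, meets $f$ only at its non-$w$ endpoint (so only $P_{fc}$ is updated), or lies in $U_w$ (so both $P_{fc}$ and the product factor in $R_{fc}$ get updated), and on whether the $(\star)$-condition $P_{fc}^{(t)}\le A$ holds. A direct expectation computation in each case yields $\E[R_{fc}^{(t+1)}-R_{fc}^{(t)}\mid K_{\le t}]\le 0$, and summing over $(f,c)\in U_w\times C$ transfers the supermartingale property to $Q_{U_wC}^{(t)}$.

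The crux is the concentration inequality. A direct application of \cref{lemma:azuma} to $Q_{U_wC}$ is thwarted because when $e_{t+1}\in U_w$ and $K_{t+1}\in C$, many $R_{fc}$ reset to zero simultaneously, producing a step of size $\Theta(1)$. These large jumps, however, are all \emph{downward} and thus only help the upper tail. The plan is to build, via a Doob-type decomposition together with an analogue of the $Y$-process from \cref{lemma:Zlbhelperlemma1}, a martingale $\tilde M^{(t)}$ satisfying $Q_{U_wC}^{(t)}-Q_{U_wC}^{(0)}\le \tilde M^{(t)}$ whose per-step size is $O(A)$. The step-size bound rests on two observations: for $e_{t+1}$ meeting $U_w$ only at non-$w$ endpoints, the per-pair upward $R_{fc}$ increments are $O(A^2)$ and sum to $O(A)$ over $c\in C$ via $\sum_c P_{e_{t+1}c}^{(t)}\le 1$ (valid on the event $\text{Zgood}$ of \cref{corollary:Z_is_good_whp_always}); for $e_{t+1}\in U_w$, the Doob compensator cancels the deterministic portion of the resets, and the residual martingale fluctuations once again sum to $O(A)$ thanks to the same constraint together with the invariant $P_{fc}^{(t)}\le 2A$ from \cref{lemma:invariant_ub_on_P}.

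Given such a $\tilde M$, \cref{lemma:azuma} applied with step size $O(A)$, at most $T\le 2\Delta^2$ active time steps (arrivals in the $2$-hop of $w$), and deviation threshold $\lambda=\varepsilon^6\Delta/2$ gives an upper-tail bound $\exp\!\bigl(-\Omega(\lambda^2/(TA^2))\bigr)$. Substituting $A=c_A/(\varepsilon^2\Delta)$ and $\varepsilon^{16}\Delta=c_\varepsilon^{16}\ln n$ (mirroring the simplification in \cref{lemma:bounds_on_Z}) reduces this to $\exp(-\Omega(c_\varepsilon^{16}\Delta\ln n/c_A^2))$, which is comfortably at most $1/n^{10\Delta}$ with $c_\varepsilon=10$ and $c_A=4$. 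The main obstacle is therefore establishing the $O(A)$ step-size estimate for $\tilde M$ in the $e_{t+1}\in U_w$ case: the $\Theta(1)$-sized simultaneous resets of $R_{fc}$ must be absorbed almost entirely by the compensator, which requires a multi-color generalization of the ``always scale up'' construction of \cref{lemma:Zlbhelperlemma1} together with a color-by-color separation of the deterministic drift from the martingale noise.
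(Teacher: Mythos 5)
Your skeleton (initial value $Q^{(0)}_{U_wC}\le \varepsilon^5\Delta-\varepsilon^6\Delta$, supermartingale property, Azuma over $T\le 2\Delta^2$ active steps with step size $O(A)$, and the final parameter check) matches the paper's proof via \cref{lemma:Q_is_supermartingale}. The gap is in the one step you yourself flag as "the main obstacle'': your proposed mechanism for the arrivals $e_{t+1}\in U_w$ does not work. When such an edge arrives and samples $K_{t+1}=c^*$, the quantity $\sum_{f\in U_w} R^{(t)}_{fc^*}$ is wiped out; this is a \emph{random} event (it depends on which color is sampled), not a deterministic reset, so a Doob compensator cannot cancel it. The martingale residual at such a step is of order $\sum_{f\in U_w}R^{(t)}_{fc^*}-\sum_c P^{(t)}_{e_{t+1}c}\sum_f R^{(t)}_{fc}$, which can be as large as $\Theta(A\Delta)=\Theta(1/\varepsilon^2)$; the constraint $\sum_c P^{(t)}_{e_{t+1}c}\le 1$ controls a sum over colors for a fixed $f$, but here the jump is a sum over up to $\Delta$ edges $f\in U_w$ for a single color, and nothing better than $\sum_f R_{fc}\le 2A\Delta$ is available. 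Feeding step size $\Theta(A\Delta)$ at the up-to-$\Delta$ star-center arrivals into Azuma yields only $\exp(-\Theta(\varepsilon^{16}\Delta))=\exp(-\Theta(\ln n))$, hopelessly short of the required $n^{-10\Delta}$ needed to union bound over the $n^{\Theta(\Delta)}$ potential neighborhoods.

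The missing idea is structural and is already baked into \cref{def:introducing_Q}: the product $\prod_{g\in U_w,\,t_g\le t}(1-P^{(t_g-1)}_{gc})$ in $R^{(t)}_{fc}$ exactly offsets any scale-up of $P_{fc}$ caused by an arrival $g\in U_w$, so whenever $e_{t+1}$ is incident to $w$ one has $R^{(t+1)}_{fc}\le R^{(t)}_{fc}$ for every $f\in U_w$, deterministically. Hence the paper dominates $Q^{(t)}-Q^{(0)}$ by the supermartingale $\sum_{t'}(\delta Q)^{(t')}_{U_wC}$ that simply sets the increment to \emph{zero} at star-center arrivals (taking $G_{t+1}=\emptyset$ there) and only accumulates the increments from arrivals meeting $U_w$ at non-$w$ endpoints; each such arrival touches at most two edges of the star, giving step size $12A$. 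No Doob decomposition is needed for this one-sided bound. Two smaller remarks: you do not need (and should not invoke) the event $\mathrm{Zgood}$ of \cref{corollary:Z_is_good_whp_always} for the step-size estimate --- conditioning on a future-measurable event would jeopardize the Azuma application, and the needed inequality $\sum_c P^{(t)}_{e_{t+1}c}\le 1$ holds deterministically whenever any update occurs, since otherwise the edge is marked in \cref{line:mark_z_ge_1} and nothing changes; and your claim that each individual $R^{(t)}_{fc}$ is a supermartingale is fine but the useful statement is the domination by $\sum_{t'}(\delta Q)^{(t')}_{U_wC}$, which is where the bounded step size lives.
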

To avoid disrupting the flow of the argument we provide the proof of this lemma in \cref{sec:arguing_main_lemma_part3}. In the following we show how this last lemma allows to prove \cref{lemma:main_lemma}, by formalizing the union bound to which we alluded in \cref{sec:arguing_main_lemma_part1}:
\MainLemma*
\begin{proof}[Proof of \Cref{lemma:main_lemma}]
    We prove that, for any time step $t$:
    \begin{equation*}
        \Pr\left[ \left|B^{(t)}_{e}\right| > 2\varepsilon^5 \Delta \right] \leq \frac{1}{n^{100}}.
    \end{equation*}
    Let $\binom{\Calg}{2 \varepsilon^5 \Delta}$ be the set of all subsets of $2 \varepsilon^5 \Delta$ colors from $\Calg$. Note that:
    \begin{align*}
        \Pr\left[ \left|B^{(t)}_{e}\right| > 2 \varepsilon^5 \Delta \right] \leq \Pr\left[\exists C \in \binom{\Calg}{2 \varepsilon^5 \Delta} : C \subseteq B^{(t)}_e\right] 
    \end{align*}
    For $w \in \{u,v\}$, let $B^{(t)}_e(w) = \{e \in B^{(t)}_e : Q^{(t_e-1)}_{wc} > 1 - \frac{\varepsilon}{2}\}$. From \cref{corollary:characterization_of_bad_colors} it follows that $B^{(t)}_e \subseteq B^{(t)}_e(u) \cup B^{(t)}_e(v)$. We have that:
    \begin{align*}
        \Pr\left[\exists C \in \binom{\Calg}{2 \varepsilon^5 \Delta} : C \subseteq B^{(t)}_e\right] &\leq \Pr\left[\exists C \in \binom{\Calg}{\varepsilon^5\Delta} \ \exists w \in \{u,v\} \ \text{s.t.} \ C \subseteq B^{(t)}_e(w)\right] \\
        &\leq \Pr\left[\exists C \in \binom{\Calg}{\varepsilon^5\Delta} \ \exists w \in \{u,v\} \ \forall c \in C : Q^{(t_e-1)}_{wc} > 1 - \frac{\varepsilon}{2}\right] \\
        &\leq \Pr\left[\exists C \in \binom{\Calg}{\varepsilon^5\Delta} \ \exists w \in \{u,v\} \ \text{s.t.}\  \sum_{c \in C}Q^{(t_e-1)}_{wc} > \varepsilon^5\Delta - \frac{\varepsilon^6\Delta}{2}\right] \\
        &\leq \sum_{C \in \binom{\Calg}{\varepsilon^5 \Delta}}\sum_{w \in \{u,v\}} \Pr\left[ \sum_{c \in C}Q^{(t_e-1)}_{wc} > \varepsilon^5\Delta - \frac{\varepsilon^6\Delta}{2}\right].
    \end{align*}
    Let $\mathcal{U}_e = \{ \{(f,t_f) : f \cap e \neq \emptyset, t_f < t_e\} \}$ be the set of all possible potential neighborhoods $U_e$ of $e$. Recall that any $U_e \in \mathcal{U}_e$ induces a corresponding partition $U_e = U_u \cup U_v$ (we avoid heavier notation such as $U_u(e)$ and $U_v(e)$ for these subsets, but insist on the fact they depend on $e$ and are determined by $U_e$).

    Since, for any $w \in \{u,v\}$, $\sum_{c \in C} Q^{(t_e-1)}_{wc} \in \left\{Q^{(t_e-1)}_{U_wC} : U_e \in \mathcal{U}_e\right\}$, we have for any such $w$:
    \begin{align*}
        \Pr\left[ \sum_{c \in C}Q^{(t_e-1)}_{wc} > \varepsilon^5\Delta - \frac{\varepsilon^6\Delta}{2}\right] &\leq \Pr\left[ \exists U_e \in \mathcal{U}_e \ \text{s.t.} \ Q^{(t_e-1)}_{U_wC} > \varepsilon^5\Delta - \frac{\varepsilon^6\Delta}{2} \right] \\
        &\leq \sum_{U_e \in \mathcal{U}_e} \Pr\left[Q^{(t_e-1)}_{U_wC} > \varepsilon^5\Delta - \frac{\varepsilon^6\Delta}{2} \right].
    \end{align*}
    Combining all previous inequalities, we get:
    \begin{equation*}
        \Pr\left[ \left|B^{(t)}_{e}\right| > 2 \varepsilon^5 \Delta \right] \leq \sum_{C \in \binom{\Calg}{\varepsilon^5 \Delta}} \sum_{w \in \{u,v\}} \sum_{U_e \in \mathcal{U}_e} \Pr\left[Q^{(t_e-1)}_{U_wC} > \varepsilon^5\Delta - \frac{\varepsilon^6\Delta}{2} \right].
    \end{equation*}
    Upper bounding the relevant cardinalities:
    \begin{equation*}
        \left| \binom{\Calg}{\varepsilon^5 \Delta} \right| \cdot \left|\{u,v\}\right| \cdot \left| \mathcal{U}_e \right| \leq 2^{\Delta} \cdot 2 \cdot n^{7\Delta} \leq n^\Delta \cdot n^\Delta \cdot n^{7\Delta} = n^{9\Delta},
    \end{equation*}
    (recall that $|\mathcal{U}_e| \leq n^{7 \Delta}$ by the discussion after \cref{def:potential_neighborhood}) and using \cref{lemma:bounds_on_Q}, which holds for any $C$, $w$ and $U_e$, we get the desired bound:
    \begin{align*}
        \Pr\left[ \left|B^{(t)}_{e}\right| > 2 \varepsilon^5 \Delta \right] & \leq n^{9\Delta} \cdot \frac{1}{n^{10\Delta}} = \frac{1}{n^\Delta} \leq \frac{1}{n^{100}}. \qedhere
    \end{align*}
\end{proof}

\subsubsection{Proving Bounds on $Q$}  \label{sec:arguing_main_lemma_part3}

In this section we prove the required bounds on $Q$.
\LemmaBoundsOnQ*
This will follow quite easily by first noting that $Q$ behaves as a supermartingale (with some additional negative drift) over time, and then by applying Azuma's inequality:
\begin{lemma} \label{lemma:Q_is_supermartingale}
    The sequence of random variables $Q^{(0)}_{U_wC}, Q^{(1)}_{U_wC}, \dots$ is upper bounded by a supermartingale with respect to the sequence of random variables $K_1, K_2, \dots$. Furthermore, the step size of this supermartingale is bounded by $12A$. 
\end{lemma}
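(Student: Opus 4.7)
The proof mirrors the structure of the proof of \cref{lemma:Z_is_supermartingale}, but faces two new complications: (a) $Q^{(t)}_{U_wC}$ aggregates across many edges $f \in U_w$ and colors $c \in C$, and (b) when an edge $e_{t+1} \in U_w$ arrives, the product factor inside $R^{(t)}_{fc}$ shrinks simultaneously for every $f \in U_w$ with $t_f > t+1$, potentially causing a downward step in $Q$ far larger than $12A$. I plan to handle (a) by a per-$(f,c)$ expected-change computation in the spirit of \cref{lemma:Z_is_supermartingale}, and (b) by constructing an auxiliary process $\tilde Q \geq Q$ that simply ignores these (deterministically non-positive) drops.

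Fix $t$ and let $e_{t+1}$ be the arriving edge. I would split into three cases based on its relationship to $U_w$: (i) $e_{t+1}$ is marked in \cref{line:mark_z_ge_1}, or $e_{t+1}$ is not incident to any $f \in U_w$ with $t_f > t+1$ --- in both subcases $Q$ is unchanged; (ii) $w \notin e_{t+1}$ but $e_{t+1}$ is adjacent to some $f \in U_w$ (a ``2-hop'' arrival, with at most two such $f$, one per endpoint of $e_{t+1}$), affecting only the $P$-factor in the definition of $R^{(t)}_{fc}$; (iii) $e_{t+1} \in U_w$, affecting both the $P$-factor and the product factor of $R^{(t)}_{fc}$ for every $f \in U_w$ with $t_f > t+1$.

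In case (ii), following the template of \cref{lemma:Z_is_supermartingale}, I would expand $\E[R^{(t+1)}_{fc} - R^{(t)}_{fc} \mid K_{\leq t}]$ by splitting on whether $P^{(t)}_{fc} \leq A$: in the former subcase the probabilistic $P$-update is exactly the martingale rescaling, giving expected change $0$; in the latter, only $K_{t+1} = c$ alters $R_{fc}$ (dropping it to $0$), giving expected change $\leq 0$. Summing over $c \in C$ and the at most two affected edges $f$, and using $\sum_c P^{(t)}_{e_{t+1}c} \leq 1$ together with \cref{lemma:invariant_ub_on_P}, would yield both $\E[Q^{(t+1)}_{U_wC} - Q^{(t)}_{U_wC} \mid K_{\leq t}] \leq 0$ and $\left|Q^{(t+1)}_{U_wC} - Q^{(t)}_{U_wC}\right| \leq 12A$ in case (ii). In case (iii), the crucial algebraic observation is that the $P$-scaling factor $1/(1-P^{(t)}_{e_{t+1}c})$ and the new product factor $(1 - P^{(t)}_{e_{t+1}c})$ cancel exactly, so for $c \neq K_{t+1}$ with $P^{(t)}_{fc} \leq A$ we have $R^{(t+1)}_{fc} = R^{(t)}_{fc}$; in the remaining subcases ($c = K_{t+1}$, which zeroes $R_{fc}$, or $P^{(t)}_{fc} > A$, where only the product shrinks), $R^{(t+1)}_{fc} - R^{(t)}_{fc} \leq 0$. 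Thus $Q^{(t+1)}_{U_wC} - Q^{(t)}_{U_wC} \leq 0$ deterministically in case (iii).

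Finally, I would define $\tilde Q^{(0)}_{U_wC} := Q^{(0)}_{U_wC}$ and $\tilde Q^{(t+1)}_{U_wC} - \tilde Q^{(t)}_{U_wC} := \left(Q^{(t+1)}_{U_wC} - Q^{(t)}_{U_wC}\right) \cdot \mathds{1}[\text{case (ii) at step } t+1]$, so that $\tilde Q$ faithfully tracks $Q$'s changes in case (ii) but skips $Q$'s (almost-surely non-positive) drop in case (iii). Non-positivity of the skipped changes gives $\tilde Q^{(t)}_{U_wC} \geq Q^{(t)}_{U_wC}$ by telescoping; non-positivity of the retained expected step in case (ii) makes $\tilde Q$ a supermartingale; and its per-step absolute change is $\leq 12A$ by the case-(ii) step-size bound. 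The main obstacle is precisely case (iii): the downward step of $Q$ there is not $O(A)$ in magnitude, and the construction of $\tilde Q$ is exactly what lets us absorb this without losing either $\tilde Q \geq Q$ or the supermartingale property.
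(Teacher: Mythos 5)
Your proposal is correct and follows essentially the same route as the paper: the paper likewise isolates the (at most two) ``2-hop'' arrivals not incident to $w$, shows the expected change there is nonpositive with step size at most $12A$ via the \cref{lemma:Z_is_supermartingale} template, observes that arrivals in $U_w$ only decrease $R_{fc}$ (your exact-cancellation computation is the reason behind the paper's ``by definition of $R_{fc}$'' remark), and upper-bounds $Q$ by the auxiliary process accumulating only the 2-hop increments --- your $\tilde Q$ is precisely the paper's $Q^{(0)}_{U_wC}+\sum_{t'}(\delta Q)^{(t')}_{U_wC}$.
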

\begin{proof}
    For convenience, we recall the definition of $Q^{(t)}_{U_wC}$; we have:
    \begin{equation*}
        Q^{(t)}_{U_wC} = \sum_{c \in C} \sum_{(f,T_f) \in U_w} R^{(\min\{t, T_f-1\})}_{fc},
    \end{equation*}
    where the variables $R_{fc}$ are defined in \cref{def:introducing_Q}. 
    
    Fix $t \geq 0$ and consider the $(t+1)$-th time step, where edge $g := e_{t+1}$ arrives. We may assume $Z^{(t)}_{e_{t+1}} \leq 1$, since otherwise the edge is \emph{marked} in \cref{line:mark_z_ge_1} and $Q$ does not change. Assuming this, note that the only $R_{fc}$'s that change in this time step are those with edges $f\in U_w$ that share a vertex with edge $g$. In case $g$ is not incident to $w$, then let $G_{t+1} \subseteq U_w$ be the set of the (up to two) edges $f \in U_w$ incident to $g$. Otherwise, if $g$ is incident to $w$, we note that for edges $f\in U_w$ the values $R_{fc}$ will not increase by definition of $R_{fc}$, i.e., $R^{(t+1)}_{fc} \leq R^{(t)}_{fc}$, so in this case we will let $G_{t+1}=\emptyset$.
    Define:
    \begin{equation*}
        (\delta Q)^{(t + 1)}_{U_wC} = \sum_{c \in C} \sum_{f \in G_{t+1}} \left( R^{(t+1)}_{fc} - R^{(t)}_{fc} \right).
    \end{equation*}
    This is the change in $Q$ limited to the non-trivial edges in $G_{t+1}$. By the above discussion, it follows that:
    \begin{equation*}
        Q^{(t + 1)}_{U_wC} - Q^{(t)}_{U_wC} \leq (\delta Q)^{(t + 1)}_{U_wC}.
    \end{equation*}    
        
    We next prove that the sum $\sum_{t'} (\delta Q)^{(t')}_{U_wC}$ behaves as as supermartingale with bounded step size. Again consider the fixed $t \geq 0$ and the $(t+1)$-th time step. (Again, we may assume $Z^{(t)}_{e_{t+1}} \leq 1$, since otherwise the edge is \emph{marked} in \cref{line:mark_z_ge_1} and $Q$, $\delta Q$ both do not change). For each edge $f \in G_{t+1}$, we consider its set of bad edges $B^{(t)}_f$ (we will not use any non-trivial facts about this set). Further, let $K_{t+1} \in \Calg \cup \{\perp\}$ be the color sampled by \cref{alg:edge-coloring} in \cref{line:sample}; we have:
    \begin{align*}
        &\E\left[(\delta Q)^{(t + 1)}_{U_wC} \;\middle\vert\; K_1,\dots,K_t \right] \leq \sum_{f \in G_{t+1}} \sum_{c \in C} \E\left[ \left( R^{(t+1)}_{fc} - R^{(t)}_{fc}\right) \;\middle\vert\; K_1,\dots,K_t \right]\\
        &\leq \sum_{f \in G_{t+1}} \sum_{c \in C} \left[ \Pr\left[K_{t+1} = c\mid K_{\le t} \right] \cdot \left( -R^{(t)}_{fc} \right) + \mathds{1}[c \notin B^{(t)}_f] \cdot \left( 1 - \Pr\left[K_{t+1} = c \mid K_{\le t} \right] \right) \cdot \left( \frac{R^{(t)}_{fc}}{1 - P^{(t)}_{e_{t+1}c}} - R^{(t)}_{fc} \right) \right] \\
        &\leq \sum_{f \in G_{t+1} } \sum_{c \in C} \left[ P^{(t)}_{e_{t+1}c} \cdot \left( -R^{(t)}_{fc} \right) + \left( 1 - P^{(t)}_{e_{t+1}c} \right) \cdot \left( \frac{R^{(t)}_{fc}}{1 - P^{(t)}_{e_{t+1}c}} - R^{(t)}_{fc} \right) \right] \\
        &= 0.
    \end{align*}
    This proves that $\sum_{t'} (\delta Q)^{(t')}_{U_wC}$ is a supermartingale. It remains to upper bound its step size:
    \begin{align*}
        \left| (\delta Q)^{(t + 1)}_{U_wC} \right| &
        \leq \sum_{f \in G_t} \left|-\mathds{1}[K_{t+1} \in C] \cdot R^{(t)}_{fK_{t+1}} + \sum_{c \in C \setminus (B^{(t)}_f \cup \{K_{t+1}\})} \left( \frac{R^{(t)}_{fc}}{1 - P^{(t)}_{e_{t+1}c}} - R^{(t)}_{fc}\right)\right| \\
        &\leq \sum_{f \in G_t}\mathds{1}[K_{t+1} \neq \perp] \cdot R^{(t)}_{fK_{t+1}} + \sum_{f \in F}\sum_{c \in \Calg} \left( \frac{R^{(t)}_{fc}}{1 - P^{(t)}_{e_{t+1}c}} - R^{(t)}_{fc}\right) \\
        &\leq \sum_{f \in G_t}2A + \sum_{f \in F}\sum_{c \in \Calg} \frac{P^{(t)}_{fc} \cdot P^{(t)}_{e_{t+1}c}}{1 - P^{(t)}_{e_{t+1}c}} \\
        &\leq 4A + 4\cdot \sum_{c \in \Calg} P^{(t)}_{fc} \cdot P^{(t)}_{e_{t+1}c}.
    \end{align*}
    Above, we first make use of the triangle inequality, and increase the second sum by adding (non-negative) terms for all colors in $\Calg$. Then, we use the invariant from \cref{lemma:invariant_ub_on_P} and the fact that $R \leq P$ always (as a trivial consequence of \cref{def:introducing_Q}) to upper bound the first term by $2A$ and replace the $R$-s by $P$-s in the second term, and then use the fact fact that $|G_t| \leq 2$ and the invariant again to argue $1/(1 - P^{(t)}_{e_{t+1}c}) \leq 1/(1-2A) \leq 1/2$. This is very similar to the proof of \cref{lemma:Z_is_supermartingale}. We refer to that proof to argue that $\sum_{c \in \Calg} P^{(t)}_{fc} \cdot P^{(t)}_{e_{t+1}c} \leq 2A$, and obtain the desired upper bound of $12A$ for the step size.
\end{proof}

\begin{proof}[Proof of \cref{lemma:bounds_on_Q}]
    We have that, for any $(f,t_f) \in U_w$ and $c \in C$, $R^{(0)}_{fc} = \frac{1 - \varepsilon}{\Delta}$. This implies that:
    \begin{equation*}
        Q^{(0)}_{U_wC} = \sum_{c \in C} \sum_{(f,t_f) \in U_w} \frac{1 - \varepsilon}{\Delta} = |C| \cdot |U_w| \cdot \frac{1 - \varepsilon}{\Delta} \leq \varepsilon^5 \Delta \cdot \Delta \cdot \frac{1 - \varepsilon}{\Delta} = \varepsilon^5\Delta - \varepsilon^6 \Delta.
    \end{equation*}
    The inequality follows from the fact that $|U_w| \leq \Delta$, which holds because by assumption $\Delta$ is the maximum degree of the input graph, and thus we only consider potential neighborhoods of size at most $\Delta$ for the vertex $w$. The above upper bound on the initial value of the supermartingale $Q_{U_wC}$\footnote{While not strictly a supermartingale, the previous lemma allows us to treat it as such.} allows to rewrite events as follows:
    \begin{equation*}
        \Pr\left[ Q^{(t_e-1)}_{U_wC} \geq \varepsilon^5\Delta - \frac{\varepsilon^6\Delta}{2} \right] \leq \Pr\left[ Q^{(t_e-1)}_{U_wC} - Q^{(0)}_{U_wC} \geq \frac{\varepsilon^6\Delta}{2} \right].
    \end{equation*}
    To set up Azuma's inequality, let $\lambda = (\varepsilon^6 \Delta) / 2$ and let $T^{+}_e$ be $\left|\left\{t < t_e - 1: Q^{(t+1)}_{U_wC} \neq Q^{(t)}_{U_wC}\right\}\right|$. These are the number of time steps that lead to (non-zero) changes in the supermartingale $Q_{U_wC}$. Apply \cref{lemma:azuma} to obtain:
    \begin{align*}
        \Pr\left[ Q^{(t_e-1)}_{U_wC} - Q^{(0)}_{U_wC} \geq \frac{\varepsilon^6\Delta}{2} \right] \leq \exp\left( - \frac{\lambda^2}{2 (T^+_e-1) (12A)^2} \right).
    \end{align*}
    It suffices to argue that:
    \begin{equation*}
        \frac{\lambda^2}{2 (T^+_e-1) (12A)^2} \geq 10 \Delta \ln n.
    \end{equation*}
    We observe that $T^+_e \leq 2\Delta^2$. This is the case because, by \cref{def:introducing_Q}, $Q_{U_wC}$ can only change due to online arrivals of edges $g$ such that $g \cap f \neq \emptyset$ for some potential future arrival $(f,t_f) \in U_w$. Since $\Delta$ is the maximum degree in the graph, there are at most $\Delta + \Delta^2 \leq 2\Delta^2$ such edges.\footnote{In fact, by a slightly more careful inspection, one can see that $Q_{U_wC}$ only changes due to arriving edges in the (strict) $2$-hop neighborhood (w.r.t. $U_w$) of $w$, which can give a more precise upper bound of $\Delta^2$ edges.} Hence, it suffices to argue that:
    \begin{equation*}
        \frac{\lambda^2}{576 \cdot \Delta^2 A^2} \geq 10 \Delta \ln n.
    \end{equation*}
    Plugging in $\lambda = (\varepsilon^6 \Delta) / 2$ and $A = c_A/(\varepsilon^2 \Delta)$, the above inequality can be simplified to:
    \begin{equation*}
        \frac{1}{(c_A \cdot 48)^2} \cdot \varepsilon^{16}\Delta^2 \geq 10\Delta \ln n.
    \end{equation*}
    Recalling $\varepsilon = c_\varepsilon \cdot (\ln n / \Delta)^{1/16}$, this is equivalent to:
    \begin{equation*}
        \frac{1}{(c_A \cdot 48)^2} \cdot c^{16}_{\varepsilon} \cdot \Delta \ln n \geq 10\Delta \ln n,
    \end{equation*}
    which holds by the choices of the constants $c_A$ and $c_\varepsilon$.
\end{proof}

\section{Randomized (Oblivious) Online Edge Coloring}
\label{sec:oblivious}

In this section we prove that when $\Delta = \omega(\sqrt{\log n})$ one can edge-color a graph online using at most $(1+o(1))\Delta$  colors with a \emph{randomized} algorithm against an \emph{oblivious} adversary.

\begin{theorem}
\label{thm:oblivious_theorem}
There is a randomized online algorithm that edge-colors $n$-node graphs with known $n$ and $\Delta$, generated obliviously, using with high probability
$$
\Delta + O(\Delta^{15/16}\log^{1/32}n) \textrm{ colors}.
$$
\end{theorem}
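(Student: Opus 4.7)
The plan is to adapt the framework of \Cref{sec:adaptive} to oblivious adversaries via three modifications. First, retune the parameters to $\varepsilon := c_\varepsilon (\log n)^{1/32} / \Delta^{1/16}$ and $A := c_A / (\varepsilon^2 \Delta)$, so that the target additive error is $\varepsilon \Delta = \Theta(\Delta^{15/16} (\log n)^{1/32})$ and, crucially, $\varepsilon \Delta^2 = \omega(\log n)$ whenever $\Delta = \omega(\sqrt{\log n})$. Second, call a vertex $w$ \emph{bad} once at least $\varepsilon \Delta$ of its arrived incident edges $e$ had $Z_e^{(t_e-1)} \notin [1 - O(\varepsilon), 1]$. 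Third, modify \Cref{alg:edge-coloring} so that once $w$ becomes bad, every subsequently arriving edge incident to $w$ is assigned an arbitrary available color from $\Calg$ whenever one exists, rather than being sampled via the $P$-distribution and possibly sent to $\Cgreedy$. Modification (iii) prevents bad vertices from inflating $\Cgreedy$, at the cost of introducing a drift in the $Z$-supermartingales of edges incident to the other endpoint of such ``forced'' assignments, which must later be controlled.

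For each fixed $e \in F_0$, the sequence $Z_e^{(t)}$ remains a supermartingale with step size $O(A)$ exactly as in \Cref{lemma:Z_is_supermartingale}, so Azuma gives $\Pr[Z_e^{(t_e-1)} \notin [1 - O(\varepsilon), 1]] \leq \exp(-\Theta(\varepsilon^6 \Delta))$. Crucially, the oblivious adversary fixes the neighborhood of $e$ in advance, so the expensive union bound over the $n^{\Theta(\Delta)}$ candidate neighborhoods from \Cref{sec:arguing_main_lemma} is no longer needed. Summing over $e$'s $\leq 2\Delta$ incident edges then yields $\Pr[v \text{ is bad}] \leq \exp(-\Theta(\varepsilon^6 \Delta))$; however, for $\Delta = \Theta(\sqrt{\log n})$ this is $\omega(1/n)$, too weak to union bound directly over the $n$ vertices.

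The crux of the proof is therefore a \emph{Few Bad Vertices} lemma analogous to \Cref{lemma:main_lemma}: with probability $\geq 1 - 1/n^{100}$, every vertex $v$ has at most $O(\varepsilon \Delta)$ bad vertices in its $2$-hop neighborhood. The plan, mirroring \Cref{sec:arguing_main_lemma} but with vertices playing the role that colors played there, is to union bound over the $\binom{n}{\varepsilon \Delta} \leq n^{\varepsilon \Delta}$ candidate subsets $S$ of $\varepsilon \Delta$ vertices in $v$'s $2$-hop, and for each such $S$ construct a single aggregate supermartingale $Q_S$ — analogous to $Q^{(t)}_{U_w C}$ summed over $c \in C$ — whose deviation tracks $S$ collectively becoming bad. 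With step size $O(A)$ and an update count bounded by $O(\varepsilon \Delta^3)$, Azuma is engineered to deliver $\exp(-\Theta(\varepsilon \Delta^2))$ concentration on the event that all of $S$ becomes bad, matching the ``independent vertices'' heuristic from the technical overview; this beats $n^{-\varepsilon \Delta}$ exactly when $\Delta = \omega(\sqrt{\log n})$, closing the union bound. The main obstacle, and the principal technical departure from \Cref{sec:adaptive}, lies in actually \emph{proving} that $Q_S$ is a supermartingale: edges between two vertices of $S$ share a single random color choice, and modification (iii) couples drifts between vertices in $S$ whenever a third vertex in $S$ turns bad mid-execution. The per-vertex contributions of $Q_S$ must therefore be constructed via a carefully chosen conditioning scheme that isolates each $w \in S$'s contribution to a disjoint portion of the algorithm's random tape — the ``fairly complex selection of martingales'' flagged in the technical overview.

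With the Few Bad Vertices lemma in place, the remainder of the proof parallels \Cref{sec:analysis}: the bound on bad vertices in the $2$-hop of $v$ controls the drift in the $Z$-supermartingales of edges incident to $v$, so an analog of \Cref{lemma:decrease_of_deg_at_v} (via a coupling and Chernoff--Hoeffding bound) shows that at most $O(\varepsilon \Delta)$ edges per vertex are colored using $\Cgreedy$; applying \cref{lem:greedy} to this residual graph then yields a total palette of $\Delta + O(\varepsilon \Delta) = \Delta + O(\Delta^{15/16} (\log n)^{1/32})$ colors, as claimed.
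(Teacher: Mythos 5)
Your high-level plan is the right one and matches the paper's strategy in outline (retuned $\varepsilon$, bad vertices, forced $\Calg$-coloring at bad vertices, and aggregate concentration over groups of vertices to access $\exp(-\Theta(\mathrm{poly}(\varepsilon)\Delta^2))$ tail bounds), but three of the steps you gesture at do not go through as stated. First, the union-bound arithmetic in your key lemma is wrong: against $\binom{n}{\varepsilon\Delta}\le n^{\varepsilon\Delta}=\exp(\varepsilon\Delta\ln n)$ candidate subsets, a per-subset failure probability of $\exp(-\Theta(\varepsilon\Delta^2))$ only wins when $\Delta=\Omega(\log n)$, not $\omega(\sqrt{\log n})$ --- which would defeat the entire point of the theorem. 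The fix (and what the paper does in \cref{lemma:second_main_lemma}) is to union bound only over subsets of the fixed neighborhood $N(v)$, of which there are at most $\binom{\Delta}{\alpha\Delta}\le 2^{\Delta}$, so that the required per-subset bound is only $2^{-\Theta(\Delta)}n^{-\Theta(1)}$. Second, you correctly identify the coupling/step-size problem for the aggregate supermartingale $Q_S$ as ``the main obstacle,'' but your proposed resolution --- a conditioning scheme isolating each vertex's contribution to ``a disjoint portion of the random tape'' --- cannot work: a single color sample $K_t$ for an edge meeting two vertices of $S$ (or lying in the common $2$-hop) necessarily affects several per-vertex contributions simultaneously, so the randomness is genuinely shared. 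The paper's actual device is structural, not probabilistic: decompose the edge set $E_U$ incident to $U$ into $\Delta+1$ \emph{matchings} and prove concentration of $\sum_{e\in M}Z_{eC}$ per matching (\cref{lemma:matching_lemma}); for a matching, each arriving edge touches at most two summands, so the step size stays $O(A)$ instead of blowing up by a factor of $\Delta$ when $M$ contains a star. Without this (or an equivalent) idea, the Azuma application you sketch does not deliver the $\exp(-\Theta(\mathrm{poly}(\varepsilon)\Delta^2))$ bound: with step size $O(A)$ claimed for a sum over $\Theta(\varepsilon\Delta^2)$ edges that can share endpoints, the step size is actually $\Theta(\Delta A)$ in the worst case and the exponent collapses.

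Third, your accounting of $|\Cgreedy|$ is incomplete for vertices that \emph{do} turn bad. Your modification assigns such a vertex's later edges ``an arbitrary available color from $\Calg$ whenever one exists,'' but you never argue that one usually exists; if a bad vertex $v$ has $\Omega(\Delta)$ subsequent edges $e$ with $Z_e=0$ (no color simultaneously free at both endpoints), all of them go to $\Cgreedy$ and the palette bound fails. The paper needs a separate argument for this (\cref{lemma:bad_vertex_lemma}): once $v$ is bad it has $\ge 2c_K\varepsilon\Delta$ unused colors $C$, and the Matching Lemma applied to the color subset $C$ over the neighbors of $v$ shows that at most $\varepsilon\Delta$ future neighbors can have exhausted all of $C$. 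Relatedly, the paper must also cap how often the forced-coloring rule ``burns'' colors for other vertices (the \emph{dangerous}-vertex cutoff), which is what keeps the drift term in \cref{lemma:Zlbhelperlemma2_extended} bounded by $O(\alpha\Delta A)$; your proposal mentions a drift to be ``controlled later'' but provides no mechanism. These are not presentational gaps --- each is a place where the argument as written would fail quantitatively in the regime $\Delta=\Theta(\sqrt{\log n})$.
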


\paragraph{Challenges in Adapting \cref{alg:edge-coloring}.}
The algorithm is very similar to the one in \cref{sec:adaptive} (adaptive adversaries in the regime when $\Delta = \omega(\log n)$). The $\exp(-\poly(\eps)\Delta^2)$ concentration in \cref{sec:adaptive} was used to bound over $n^{\Theta(\Delta)}$ possible futures in the case of an adaptive adversary. In this section we show that similar $\exp(-\poly(\eps)\Delta^2)$ concentration can be used already when $\Delta \approx \sqrt{\log n}$ to union bound over $\poly(n)$ events when we have the weaker oblivious adversary.

Our algorithm must be modified due to the following fact: if the graph consists of $\approx n/\Delta$ stars, where each edge is colored with probability $1-\eps$ using $\Calg$ (the good case for the algorithm in \cref{sec:adaptive}), then one would unfortunately expect one of these stars to have $\approx \eps \Delta + \log n$ many $\Calg$-uncolored (i.e.~\emph{marked}) edges, but this is too much when $\Delta \approx \sqrt{\log n}$.

Fortunately, only few vertices will have so many $\Calg$-uncolored edges. In fact, it is not difficult to show that 
only a $\approx \frac{1}{2^{\Delta}}$-fraction of the vertices have more than $1000\eps \Delta$ such $\Calg$-uncolored edges, and these \emph{bad} vertices are reasonably spread out in the graph. 
In particular, we can prove that any vertex $v$ neighbors at most $\eps\Delta$ many bad vertices, with probability $1-\exp(-\poly(\eps)\Delta^2) \gg 1-\frac{1}{n^{100}}$. This would be true by standard Chernoff bounds if each vertex was independently bad with probability $\frac{1}{2^{\Delta}}$; as these events are \emph{not} independent we again rely on martingales to handle the correlations and prove this statement in our analysis.

Thus, we will modify the algorithm to handle these \emph{bad} vertices by insisting that their remaining edges must be colored using a color from $\Calg$ (if possible). Since these bad vertices are few and reasonably spread out in the graph, this modification does not affect the rest of the algorithm and analysis too much.

\paragraph{Stars vs Matchings.} Several of our proofs require arguing $\exp(-\poly(\eps)\Delta^2)$ concentration on certain events relating to (fixed) \emph{subsets} $M$ of $\poly(\Delta)$ edges. Indeed, when $\Delta \approx \sqrt{\log n}$, this would imply a desired ``with high probability in $n$'' concentration. In \cref{sec:adaptive}, martingale arguments allowed us to obtain such guarantees without needing to argue about the correlations explicitly, and our approach will be similar here. However, there is a technical difficulty which requires further care. Intuitively, the problem arises when trying to compute the step size of a supermartingale of the form $\sum_{e \in M} Z_e$. Depending on the structure of $M$, the step size might be large. Indeed, if the \emph{subset} $M$ includes a star of $\Delta$ edges, then any such arriving edge affects $\Omega(\Delta)$ other edges, resulting in a step size $\Omega(\Delta)$ times larger than ``usual''. This is too large for our purposes, and to circumvent the issue we will decompose $M$ into \emph{matchings}. If $M$ is a matching, an arriving edge $f$ can touch at most two edges in $M$, which allows to control the step size. 

More concretely, our proofs will require arguing that for any matching $M$, and any (reasonably large) subset of colors $C$, the quantity $\sum_{e \in M} \sum_{c \in C} P_{ec}$ behaves as a well concentrated martingale. This is formalized in \cref{lemma:matching_lemma}. The discussion preceding the lemma gives more details on our approach.

\subsection{Notation and Parameters} \label{sec:notation_assumptions_oblivious}

In \cref{sec:algorithm_oblivious} we introduce our algorithm (\cref{alg:edge-coloring-oblivious}), which is a modification of \cref{alg:edge-coloring} from \cref{sec:algorithm}. In general, we will make use of notation introduced previously in \cref{sec:algorithm} and \cref{sec:analysis}.
Throughout the rest of the paper we assume an oblivious adversary.

\paragraph{Redefining the Parameter $\varepsilon$:} We redefine $\varepsilon$ to be $\varepsilon := c_\varepsilon \cdot \left( \frac{\sqrt{\ln n}}{\Delta} \right)^{1/16}$ (note the $\sqrt{\ln}$ instead of $\ln n$ as in \cref{sec:adaptive}). Intuitively, this is necessary because we aim to use roughly $\varepsilon \Delta$ extra colors, and this quantity should be $o(\Delta)$ for $\Delta = \omega(\sqrt{\log n})$.
The other parameters and constants ($A,c_\eps, c_A$) from \cref{sec:adaptive} remain as before. For our analysis of the new algorithm, we also need a new parameter $\alpha$ and constant $c_K$ relating to \emph{bad} vertices. 

\begin{definition}[Parameters for Oblivious Case] \label{def:parameters-obliv}
    Define $\varepsilon := c_\varepsilon \cdot \left( \frac{\sqrt{\ln n}}{\Delta} \right)^{1/16}$ and $A := \frac{c_A}{\varepsilon^2 \Delta}$. Here, $c_\varepsilon := 10$ and $c_A := 4$ are the same constants as before. Additionally, let $c_K := 35c^2_A$ be a new constant,
    and let $\alpha := \varepsilon^3 / 100$. 
\end{definition}

Similar to \cref{sec:adaptive}, our target will be to show that the algorithm uses   $(1+O(\eps)) \Delta$ colors, but now with a smaller $\eps$ giving a total of $\Delta + O(\Delta^{15/16}\log^{1/32} n)$ colors.
    When $\Delta < (10c_{\eps})^{16} \sqrt{\ln n}$,
this is trivial, as the algorithm never uses more than $3\Delta$ colors. In the rest of this section we thus assume, without loss of generality, that $\Delta \ge (10c_{\eps})^{16} \sqrt{\ln n}$, implying that $\eps$ and $A$ are both small.

\begin{fact}
\label{fact:delta_large_enough-obliv}
    When $\Delta \ge (10c_{\eps})^{16} \sqrt{\ln n}$, then $\eps \le \frac{1}{10}$ and $A \le \frac{1}{10}$.
\end{fact}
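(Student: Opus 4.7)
The plan is to mirror the short proof of the analogous Fact~\ref{fact:delta_large_enough} from the adaptive section, which followed directly by substitution and elementary algebra. The only change now is that the quantity $\ln n$ in the adaptive parameter choice has been replaced by $\sqrt{\ln n}$, but the assumed lower bound on $\Delta$ has been rescaled to match, so the same two substitutions should go through.

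First I would bound $\eps$. By definition
\[
\eps^{16} \;=\; c_\eps^{16} \cdot \frac{\sqrt{\ln n}}{\Delta}.
\]
Plugging in the hypothesis $\Delta \ge (10 c_\eps)^{16}\sqrt{\ln n} = 10^{16} c_\eps^{16}\sqrt{\ln n}$ gives
\[
\eps^{16} \;\le\; \frac{c_\eps^{16}}{10^{16} c_\eps^{16}} \;=\; 10^{-16},
\]
so $\eps \le 1/10$.

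Next I would bound $A = c_A/(\eps^2\Delta)$. It suffices to show $\eps^2\Delta \ge 10 c_A = 40$. Writing $\eps^2\Delta = c_\eps^2\,(\sqrt{\ln n})^{1/8}\,\Delta^{7/8}$ and using $\sqrt{\ln n}\ge 1$ (which we may assume, as $n$ is large; otherwise the theorem is trivial) together with $\Delta \ge (10 c_\eps)^{16}$, we get
\[
\eps^2 \Delta \;\ge\; c_\eps^2 \cdot (10 c_\eps)^{16 \cdot 7/8} \;=\; c_\eps^{16}\cdot 10^{14} \;\gg\; 40 \;=\; 10c_A,
\]
and hence $A \le 1/10$. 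The only step that needs a moment of thought is confirming that the $7/8$ exponent on $\Delta$ produced by $A = c_A/(\eps^2\Delta)$ still leaves plenty of slack against the constant $10 c_A$; this is handled by the gigantic factor $10^{14}$, so there is no real obstacle. The proof is just routine algebra with the new parameter values.
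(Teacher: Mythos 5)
Your proof is correct, and since the paper states this Fact without a proof (it is the same routine substitution as in the adaptive section's Fact~\ref{fact:delta_large_enough}), your direct verification---$\eps^{16} = c_\eps^{16}\sqrt{\ln n}/\Delta \le 10^{-16}$ and $\eps^2\Delta = c_\eps^2(\ln n)^{1/16}\Delta^{7/8} \ge c_\eps^{16}\cdot 10^{14} \gg 10c_A$---is exactly the intended argument. No issues.
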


\paragraph{Bad Vertices.}
For any vertex $v \in V$, \cref{alg:edge-coloring-oblivious} introduces a new corresponding variable $\udeg(v)$, intuitively keeping track of the number of incident edges colored by the backup palette $\Cgreedy$. When the $\udeg$ of a vertex is too large, we call it $\bad$, and change our logic to handle future incident edges to it. We discuss this quantity in more detail after introducing the algorithm. Below, we define what it means for a vertex to be \emph{bad} and \emph{dangerous}. In the analysis we show that, with high probability, few vertices are bad (and they are spread out in the graph), while no vertices are dangerous.
\begin{definition}[Bad and Dangerous Vertices] \label{def:bad_vertices_etc}
    Consider a fixed instance of the online edge coloring problem, and let $v \in V$ be an arbitrary vertex. Let $\udeg^{(t)}(v)$ be the value of $\udeg(v)$ at time step $t$, during the execution of \cref{alg:edge-coloring-oblivious}. Vertex $v$ is \emph{bad} (at time step $t$), if $\udeg^{(t)}(v) \geq 2c_K \cdot  \varepsilon \Delta$. Else, $v$ is \emph{good}.

    Further, for any vertex $v \in V$ and time step $t$, let $\baddeg^{(t)}(v)$ be the number of neighbors $w$ of $v$ such that: the edge $f := \{v,w\}$ arrived at time $t_f$ and vertex $w$ was \emph{bad} at time step $t_f$. This is the number of neighbors of $v$, taken up to time step $t$, which were \emph{bad} when they were connected to $v$. 
    Vertex $v$ is \emph{dangerous} (at time $t$) if $\baddeg^{(t)}(v) \geq \alpha \Delta$.
\end{definition}

\subsection{The Algorithm} \label{sec:algorithm_oblivious}

\begin{algorithm}[ht!]
	\caption{Edge Coloring Algorithm Against {\color{green!50!black}Oblivious} Adversaries}
	\label{alg:edge-coloring-oblivious}
	\begin{algorithmic}[1]
        \Statex \underline{\smash{\textbf{Input:}}} Vertex set $V$ and maximum degree $\Delta\in \mathbb{Z}_{\geq 0}$ of the graph to arrive
        \Statex 
                      \textbf{\underline{Initialization:}} 
        $\C_{\mathrm{alg}} \leftarrow [\Delta]$, 
        $F_0 \leftarrow \binom{V}{2}$,
        {\color{green!50!black}$\eps \gets c_{\eps}\cdot(\frac{\sqrt{\ln n}}{\Delta})^{1/16}$},
        $A \gets \frac{c_{A}}{\eps^2\Delta}$, where $c_\varepsilon := 10$ and $c_A := 4$.
        \Statex\qquad\qquad\qquad\;\;\textbf{for each} $e \in F_0$ and $c \in \Calg$:\; Set $P^{(0)}_{ec} \leftarrow \frac{1 - \varepsilon}{\Delta}$.
        
        \Statex 
		\For{\textbf{each} online edge $e_t=\{u,v\}$ on arrival} 
            \State $F_t \leftarrow F_{t-1} \setminus \{e_t\}$.
            \State \textbf{for each} $f\in \binom{V}{2}$ and $c\in \Calg$: \; Set $P^{(t)}_{fc}\gets P^{(t-1)}_{fc}$. \Comment{May be overridden below if $f$ neighbors $e_t$}

\color{green!50!black}

            \If{$u$ or $v$ are \emph{bad} (see \cref{def:bad_vertices_etc})}
                \If{$P^{(t-1)}_{ce_t} = 0$ for all $c\in \Calg$, or either $u$ or $v$ are \emph{dangerous} (see \cref{def:bad_vertices_etc})}
                     Mark $e_t$.
                    \label{line:obliv:mark_bad}
                \Else 
                    \State Assign to $e_t$ an arbitrary color $c \in \Calg$, for which $P^{(t-1)}_{e_tc} > 0$.
                    \label{line:obliv:color_bad}
                    \For{\textbf{each} $f \in F_t$ such that $f \cap e_t \neq \emptyset$}
                        \State $P^{(t)}_{fc} \leftarrow 0$.
                        \label{line:obliv:burn}
                    \EndFor
                \EndIf
                \Else
                
            \color{black}
            
            \If{$\sum_{c\in \Calg} P^{(t - 1)}_{e_tc} > 1$}
                Mark $e_t$.
                \label{line:obliv:mark_z_ge_1}
            \Else
                \State Sample $K_t$ from $\Calg \cup \{\perp\}$ with probabilities $\left(P^{(t-1)}_{e_t1},\dots,P^{(t-1)}_{e_t\Delta}, 1 - \sum_{c\in \Calg}P^{(t-1)}_{e_tc}\right)$.
                \label{line:obliv:sample}
                \If{$K_t\in \Calg$} Assign color $K_t$ to $e_t$. \label{line:obliv:color}
                \Else\, Mark $e_t$. \label{line:obliv:mark_bot}
                \EndIf
                \For{\textbf{each} $f \in F_t$ such that $f \cap e_t \neq \emptyset$}
                    \State $P^{(t)}_{fK_t} \leftarrow 0$ \textbf{if} $K_t \in \Calg$
                        \Comment{Prevent $f$ from being colored $K_t$}
                    \label{line:obliv:zero}
                        \State $P^{(t)}_{fc} \leftarrow\frac{P^{(t-1)}_{fc}}{1 - P^{(t-1)}_{e_tc}}$
                        \textbf{for each}
                    $c\in \Calg\setminus\{ K_t\}$ \textbf{where} {\color{blue} $P^{(t-1)}_{fc} \leq A$ ($\star$)}
                    \label{line:obliv:scaleup}
                    \label{line:obliv:star}
                \EndFor
                
            \EndIf
                           \color{green!50!black}
                \If{$e_t$ marked in \cref{line:obliv:mark_z_ge_1} or \cref{line:obliv:mark_bot}}
                    \State Increment $\udeg(u)$ and $\udeg(v)$. \label{line:obliv:increment_badness}
                    \EndIf
            \color{black} 

            \EndIf
            \If{$e_t$ marked}
            \State Color $e_t$ using the greedy algorithm (\cref{lem:greedy}) with the separate palette $\Cgreedy$.
                \label{line:obliv:greedy}
            \EndIf
		\EndFor
	\end{algorithmic}
\end{algorithm}

The algorithm is broadly similar to \cref{alg:edge-coloring}, but contains some further logic depending on whether the arriving edge $e_t$ is incident to any \emph{bad} vertex. See \cref{alg:edge-coloring-oblivious} for the pseudo-code, with all additional logic and modifications (compared to \cref{alg:edge-coloring}) marked in {\color{green!50!black}green}.

Let us therefore first explain what \emph{bad} vertices are, intuitively. Recall that the algorithm \emph{marks} arriving edges (in \cref{line:obliv:mark_bot,line:obliv:mark_z_ge_1}), if $Z_{e_t} > 1$ or $K_{t} = \perp$ respectively, and these edges are forwarded to the emergency palette $\Cgreedy$. In the context of the new \cref{alg:edge-coloring-oblivious}, if a vertex $v$ is incident to such an edge, we increase its \emph{badness} (in \cref{line:obliv:increment_badness}), and we call it \emph{bad} as soon as its \emph{badness} exceeds $2c_K\varepsilon \Delta \approx 1000\eps \Delta$ (see \cref{def:bad_vertices_etc}).

The difference compared to \cref{alg:edge-coloring} is the additional logic treating the case in which the arriving edge is incident to (at least one) \emph{bad} vertex. If so, the edge is treated differently and, in particular, it might be \emph{marked} in \cref{line:obliv:mark_bad}. Importantly, edges \emph{marked} in \cref{line:obliv:mark_bad} do not increase the \emph{badness} of any vertex. However, our analysis will prove that this case is actually negligible (\cref{lemma:second_main_lemma,lemma:bad_vertex_lemma}), and so it turns out that the \emph{badness} is a very good approximation of the number of edges incident at $v$ that are forwarded to the emergency palette~$\Cgreedy$.

Let us now discuss more carefully the logic that handles \emph{bad} vertices. Assume first that neither $u$ nor $v$ (where $e_t := \{u,v\}$) are \emph{dangerous}. This means that neither of these vertices had more than $\alpha \Delta$ \emph{bad} neighbors up to this point (see \cref{def:bad_vertices_etc}). If so, the algorithm tries to color the edge $e_t$ using the palette $\Calg$. If $Z^{(t-1)}_{e_t} > 0$, this is possible, since there exists $c \in \Calg$ such that $P^{(t-1)}_{e_tc} > 0$, which is only possible if $c$ is available for both $u$ and $v$. In this case, such a color $c$ is picked arbitrarily in \cref{line:obliv:color_bad}, and then ``burned'' for all other potential future incident edges $f$ by setting $P_{fc}\gets 0$ in \cref{line:obliv:burn}. Importantly, we do not scale up the $P_{fc'}$ for colors $c'\neq c$, unlike what we do in \cref{line:obliv:scaleup}. If $Z^{(t-1)}_{e_t} = 0$, then this fix does not work and so the edge is \emph{marked} to be forwarded to the emergency palette $\Cgreedy$. Finally, if $u$ or $v$ are \emph{dangerous}, then the plan is abandoned right away, and the edge is immediately forwarded to $\Cgreedy$. This is done to guarantee that we do not ``burn'' too often colors from $\Calg$ in \cref{line:obliv:burn}.
Our analysis will show that: (i) likely there are no \emph{dangerous} vertices (\cref{lemma:second_main_lemma}), and (ii) once a vertex becomes \emph{bad}, the number of edges $e_t$ incident to it such that $Z^{(t-1)}_{e_t} = 0$ is negligible (\cref{lemma:bad_vertex_lemma}).

The following lemma formalizes the above ideas.

\subsection{Analysis} \label{sec:analysis_oblivious}

We begin by showing that \cref{alg:edge-coloring-oblivious} is a valid edge coloring algorithm:
\begin{lemma}
    \cref{alg:edge-coloring-oblivious} provides a valid edge coloring on any fixed input graph.
\end{lemma}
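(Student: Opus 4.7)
The argument closely mirrors that of \cref{lemma:valid_coloring}, with additional care taken for the new branches introduced for \emph{bad} and \emph{dangerous} vertices. The plan is to establish two pieces: (i) colors drawn from $\Calg$ form a proper partial edge coloring, and (ii) the edges forwarded to $\Cgreedy$ receive distinct colors from their neighbors via the correctness of greedy (\cref{lem:greedy}), which combined with $\Calg \cap \Cgreedy = \emptyset$ yields a valid global coloring.

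First, I would prove the invariant that once $P^{(t)}_{fc} = 0$ for some potential future edge $f \in F_t$ and color $c \in \Calg$, we have $P^{(t')}_{fc} = 0$ for all later times $t' \geq t$ (as long as $f \in F_{t'}$). This is a short induction over $t'$ that inspects every line where $P_{fc}$ is overwritten: the default copy at the start of each time step preserves it; \cref{line:obliv:zero} and \cref{line:obliv:burn} can only set values to $0$; and the scale-up in \cref{line:obliv:scaleup}, applied to the value $0$, still yields $0/(1 - P^{(t-1)}_{e_tc}) = 0$ (and we have $P^{(t-1)}_{e_tc} < 1$ by \cref{lemma:invariant_ub_on_P} combined with \cref{fact:delta_large_enough-obliv}).

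Second, I would show that whenever an edge $e_t$ is assigned a color $c \in \Calg$ in either \cref{line:obliv:color_bad} or \cref{line:obliv:color}, the algorithm immediately sets $P^{(t)}_{fc} = 0$ for every $f \in F_t$ incident to $e_t$ (in \cref{line:obliv:burn} or \cref{line:obliv:zero}, respectively). By the invariant above, $P^{(t'-1)}_{fc} = 0$ at every later time $t' > t$ at which such an $f$ might arrive. Consequently, when $f$ arrives, the color $c$ cannot be assigned to it: in the normal branch, \cref{line:obliv:sample} picks $c$ with probability $P^{(t'-1)}_{fc} = 0$; and in the bad-vertex branch, \cref{line:obliv:color_bad} explicitly restricts the assignment to colors $c'$ with $P^{(t'-1)}_{fc'} > 0$. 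Hence no two incident edges can share a color from $\Calg$.

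Third, edges sent to $\Cgreedy$ (marked in \cref{line:obliv:mark_bad}, \cref{line:obliv:mark_z_ge_1}, or \cref{line:obliv:mark_bot} and colored at \cref{line:obliv:greedy}) receive colors that are distinct from their neighbors in $\Cgreedy$ by the correctness of greedy edge coloring (\cref{lem:greedy}), and distinct from any neighboring $\Calg$-colored edge because $\Calg$ and $\Cgreedy$ are disjoint by construction. Combining the three pieces yields that \cref{alg:edge-coloring-oblivious} outputs a valid edge coloring. The statement is almost mechanical to verify; the only subtlety---and hence the main thing to watch---is making sure no scale-up step, and no ``burn'' step in the new bad-vertex branch, undoes a previously zeroed probability, which is precisely what the invariant in the first step rules out.
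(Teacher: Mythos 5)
Your proposal is correct and follows essentially the same route as the paper's proof: colors from $\Calg$ can only be assigned when the corresponding $P$-value is positive, and any neighboring edge previously colored $c$ zeroed that value out (via \cref{line:obliv:burn} or \cref{line:obliv:zero}), while $\Cgreedy$-colored edges are handled by the correctness of greedy and disjointness of the palettes. The only difference is that you make explicit the (easy) invariant that a zeroed $P$-value stays zero, which the paper leaves implicit.
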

\begin{proof}
    Fix a time step $t$. If the edge $e_t$ is marked, the algorithm guarantees the use of a valid color from the emergency palette $\Cgreedy$ when running greedy.

    It remains to argue the case in which $e_t$ is not marked. We note that a color $c$ can be assigned to $e_t$ only if $P^{(t-1)}_{e_tc} > 0$. Assume by contradiction that the color $c \in \Calg$ assigned to $e_t$ is not valid. Then, there exists an edge $f$, which arrived at a time $t' < t$ (before $e_t$), and which was assigned color $c \in \Calg$. Since $f$ was not marked (otherwise it would have received a color from $\Cgreedy$), either \cref{line:obliv:burn} or \ref{line:obliv:zero} were executed by the algorithm in the iteration corresponding to the arrival of $f$, setting $P^{(t')}_{e_t c}$ to $0$. In particular this implies $P^{(t-1)}_{e_t c} = 0$, which means that $c$ cannot be sampled on the arrival of $e_t$, contradiction.
\end{proof}

In the following we make the central claim of our analysis, which we call the ``Few Bad Vertices Lemma''. It asserts that, with high probability, no vertex is \emph{dangerous}, i.e., no vertex has more than a tiny amount of roughly $o(\Delta)$ \emph{bad} neighbors. This means that, for any \emph{good} vertex $v$, there will be few edges incident to $v$ for which we run the additional logic added to handle \emph{bad} vertices in \cref{alg:edge-coloring-oblivious}. This in turn implies, intuitively, that the number of edges per \emph{good} vertex forwarded to $\Cgreedy$ due to being marked in \cref{line:obliv:mark_bad} is negligible.
\begin{restatable}[Few Bad Vertices Lemma]{lemma}{SecondMainLemma} \label{lemma:second_main_lemma}
    With high probability in $n$, at all time steps $t$ during the execution of \cref{alg:edge-coloring}, no vertex $v \in V$ is \emph{dangerous}.
\end{restatable}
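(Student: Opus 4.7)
The plan is, for a fixed vertex $v$ and a fixed candidate set $W = \{w_1,\dots,w_{\alpha\Delta}\}$ of $\alpha\Delta$ of its potential neighbors, to show that the probability of the event
\begin{equation*}
\mathcal{E}_{v,W} \;:=\; \bigcap_{i=1}^{\alpha\Delta} \bigl\{\, w_i \text{ is \emph{bad} at the time the edge } \{v,w_i\} \text{ arrives}\bigr\}
\end{equation*}
is at most $\exp(-\Omega(\poly(\varepsilon)\Delta^2))$. Since $\Delta = \omega(\sqrt{\log n})$ implies $\log n = o(\Delta^2)$, such a double-exponential-in-$\Delta$ bound is small enough to absorb a union bound over the $\le n$ vertices $v$ and $\le \binom{n}{\alpha\Delta} \le \exp(O(\alpha\Delta\log n))$ subsets $W$. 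Because the adversary is oblivious, each vertex's actual neighborhood is fixed in advance; hence $v$ becoming dangerous at some time $t$ amounts to the existence of such a set $W$ in its real neighborhood for which $\mathcal{E}_{v,W}$ holds, and ruling out every $\mathcal{E}_{v,W}$ rules out every dangerous vertex.

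For a fixed $(v,W)$, I would next expand the event ``$w_i$ is bad'' into the existence of $\ge 2c_K\varepsilon\Delta$ incident edges that were marked in \cref{line:obliv:mark_bot} or \cref{line:obliv:mark_z_ge_1}. By the sampling rule of \cref{alg:edge-coloring-oblivious} and the coupling idea already used in \cref{lemma:decrease_of_deg_at_v}, each such marking is a low-probability event unless the corresponding $Z^{(\cdot)}_e$ has drifted out of $[1-O(\varepsilon),1]$. Thus $\mathcal{E}_{v,W}$ implies the existence of a large collection of incident witness edges across the $w_i$'s on which the $Z$-values collectively misbehave. From this collection I would extract one witness edge per $w_i$ and, exploiting the fact that the $w_i$'s are distinct vertices, refine the selection greedily into a matching $M$ of size $\Omega(\alpha\Delta)$, discarding at most a constant number of witnesses per $w_i$ to break local conflicts.

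The central step is then to invoke the Matching Lemma (\cref{lemma:matching_lemma}): it provides $\exp(-\Omega(\poly(\varepsilon)\Delta^2))$-tail concentration on $\sum_{e\in M}\sum_{c\in C} P^{(\cdot)}_{ec}$ for any fixed matching $M$ and color set $C$. The matching structure is essential, because any arriving edge is incident to at most two edges of $M$, so the per-arrival step size of the supermartingale $\sum_{e\in M} Z^{(\cdot)}_e$ remains $O(A)$ rather than the $\Omega(|M|\cdot A)$ that a star-based collection of witnesses would force. Combined with a lifted martingale analogous to the $Y$-process of \cref{def:lb_on_Z} (to handle the negative drift induced by the $\starcond$-condition), Azuma's inequality yields that $\sum_{e\in M} Z^{(\cdot)}_e$ is concentrated around $|M|(1-\varepsilon)$ to within $O(\varepsilon|M|)$ with failure probability $\exp(-\Omega(\poly(\varepsilon)\Delta^2))$, contradicting the $\Omega(\varepsilon|M|)$ collective deviation that $\mathcal{E}_{v,W}$ would require. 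The residual union bound over possible witness matchings (of cardinality at most $\Delta^{\alpha\Delta} = \exp(O(\alpha\Delta\log\Delta))$) is easily absorbed by the $\exp(-\Omega(\poly(\varepsilon)\Delta^2))$ slack.

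The main obstacle I expect is the Matching Lemma itself, which is exactly the ``careful and fairly complex selection of martingales'' alluded to in the technical overview. The subtlety is designing a matching-aware analogue of the processes $P$, $Z$, and $Y$ from \cref{sec:adaptive} whose step size depends only on $A$ (and not on $|M|$), even though an arriving edge can simultaneously scale up $P$-values for both of its (up to two) matched neighbors and can do so across all colors in $C$. Once that lemma is established, the Few Bad Vertices Lemma follows via the combinatorial reduction above: bad $w_i$'s yield witness matchings, those matchings are ruled out by the Matching Lemma for each fixed $(v,W,M)$, and the union bound over $(v, W, M)$ closes the argument.
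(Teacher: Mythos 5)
Your high-level architecture (fix $v$ and a set $W$ of $\alpha\Delta$ neighbors, kill the event that all of them turn bad, and lean on the Matching Lemma for concentration) matches the paper's, but the reduction you propose from ``all $w_i\in W$ are bad'' to ``there exists a large matching of edges whose $Z$-values misbehave'' has a genuine gap. A vertex $w_i$ becomes bad once $\udeg(w_i)\ge 2c_K\varepsilon\Delta$, and an edge whose $Z$-value is perfectly well-behaved, i.e.\ $Z\in[1-c_K\varepsilon,1]$, is still marked via the $\perp$-sample in \cref{line:obliv:mark_bot} with probability up to $c_K\varepsilon$. So the number of markings at $w_i$ has mean of order $\varepsilon\Delta$ even when every incident $Z$ stays in range, and badness is a moderate-deviation event of probability roughly $\exp(-\Theta(\varepsilon\Delta))$ that requires \emph{no} $Z$-misbehavior at all. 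Hence $\mathcal{E}_{v,W}$ does not yield the witness collection you want to extract, and the matching $M$ of misbehaving edges need not exist. The paper avoids this by bounding the \emph{aggregate} badness $\sum_{u\in U}\udeg(u)$ against the threshold $2c_K\varepsilon\alpha\Delta^2$ and splitting its increments into three cases: edges already touching a bad vertex (contribute nothing), edges with $Z\notin[1-c_K\varepsilon,1]$ (at most $2\varepsilon\alpha\Delta$ per matching by \cref{lemma:bounding_Z_per_matching}, which is where the Matching Lemma enters), and well-behaved edges, whose marking indicators are controlled by a sequential coupling and a Chernoff bound over \emph{all} $\Theta(\alpha\Delta^2)$ edges incident to $U$. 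It is this last aggregation over $\alpha\Delta^2$ edges --- absent from your sketch, which keeps only one witness per $w_i$, i.e.\ $\alpha\Delta$ edges --- that supplies the $\exp(-\Omega(\varepsilon\alpha\Delta^2))$ concentration for the dominant, well-behaved contribution.

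A second, smaller issue is your union bound over all $\binom{n}{\alpha\Delta}=\exp(O(\alpha\Delta\log n))$ subsets $W$, which you claim is absorbed by $\exp(-\Omega(\poly(\varepsilon)\Delta^2))$. In the critical regime where $\Delta$ is barely above $\sqrt{\log n}$ this fails: there $\alpha\Delta\log n\approx\alpha\Delta^3$, which dominates $\poly(\varepsilon)\Delta^2$ because $\alpha\Delta\to\infty$. The fix is the observation you yourself make but do not use: by obliviousness $N(v)$ is fixed, so it suffices to range over subsets of $N(v)$, of which there are only $\binom{\Delta}{\alpha\Delta}\le 2^{\Delta}$. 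This is exactly what the paper does, and it is why all of its failure probabilities are calibrated to the form $2^{-c\Delta}n^{-c}$ rather than $n^{-c\Delta}$.
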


We provide the proof of the lemma in the separate \cref{sec:arguing_second_main_lemma}. We continue by claiming the following (final) so-called ``Bad Vertex Lemma'' which will allow in the end to prove \cref{thm:oblivious_theorem}:
\begin{restatable}[Bad Vertex Lemma]{lemma}{BadVertexLemma} \label{lemma:bad_vertex_lemma}
    With high probability in $n$, for each vertex $v$ that turns bad at time step $t_0$, there are at most $\varepsilon \Delta$ neighbors $u$ of $v$, arriving after time step $t_0$, such that $Z^{(t_{e_t} - 1)}_{e_t} = 0$, where $e_t = \{u,v\}$ and $t_{e_t} > t_0$ is the time step at which $e_t$ arrives.
\end{restatable}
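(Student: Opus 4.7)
The plan is to mimic the overall strategy of the Few Bad Colors Lemma (\cref{lemma:main_lemma}) but adapted to the oblivious setting: fix a vertex $v$ together with an arbitrary candidate subset $M$ of $\varepsilon \Delta$ edges incident to $v$ arriving after the bad-time $t_0$, bound the probability that every $e \in M$ satisfies $Z_e^{(t_e - 1)} = 0$, and then union bound over the $\binom{\Delta}{\varepsilon \Delta} \leq 2^\Delta$ choices of $M$ and over the $n$ possible vertices $v$. For the resulting $\leq n \cdot 2^\Delta \leq \exp(O(\Delta + \log n))$ failure events to sum to sub-polynomial in $n$ (when $\Delta = \omega(\sqrt{\log n})$), it suffices to prove a per-$M$ bound of $\exp(-\Omega(\varepsilon^6 \Delta^2))$; this is $\exp(-\omega(\Delta + \log n))$ because $\varepsilon^6 \Delta^2 \geq \varepsilon^{16} \Delta^2 = c_\varepsilon^{16} \cdot \Delta \sqrt{\log n}$, which dominates $\Delta + \log n$ exactly in the regime $\Delta = \omega(\sqrt{\log n})$.

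To formalize this, I would introduce the ``frozen'' variables $\tilde Z_e^{(t)} := Z_e^{(\min\{t, t_e - 1\})}$ (equal to $Z_e^{(t)}$ while $e \in F_t$ and frozen to $Z_e^{(t_e-1)}$ thereafter) and the potential $\hat\Phi^{(t)} := \sum_{e \in M} \tilde Z_e^{(t)}$. Initially $\hat\Phi^{(0)} = |M|(1-\varepsilon) = \varepsilon \Delta (1-\varepsilon)$, whereas the failure event forces $\hat\Phi$ all the way down to $0$, a deviation of magnitude $\Omega(\varepsilon \Delta)$. Controlling the probability of such a drop via Azuma's inequality would complete the proof.

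The main obstacle is the star structure of $M$: since every edge in $M$ shares the center $v$, the arrival of another edge $\{v, w\} \notin M$ colored with some $c$ simultaneously zeroes $P_{ec}$ for all remaining $e \in M$, potentially causing $\hat\Phi$ to drop by up to $|M| \cdot 2A = \Theta(1/\varepsilon)$ in a single step---far too large for a direct Azuma bound (this is precisely the ``star vs.\ matching'' issue highlighted at the start of \cref{sec:oblivious}). To circumvent it, I would decompose the per-step changes of $\hat\Phi$ into two categories: \textbf{(i)} arrivals of edges incident to $v$ (edges $\{v,w\}\notin M$, as well as the edges of $M$ themselves), and \textbf{(ii)} arrivals of ``two-hop'' edges, i.e., edges incident to some leaf $u_e$ of $M$ but not to $v$. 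For category \textbf{(ii)}, because the leaves $\{u_e\}_{e \in M}$ are distinct, each such arrival affects at most a single $\tilde Z_e$, and the relevant aggregated contributions take exactly the form $\sum_{e' \in M'} \sum_{c \in C} P_{e'c}$ over matchings $M'$ of two-hop edges---precisely the quantity controlled by the Matching Lemma (\cref{lemma:matching_lemma}), giving a concentrated martingale of step size $O(A)$. For category \textbf{(i)}, I would exploit the algorithmic behavior in the bad branch of \cref{alg:edge-coloring-oblivious}: after $t_0$ the $P$-values of $v$-incident edges are never scaled up, only zeroed in \cref{line:obliv:burn}, so the category-\textbf{(i)} contribution to the drop of $\hat\Phi$ is a \emph{monotone} quantity bounded deterministically by $|M|$ times the number of distinct colors burned at $v$, itself at most $\Delta$.

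Putting these two pieces together, $\hat\Phi$ minus a monotone deterministic drift from category \textbf{(i)} is a supermartingale with step size $O(A)$ over $O(\Delta^2)$ relevant time steps. Applying Azuma with target deviation $\lambda = \Omega(\varepsilon \Delta)$ yields
\[
\Pr\!\left[\hat\Phi \text{ drops by } \Omega(\varepsilon \Delta)\right] \;\leq\; \exp\!\left(-\Omega\!\left(\tfrac{\varepsilon^2 \Delta^2}{\Delta^2 A^2}\right)\right) \;=\; \exp\!\left(-\Omega(\varepsilon^6 \Delta^2)\right),
\]
as targeted. The hardest step will be making the category-\textbf{(ii)} reduction to \cref{lemma:matching_lemma} fully rigorous: $\sum_{e \in M} \tilde Z_e$ is not itself a matching-sum, but its incremental changes decompose into an aggregate over time of matching-sums indexed by the arriving two-hop edge and its color, and tracking the $\starcond$-condition's negative drift across this decomposition will likely require defining an analog of the auxiliary martingale $Y$ from \cref{def:lb_on_Z} together with a counterpart of \cref{lemma:main_lemma} tailored to this star-sum setting.
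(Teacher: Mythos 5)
There is a genuine gap in your handling of the category-(i) drift, and it sits at the crux of the lemma. You bound the drop of $\hat\Phi$ caused by colors burned at the center $v$ ``deterministically by $|M|$ times the number of distinct colors burned at $v$.'' Quantitatively this bound is $|M|\cdot 2A\cdot \Delta = 2c_A\Delta/\varepsilon$ (each burned color removes a term $P_{ec}\le 2A$ from each of the $|M|=\varepsilon\Delta$ summands, and up to $\Delta$ colors can be burned at $v$), which vastly exceeds the initial value $\hat\Phi^{(0)}=\varepsilon\Delta(1-\varepsilon)$ you are trying to protect. Nothing in your argument rules out the generic behavior of a vertex all of whose $\Delta$ incident edges get colored from $\Calg$: then every color is eventually burned at $v$ and every subsequent incident edge genuinely has $Z_e=0$. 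So the $v$-side drift cannot be dismissed deterministically; the statement is only true \emph{because} $v$ is bad, and your proof never uses badness beyond fixing the time $t_0$.

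The missing idea is the one the paper's proof is built on: since $v$'s badness exceeds $2c_K\varepsilon\Delta$, at least that many incident edges were marked rather than colored from $\Calg$, so there exists a set $C\subseteq\Calg$ of $2c_K\varepsilon\Delta$ colors that are \emph{never} used at $v$. For $e=\{u,v\}$ and $c\in C$, the value $P_{ec}$ can only be zeroed from $u$'s side, so $Z_e^{(t_e-1)}=0$ forces the leaf $u$ to have exhausted all of $C$. This eliminates the category-(i) drift entirely (restricted to $C$ it is zero by construction) and converts the lemma into a statement about the leaves only: it is unlikely that $\varepsilon\Delta$ neighbors of $v$ all use up every color of $C$. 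That statement (\cref{lemma:helper_for_bad_vertex_lemma}) is then proved by tracking the number of $C$-colored edges incident to the leaf set $U$, decomposing those edges into $\Delta+1$ matchings so that \cref{lemma:matching_lemma} applies---your category-(ii) machinery reappears here, but applied to the leaves' own edge sets rather than to increments of a star-sum---together with a Chernoff bound and the no-dangerous-vertices event from \cref{lemma:second_main_lemma}. Note also that the final union bound must range over the choice of $C$ as well as of $U$ (an extra $2^{\Delta}$ factor), which your proposal omits because $C$ never appears in it.
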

The lemma has this particular name because it will allow to argue that, once a vertex becomes bad, most of its future arriving edges will be colored in \cref{line:obliv:color_bad} (which only works for arriving edges $e_t$ such that $Z^{(t_{e_t}-1)}_{e_t} > 0$), thus allowing to conclude that even \emph{bad} vertices will have few incident edges marked in \cref{line:obliv:mark_bad}. The proof of this lemma is provided in the separate \cref{sec:arguing_bad_vertex_lemma}.

Using the Few Bad Vertices \cref{lemma:second_main_lemma} and the Bad Vertex \cref{lemma:bad_vertex_lemma} we can prove that \cref{alg:edge-coloring-oblivious} successfully colors any input graph using few colors:
\begin{theorem}
    With probability at least $1 - 1/n$, \cref{alg:edge-coloring-oblivious} colors all edges of the input graph using at most
    \begin{equation*}
        O(\varepsilon \Delta + \alpha\Delta) = o(\Delta) \textrm{ colors from $\Cgreedy$.}
    \end{equation*}
\end{theorem}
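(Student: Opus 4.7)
The plan is to bound the maximum degree $\Delta'$ in the subgraph of marked edges, since by \cref{lem:greedy} we have $|\Cgreedy|\le 2\Delta'$. Fix an arbitrary vertex $v\in V$. Every marked edge incident to $v$ is marked at exactly one of three lines in \cref{alg:edge-coloring-oblivious}: \cref{line:obliv:mark_z_ge_1} (when $Z^{(t-1)}_{e_t}>1$), \cref{line:obliv:mark_bot} (when $K_t=\perp$), or \cref{line:obliv:mark_bad} (in the green branch for bad endpoints). We bound the contribution of each case separately, conditioning on the high-probability events provided by \cref{lemma:second_main_lemma,lemma:bad_vertex_lemma}.

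For the first two cases, the key observation is that any edge marked in \cref{line:obliv:mark_z_ge_1} or \cref{line:obliv:mark_bot} increments $\udeg(v)$ in \cref{line:obliv:increment_badness}. However, once $\udeg(v)\ge 2c_K\eps\Delta$, vertex $v$ becomes \emph{bad} (\cref{def:bad_vertices_etc}), and from that moment on, all subsequent edges incident to $v$ are routed through the green branch where they can only ever be marked in \cref{line:obliv:mark_bad}. Hence $\udeg(v)$ can exceed the threshold by at most one, so the total number of edges of $v$ marked in \cref{line:obliv:mark_z_ge_1} or \cref{line:obliv:mark_bot} is at most $2c_K\eps\Delta+1 = O(\eps\Delta)$, deterministically.

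For edges marked in \cref{line:obliv:mark_bad}, split the count according to which endpoint of $e_t=\{u,v\}$ triggered the green branch. By the Few Bad Vertices Lemma (\cref{lemma:second_main_lemma}), with high probability no vertex is ever dangerous, so the ``dangerous'' disjunct of the \emph{if}-condition never fires; we condition on this event. Sub-case (a), when $v$ itself is bad: let $t_0$ be the time at which $v$ turned bad. Then an edge $e_t=\{u,v\}$ arriving at $v$ after $t_0$ is marked in \cref{line:obliv:mark_bad} only when $P^{(t-1)}_{e_tc}=0$ for every $c\in\Calg$, i.e.\ $Z^{(t-1)}_{e_t}=0$. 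By the Bad Vertex Lemma (\cref{lemma:bad_vertex_lemma}), with high probability there are at most $\eps\Delta$ such neighbors of $v$. Sub-case (b), when $v$ is good but the other endpoint $u$ is bad at the time $e_t$ arrives: the number of such edges incident to $v$ is exactly $\baddeg(v)$, which is strictly less than $\alpha\Delta$ under the event that $v$ is not dangerous.

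Summing the three contributions gives that, conditional on the events in \cref{lemma:second_main_lemma,lemma:bad_vertex_lemma}, every vertex has at most $O(\eps\Delta)+\eps\Delta+\alpha\Delta = O(\eps\Delta+\alpha\Delta)$ marked incident edges. A union bound over these events (each holding with probability at least $1-1/\poly(n)$) yields $\Delta'=O(\eps\Delta+\alpha\Delta)$ with probability at least $1-1/n$, and hence $|\Cgreedy|\le 2\Delta'=O(\eps\Delta+\alpha\Delta)=o(\Delta)$ by the parameter choices in \cref{def:parameters-obliv}. The main technical obstacle of the theorem is really buried inside the two auxiliary lemmas (proved in the indicated subsections); once those are in hand, the proof is the clean case analysis above, and the only subtlety is the careful attribution of each marked edge either to the $\udeg$ counter (before $v$ goes bad) or to a bad neighbor contributing to $\baddeg(v)$.
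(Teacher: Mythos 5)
Your proposal is correct and follows essentially the same route as the paper's proof: bound the degree of each vertex in the marked subgraph by charging the greedily-handled edges either to the $\udeg$ counter (deterministically at most $O(\eps\Delta)$ before the vertex turns bad), to bad neighbors via the Few Bad Vertices Lemma (at most $\alpha\Delta$), or to $Z=0$ edges after the vertex turns bad via the Bad Vertex Lemma (at most $\eps\Delta$), then union bound. The paper organizes the case analysis by time (while good / after bad) rather than by marking line, but the substance, the lemmas invoked, and the resulting bounds are identical.
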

\begin{proof}
    Let $G' \subseteq G$ be the subgraph of the input graph $G$ which contains all marked edges. It suffices to prove that the maximum degree of $G'$ is, with high probability in $n$, bounded by $O(\varepsilon \Delta + \alpha\Delta)$. We argue that, for each fixed vertex $v \in V$, its degree in $G'$ is, with high probability in $n$, bounded by this quantity. (Then we obtain the desired statement by a union bound over all $n$ vertices.)

    First, we claim that, for any time step $t$ at which $v$ is \emph{good}, $v$ has at most $2c_K\varepsilon \Delta + \alpha \Delta$ incident marked edges. Indeed, the edges incident to $v$ up to time step $t$ are marked either at \cref{line:obliv:mark_bot} or \ref{line:obliv:mark_z_ge_1} in \cref{alg:edge-coloring-oblivious}, which increases the \emph{badness} of $v$, or at \cref{line:obliv:mark_bad} in \cref{alg:edge-coloring-oblivious}. The former case can occur at most $2c_K\varepsilon \Delta$ times, since $v$ is \emph{good} and thus by assumption has bounded \emph{badness} (\cref{def:bad_vertices_etc}). The latter case can occur only for edges connecting $v$ to already \emph{bad} neighbors, and by \cref{lemma:second_main_lemma} we know there are, with high probability in $n$, at most $\alpha \Delta$ such edges incident to $v$.

    Second, we claim that, once a vertex turns \emph{bad}, there can be (with high probability in $n$) at most $O(\varepsilon \Delta)$ further marked edges incident to it. To see this, first note that edges incident to a \emph{bad} vertex can only be marked in \cref{line:obliv:mark_bad} of \cref{alg:edge-coloring-oblivious}. Furthermore, we know (\cref{lemma:second_main_lemma}) that, with high probability, no vertex in the input graph is ever \emph{dangerous}, such that these marked edges $e_t$ must have $Z^{(t-1)}_{e_t} = 0$ for \cref{line:obliv:mark_bad} to be reachable. \cref{lemma:bad_vertex_lemma} implies that there are at most $O(\varepsilon \Delta)$ such edges incident at $v$, which proves the claim.

    To sum up, we proved (with high probability) that, while a vertex is \emph{good}, it is incident to at most $2c_K\varepsilon\Delta + \alpha\Delta$ marked edges and, once it becomes \emph{bad}, this number can increase by at most $O(\varepsilon \Delta)$. Thus, the statement of the theorem follows.
\end{proof}

\subsubsection{Adaptations of Previous Results for \cref{alg:edge-coloring-oblivious}}

Before presenting the proofs of \cref{lemma:second_main_lemma} and \cref{lemma:bad_vertex_lemma}, our first goal is to recover or extend several of the results already proven in \cref{sec:adaptive}. We recall that the $\Delta+o(\Delta)$-coloring guarantee proven for \cref{alg:edge-coloring} holds under the assumption that $\Delta = \omega(\log n)$. However, several useful properties proven in the analysis of \cref{alg:edge-coloring} in \cref{sec:analysis} hold for \cref{alg:edge-coloring-oblivious} even under the weaker hypothesis that $\Delta = \omega(\sqrt{\log n})$, in the case of an oblivious (instead of adaptive) adversary. We discuss in this section those properties which will later be relevant in the analysis.

First, we note that \cref{lemma:invariant_ub_on_P} continues to hold trivially. Next, recall the Few Bad Colors \cref{lemma:main_lemma} from the analysis of \cref{alg:edge-coloring} (recall that $\varepsilon$ was defined differently in that context):
\MainLemma*
The crucial part of the proof (in \cref{sec:arguing_main_lemma}) of the above lemma was arguing that $n^{\Theta(\Delta)}$ many martingales were each concentrated enough with probability $1-\exp(-\poly(\eps)\Delta^2)$. In the adaptive adversary setting, we used the fact that $\Delta = \Omega(\log n)$ to argue that this is enough for taking a union bound over all these $n^{\Theta(\Delta)}$ many martingales, representing the potential future neighborhoods the adaptive adversary could play.

In the oblivious setting, we can use the same analysis for the martingales and again obtain that each is concentrated with probability at least $1-\exp(-\poly(\eps)\Delta^2)$. The advantage is that, unlike the adaptive adversary, there is only one fixed future we need to argue about for the oblivious adversary. Thus, in the oblivious setting, it suffices to argue that only $O(n^2)$ many such martingales are concentrated (one for each edge), much fewer than the $n^{\Theta(\Delta)}$ many in \cref{sec:arguing_main_lemma}.
This means that $\Delta = \Omega(\sqrt{\log n})$ is already enough for the necessary concentration, in the oblivious setting.

We thus claim that the same Few Bad Colors Lemma holds in the new oblivious setting, using \cref{alg:edge-coloring-oblivious} and the different parametrization of $\varepsilon$. In fact, here we prove slightly stronger probability bounds than ``with  high probability in $n$'' so that we are allowed to union-bound over $2^{O(\Delta)}\cdot \poly(n)$ many events later.
\begin{restatable}[Few Bad Colors for Oblivious Setting]{lemma}{MainLemmaOblivious} \label{lemma:main_lemma_oblivious}
    Consider running \cref{alg:edge-coloring-oblivious} in the aforementioned oblivious adversary setting. With probability at least $1-2^{-100\Delta}n^{-100}$, for all time steps $t$ such that $e \in F_t$, we have:
    \begin{equation*}
        \left| B^{(t)}_e \right| \leq 2\varepsilon^5 \Delta,
    \end{equation*}
    where $B^{(t)}_e = \left\{c \in \Calg: P^{(t)}_{ec} > A \right\}$ is the set of \emph{bad} colors w.r.t.\ $e$ at time $t$.
\end{restatable}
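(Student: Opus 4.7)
The plan is to adapt the proof of \Cref{lemma:main_lemma} almost verbatim, exploiting the structural advantage of the oblivious setting: the graph---and in particular the neighborhood $U_e$ of every edge $e$---is fixed in advance, so no union bound over the $n^{\Theta(\Delta)}$ possible \emph{potential neighborhoods} $\mathcal{U}_e$ is needed. This saves precisely the $n^{\Theta(\Delta)}$ factor that \Cref{sec:arguing_main_lemma} had to absorb, and it is exactly this saving that lets us afford the weaker per-event concentration obtained with the new, smaller $\varepsilon = c_\varepsilon(\sqrt{\ln n}/\Delta)^{1/16}$.

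First, I would verify that the preparatory machinery of \Cref{sec:arguing_main_lemma}---\Cref{def:introducing_Q,def:introducing_Q_extended_extended}, \Cref{lemma:ub_on_P_in_terms_of_S,lemma:relating_S_to_Q,corollary:characterization_of_bad_colors}, and the supermartingale-plus-step-size statement of \Cref{lemma:Q_is_supermartingale}---carries over to \Cref{alg:edge-coloring-oblivious}. The only new behavior is the green-highlighted bad-vertex branch, in which, upon arrival of an edge $e_t$ incident to a bad endpoint, neighboring $P$-values are either unchanged (if $e_t$ is marked) or zeroed on a single arbitrary color (with no scaling up). Since this branch strictly decreases the neighboring $P$-values, it preserves the invariant $P^{(t)}_{fc} \leq 2A$ of \Cref{lemma:invariant_ub_on_P}, it preserves (as an inequality) the bound on $P^{(t)}_{ec}$ in \Cref{lemma:ub_on_P_in_terms_of_S}, and it only strengthens the supermartingale property of $Q^{(t)}_{U_wC}$ while affecting at most one color and at most two edges $f \in U_w$, so the $12A$ step-size bound of \Cref{lemma:Q_is_supermartingale} also survives.

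Second, I would apply Azuma's inequality exactly as in \Cref{lemma:bounds_on_Q}. With the new $\varepsilon^{16} = c_\varepsilon^{16}\sqrt{\ln n}/\Delta$, the concentration exponent becomes $\Omega(\varepsilon^{16}\Delta^2) = \Omega(c_\varepsilon^{16}\sqrt{\ln n}\,\Delta)$, so for each fixed edge $e$, subset $C \in \binom{\Calg}{\varepsilon^5\Delta}$, and endpoint $w \in \{u,v\}$,
\[
\Pr\!\left[Q^{(t_e-1)}_{U_wC} \geq \varepsilon^5\Delta - \tfrac{\varepsilon^6\Delta}{2}\right] \leq \exp\!\left(-\Omega(c_\varepsilon^{16}\sqrt{\ln n}\,\Delta)\right).
\]
Union bounding over $\leq n^2$ edges $e$, $\leq 2^\Delta$ subsets $C$, and $2$ endpoints $w$ yields the claim once this exponent dominates $101\Delta\ln 2 + 102\ln n + \ln 2$. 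This splits into two checks: $\sqrt{\ln n}\,\Delta = \Omega(\Delta)$, which holds once $\sqrt{\ln n}$ exceeds an absolute constant (guaranteed in the regime $\varepsilon \leq 1/10$ of \Cref{fact:delta_large_enough-obliv}); and $\sqrt{\ln n}\,\Delta = \Omega(\ln n)$, which is exactly the standing assumption $\Delta \geq (10c_\varepsilon)^{16}\sqrt{\ln n}$. Enlarging $c_\varepsilon$ if necessary matches the implicit constants.

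The main obstacle is the bookkeeping in the first step: one must check that the \emph{deterministic} bad-vertex updates do not break the supermartingale or step-size guarantees that the analysis of \Cref{sec:arguing_main_lemma} derives from the \emph{random} sample-and-scale rule. Intuitively this is easy---the bad-vertex branch only decreases neighboring $P$-values---but making it rigorous requires re-examining the conditional-expectation computation in the proof of \Cref{lemma:Q_is_supermartingale} in the two additional sub-cases (``mark'' and ``color-and-burn'') and re-deriving the step-size bound there.
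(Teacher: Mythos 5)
Your proposal is correct and follows essentially the same route as the paper's proof sketch: reuse the machinery of \Cref{sec:arguing_main_lemma} (in particular \Cref{lemma:ub_on_P_in_terms_of_S,lemma:relating_S_to_Q,corollary:characterization_of_bad_colors} and the supermartingale bound on $Q_{U_wC}$), exploit the fixed neighborhood to drop the $n^{\Theta(\Delta)}$ union bound over potential futures, and re-run the Azuma computation with the new $\varepsilon$ so that $\varepsilon^{16}\Delta^2 = c_\varepsilon^{16}\Delta\sqrt{\ln n}$ dominates $\Theta(\Delta + \ln n)$ under the standing assumption $\Delta \geq (10c_\varepsilon)^{16}\sqrt{\ln n}$. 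The paper packages the concentration step as an explicit sub-lemma (\Cref{lemma:bounds_on_Q_oblivious}) and is terser about why the bad-vertex branch of \Cref{alg:edge-coloring-oblivious} is harmless, whereas you spell out that this branch only (deterministically) decreases $P$-values and hence preserves the invariants, the supermartingale property, and the step-size bound --- a point worth making explicit, and your treatment of it is sound.
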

\begin{proof}[Proof Sketch]
    We need to address the changes caused by the different definition of $\varepsilon$, the fact that the adversary is now oblivious, the change of \cref{alg:edge-coloring-oblivious} compared to \cref{alg:edge-coloring}, and the fact that $\Delta = \omega(\sqrt{\log n})$ instead of the previous stronger assumption. By inspecting the proof of the Few Bad Colors Lemma in \cref{sec:arguing_main_lemma}, keeping in mind the stated differences, we note that \cref{lemma:ub_on_P_in_terms_of_S,,lemma:relating_S_to_Q,corollary:characterization_of_bad_colors} remain valid.
    
    Since the neighborhood $U_e$ of an edge $e$ is now fixed (because we consider oblivious adversaries instead of adaptive ones; the following is just for the analysis and the algorithm itself does not need knowledge of the fixed $U_e$), we now claim that to obtain \cref{lemma:main_lemma_oblivious} it suffices to prove the following analogon of \cref{lemma:bounds_on_Q}:
    \begin{lemma} \label{lemma:bounds_on_Q_oblivious}
        Let $U_e = U_u \cup U_v$ be the \emph{neighborhood} for edge $e$ in a fixed instance of the online edge coloring problem. Let $w \in \{u,v\}$, and consider a subset of $2 \varepsilon^5 \Delta$ colors $C \subseteq \Calg$. Then, we have:
        \begin{equation*}
            \Pr\left[ Q^{(t_e-1)}_{U_wC} \geq \varepsilon^5\Delta - \frac{\varepsilon^6\Delta}{2} \right] \leq \frac{1}{2^{110\Delta}n^{110}}.
        \end{equation*}
    \end{lemma}
    To see why proving this sub-lemma suffices, we mimic the proof of \cref{lemma:main_lemma} from \cref{sec:arguing_main_lemma_part2}. Let $\binom{\Calg}{\varepsilon^5 \Delta}$ be the set of all subsets of $\varepsilon^5 \Delta$ colors from $\Calg$. Then, one can see that:
    \begin{align*}
        \Pr\left[ \left|B^{(t)}_{e}\right| > 2 \varepsilon^5 \Delta \right] \leq \sum_{C \in \binom{\Calg}{\varepsilon^5 \Delta}}\sum_{w \in \{u,v\}} \Pr\left[ \sum_{c \in C}Q^{(t_e-1)}_{U_wc} > \varepsilon^5\Delta - \frac{\varepsilon^6\Delta}{2}\right].
    \end{align*}
    Upper bounding the relevant cardinalities:
    \begin{equation*}
        \left| \binom{\Calg}{\varepsilon^5 \Delta} \right| \cdot \left|\{u,v\}\right| \leq 2^{\Delta} \cdot 2
    \end{equation*}
    and using \cref{lemma:bounds_on_Q_oblivious}, which holds for any $C$ and $w$, we obtain that, as desired:
    \begin{align*}
        \Pr\left[ \left|B^{(t)}_{e}\right| > 2 \varepsilon^5 \Delta \right] & \leq \frac{1}{2^{100\Delta}n^{100}}. \qedhere
    \end{align*}
\end{proof}
\begin{proof}[Proof of \cref{lemma:bounds_on_Q_oblivious}]
    Similarly to the proof of \cref{lemma:bounds_on_Q}, using the fact that $Q_{U_wC}$ is (upper bounded by) a supermartingale, it turns out that it suffices to argue:
    \begin{equation*}
        \frac{1}{(c_A \cdot 48)^2} \cdot \varepsilon^{16}\Delta^2 \geq 110\ln n + 110\Delta.
    \end{equation*}
    Recalling $\varepsilon = c_\varepsilon \cdot (\sqrt{\ln n} / \Delta)^{1/16}$, this is equivalent to:
    \begin{equation*}
        \frac{1}{(c_A \cdot 48)^2} \cdot c^{16}_{\varepsilon} \cdot \Delta \sqrt{\ln n} \geq 110\ln n +110\Delta,
    \end{equation*}
    which holds by the choices of the constants $c_A$, $c_\varepsilon$, and the fact that $\Delta \geq \sqrt{\ln n}$.
\end{proof}

In the following we provide extensions of \cref{lemma:Zlbhelperlemma1} and \cref{lemma:Zlbhelperlemma2} from the previous analysis in \cref{sec:analysis}, which will be useful for the later analysis, in particular \cref{lemma:matching_lemma}. We include these preliminary lemmas in this section to avoid cluttering future arguments. Before stating the lemmas, we need some further notations and definitions, and in particular to adapt \cref{def:lb_on_Z} from \cref{sec:analysis}.

In the following, fix an instance of the online edge coloring problem, and an edge $e = \{u,v\}$ with arrival time $t_e$. Consider the edges $\delta(e)$ that intersect with $e$, and which arrive before $e$.
\begin{definition} \label{def:classifying_edges}
    We partition the edges of $\delta(e)$ into three subsets $\delta(e)_{\text{good}}$, $\delta(e)_{\text{bad}}$ and $\delta(e)_{\text{rest}}$, as follows:
    \begin{itemize}
        \item It holds that $f \in \delta(e)_{\text{good}}$ if, at the time of arrival, none of the two vertices to which $f$ is incident are \emph{bad}.
        \item It holds that $f \in \delta(e)_{\text{bad}}$ if $f$ is incident to a bad vertex on arrival, but this bad vertex is not $u$ or $v$.
        \item It holds that $f \in \delta(e)_{\text{rest}}$ if $f$ is incident to a bad vertex on arrival, and this vertex is $u$ or $v$.
    \end{itemize}
\end{definition}
Further, we extend slightly the definition of $Z_e$:
\begin{definition} \label{def:Z_extension}
    Fix an arbitrary subset $C \subseteq \Calg$. Define, for any time step $t$, $Z^{(t)}_{eC} = \sum_{c \in C} P^{(t)}_{eC}$.
\end{definition}
Finally, we extend \cref{def:lb_on_Z} from \cref{sec:analysis} to define the variables $Y_{eC}$ associated to $Z_{eC}$:
\begin{definition} \label{def:lb_on_Z_extended}
For each $f\in F_{0}$, color $c\in \Calg$, and time step $t$, let
\begin{equation*}
\overline{P^{(t)}_{fc}}
=
\begin{cases}
P^{(t-1)}_{fc} / (1-P^{(t-1)}_{e_tc}) & \text{if in \Cref{line:obliv:scaleup} no assignment to $P^{(t)}_{fc}$ is made because of ${\color{blue}P^{(t-1)}_{fc} > A\quad (\star)}$}\\
P^{(t)}_{fc} & \text{otherwise.}
\end{cases}
\end{equation*}
    Define also, for any fixed subset $C \subseteq \Calg$:
    \begin{equation*}
        \overline{Z^{(t)}_{eC}} = \sum_{c \in \C} \overline{P^{(t)}_{ec}} \text{ \ \ and \ \ } Y^{(t)}_{eC} 
        = \sum_{t'=1}^t \mathds{1}[e_{t'} \in \delta(e)_{\text{good}}] \cdot \left( \overline{Z^{(t')}_{eC}} - Z^{(t'-1)}_{eC} \right).
    \end{equation*}
\end{definition}
As before in the case of \cref{def:lb_on_Z}, the variables $\overline{P^{(t)}_{ec}}$ ar per-time-step alternative versions of $P^{(t)}_{ec}$ in which the condition $\starcond$  which potentially limits the scaling is ignored. Since this scaling step is only done for edges which are not incident to any \emph{bad} vertex, we also modify the definition of the previous $Y_e$ accordingly, to only consider the sum over such arrivals. Another change is given by the slight generalization to sums defined over arbitrary subsets of colors.

We are now ready to provide the claimed extensions of \cref{lemma:Zlbhelperlemma1} and \cref{lemma:Zlbhelperlemma2}:
\begin{lemma} \label{lemma:Zlbhelperlemma1_extended}
     The sequence of random variables $Y^{(0)}_{eC}, Y^{(1)}_{eC},\dots$ is a martingale with respect to the sequence of random variables $K_1,K_2,\dots$ Furthermore, the step size is bounded by $6A$.
\end{lemma}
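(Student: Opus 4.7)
The plan is to mirror the proof of \cref{lemma:Zlbhelperlemma1}, accounting for the two modifications in \cref{def:lb_on_Z_extended}: the restriction of the color sum to $C \subseteq \Calg$, and the indicator $\mathds{1}[e_{t'} \in \delta(e)_{\text{good}}]$ that suppresses the increment whenever the arriving edge either fails to intersect $e$, arrives after $e$, or is incident to a bad vertex on arrival. Fix $t \geq 0$ and consider the $(t+1)$-th time step. When $e_{t+1} \notin \delta(e)_{\text{good}}$, the indicator makes the increment zero and there is nothing to check. When $e_{t+1} \in \delta(e)_{\text{good}}$, by \cref{def:classifying_edges} neither endpoint of $e_{t+1}$ is bad at time $t+1$, so \cref{alg:edge-coloring-oblivious} executes the ``non-bad'' else-branch, whose $P$-updates coincide with those of \cref{alg:edge-coloring}. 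Moreover, if $Z^{(t)}_{e_{t+1}} > 1$ the edge is marked in \cref{line:obliv:mark_z_ge_1} without any $P$-value changing, so $\overline{Z^{(t+1)}_{eC}} = Z^{(t)}_{eC}$ and the increment again vanishes. The only substantive case is thus $e_{t+1} \in \delta(e)_{\text{good}}$ with $Z^{(t)}_{e_{t+1}} \leq 1$, in which sampling and the usual scale-ups happen.

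In this substantive case, the conditional expectation calculation from the proof of \cref{lemma:Zlbhelperlemma1} goes through verbatim up to replacing the outer sum over $\Calg$ with a sum over $C$, yielding
\begin{align*}
\E\left[Y^{(t+1)}_{eC} - Y^{(t)}_{eC} \;\middle\vert\; K_1,\dots,K_t\right] = \sum_{c \in C}\left[ P^{(t)}_{e_{t+1}c}\cdot\left(-P^{(t)}_{ec}\right) + \left(1-P^{(t)}_{e_{t+1}c}\right)\cdot\left(\frac{P^{(t)}_{ec}}{1-P^{(t)}_{e_{t+1}c}} - P^{(t)}_{ec}\right) \right] = 0.
\end{align*}
The crucial point is that $\overline{P^{(t+1)}_{ec}}$ applies the scale-up to every color in $C$, since the $(\star)$-condition is ignored by the definition of $\overline{P}$; this is what makes the increment a true martingale rather than merely a supermartingale. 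Summing the per-color identity over $c\in C$ then yields the desired martingale property.

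For the step size, I would bound $|Y^{(t+1)}_{eC} - Y^{(t)}_{eC}| \leq 6A$ by the same triangle-inequality argument used in \cref{lemma:Z_is_supermartingale,lemma:Zlbhelperlemma1}: the zeroed contribution is at most $P^{(t)}_{eK_{t+1}} \leq 2A$ by \cref{lemma:invariant_ub_on_P}, and the scale-up contributions sum (after the $1 - P^{(t)}_{e_{t+1}c} \geq 1/2$ bound from \cref{lemma:invariant_ub_on_P} and \cref{fact:delta_large_enough-obliv}) to at most $2\sum_{c \in \Calg} P^{(t)}_{ec}\cdot P^{(t)}_{e_{t+1}c} \leq 2 \cdot \max_c P^{(t)}_{ec} \cdot Z^{(t)}_{e_{t+1}} \leq 4A$, using $Z^{(t)}_{e_{t+1}} \leq 1$ in the substantive case; enlarging the inner sum from $C$ to $\Calg$ only increases it. The only real adaptation over the adaptive analysis — and the main thing to verify — is that the extra bad-vertex logic of \cref{alg:edge-coloring-oblivious} is \emph{exactly} what the indicator $\mathds{1}[e_{t'} \in \delta(e)_{\text{good}}]$ suppresses, so that in every contributing time step the algorithm reduces to \cref{alg:edge-coloring} and both the expectation and step-size computations carry over unchanged.
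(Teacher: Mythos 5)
Your proposal is correct and takes essentially the same route as the paper: the paper's proof is a two-sentence observation that $Y_{eC}$ only changes on arrivals in $\delta(e)_{\text{good}}$, for which \cref{alg:edge-coloring-oblivious} coincides with \cref{alg:edge-coloring}, so the proof of \cref{lemma:Zlbhelperlemma1} carries over, and restricting the sum to $C\subseteq\Calg$ changes nothing. You simply spell out these same two observations (plus the expectation and step-size computations) in full detail.
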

\begin{proof}
    Since $Y^{(t)}_{eC}$ can have non-zero changes only for arriving edges $e_{t'} \in \delta(e)_{\text{good}}$, for which \cref{alg:edge-coloring-oblivious} is identical to \cref{alg:edge-coloring}, we can reuse the previous analysis and thus the proof of \cref{lemma:Zlbhelperlemma1} in \cref{sec:proofs_Zlbhelperlemmas}. (Also, the fact that the sum is taken over a subset of the colors from $\Calg$ does not change anything.)
\end{proof}
\begin{lemma} \label{lemma:Zlbhelperlemma2_extended}
    We have, with probability at least $1-2^{-100\Delta}n^{-100}$, for any time step $t$:
    \begin{equation*}
        \frac{|C|}{\Delta}(1- \varepsilon) + Y^{(t)}_{eC} - 20 \cdot \varepsilon^5 \Delta^2 A^2 - (4\alpha\Delta + 2|\delta(e)_{\text{rest}}|)\cdot A \leq Z^{(t)}_{eC} \leq \frac{|C|}{\Delta}(1-\varepsilon) + Y^{(t)}_{eC}.
    \end{equation*}
\end{lemma}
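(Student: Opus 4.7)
The plan is to mimic the decomposition used in the proof of \cref{lemma:Zlbhelperlemma2}, but now splitting the telescoping sum by the type of arrival as in \cref{def:classifying_edges}. Write
\[
Z^{(t)}_{eC} = Z^{(0)}_{eC} + \sum_{t'=0}^{t-1}\left(Z^{(t'+1)}_{eC} - Z^{(t')}_{eC}\right),
\]
and split the sum according to whether $e_{t'+1}\in \delta(e)_{\text{good}}$, $\delta(e)_{\text{bad}}$, $\delta(e)_{\text{rest}}$, or none of these. For the good arrivals, as in the proof of \cref{lemma:Zlbhelperlemma2}, further decompose
\[
Z^{(t'+1)}_{eC} - Z^{(t')}_{eC} = \left(\overline{Z^{(t'+1)}_{eC}} - Z^{(t')}_{eC}\right) - \left(\overline{Z^{(t'+1)}_{eC}} - Z^{(t'+1)}_{eC}\right).
\]
By \cref{def:lb_on_Z_extended}, summing the first part over the good arrivals contributes exactly $Y^{(t)}_{eC}$; the second part is the non-negative ``good drift''. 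Together with $Z^{(0)}_{eC} = |C|(1-\varepsilon)/\Delta$, this identifies the three error terms in the statement as the good drift, the drop caused by $\delta(e)_{\text{bad}}$-arrivals, and the drop caused by $\delta(e)_{\text{rest}}$-arrivals.

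The upper bound is almost immediate. For good arrivals we have $Z^{(t'+1)}_{eC}\le \overline{Z^{(t'+1)}_{eC}}$ so the contribution is at most $\overline{Z^{(t'+1)}_{eC}} - Z^{(t')}_{eC}$, totaling $Y^{(t)}_{eC}$. For arrivals in $\delta(e)_{\text{bad}}\cup \delta(e)_{\text{rest}}$, the bad-vertex branch of \cref{alg:edge-coloring-oblivious} can only set some $P^{(t')}_{ec}$ to $0$ (\cref{line:obliv:burn}) or leave the $P$-values unchanged, so $Z^{(t'+1)}_{eC}\le Z^{(t')}_{eC}$. Adding up yields $Z^{(t)}_{eC}\le Z^{(0)}_{eC} + Y^{(t)}_{eC}$.

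For the lower bound, I would bound each of the three error contributions separately. The good-drift bound $20\varepsilon^5\Delta^2 A^2$ is inherited wholesale from the proof of \cref{lemma:Zlbhelperlemma2}: at each of the at most $2\Delta$ good arrivals, only colors in the set of bad colors $B^{(t')}_e$ contribute, and each such color contributes at most $P^{(t')}_{ec}P^{(t')}_{e_{t'+1}c}/(1-P^{(t')}_{e_{t'+1}c}) \le 5A^2$ by \cref{lemma:invariant_ub_on_P}; by the oblivious Few Bad Colors \cref{lemma:main_lemma_oblivious}, $|B^{(t')}_e|\le 2\varepsilon^5\Delta$ except with probability $\le 2^{-100\Delta}n^{-100}$. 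For arrivals in $\delta(e)_{\text{bad}}\cup\delta(e)_{\text{rest}}$ the algorithm burns at most one color of $e$ (if any), lowering the corresponding $P^{(t')}_{ec}$ from at most $2A$ to $0$; so each such arrival decreases $Z_{eC}$ by at most $2A$, giving total drop at most $2A(|\delta(e)_{\text{bad}}|+|\delta(e)_{\text{rest}}|)$.

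The remaining ingredient is to argue $|\delta(e)_{\text{bad}}|\le 2\alpha\Delta$. Each edge in $\delta(e)_{\text{bad}}$ connects $u$ or $v$ to a neighbor that was already bad on arrival, so $|\delta(e)_{\text{bad}}|\le \baddeg^{(t)}(u)+\baddeg^{(t)}(v)$; by the Few Bad Vertices \cref{lemma:second_main_lemma}, neither $u$ nor $v$ is dangerous with high probability, so each of these $\baddeg$ values is $<\alpha\Delta$, and the bound follows. Union-bounding the failure events of \cref{lemma:main_lemma_oblivious} and \cref{lemma:second_main_lemma} yields the claimed $1 - 2^{-100\Delta}n^{-100}$ success probability (possibly after a slight tightening of the constants in those two lemmas). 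The main obstacle I expect is purely bookkeeping: making sure that (i) each of the three drift/drop terms precisely matches the form in the lemma statement, and (ii) the probability quantifiers of \cref{lemma:main_lemma_oblivious,lemma:second_main_lemma} can indeed be combined at the $2^{-100\Delta}n^{-100}$ level; the underlying algebraic manipulations are all minor variants of those already carried out in \cref{sec:proofs_Zlbhelperlemmas}.
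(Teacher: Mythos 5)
Your decomposition is the same as the paper's: split the telescoping sum of $Z_{eC}$ by arrival type, use $\overline{Z_{eC}}\ge Z_{eC}$ for the upper bound, and bound the good-arrival drift by $20\varepsilon^5\Delta^2A^2$ via \cref{lemma:main_lemma_oblivious} exactly as in \cref{lemma:Zlbhelperlemma2}. The upper bound and the $\delta(e)_{\text{good}}$ and $\delta(e)_{\text{rest}}$ contributions are handled correctly.

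However, your treatment of the $4\alpha\Delta\cdot A$ term has a genuine gap: you derive the bound $|\delta(e)_{\text{bad}}|\le 2\alpha\Delta$ by invoking the Few Bad Vertices \cref{lemma:second_main_lemma}. That lemma is proved via \cref{lemma:bounding_Z_per_matching}, which rests on the Matching \cref{lemma:matching_lemma}, which in turn is proved by summing the bounds of the very lemma you are currently proving (\cref{lemma:Zlbhelperlemma2_extended}). So your argument is circular. The paper avoids this entirely with a \emph{deterministic} observation: an edge $f\in\delta(e)_{\text{bad}}$ incident to, say, $u$ can only trigger \cref{line:obliv:burn} (and hence decrease $Z_{eC}$) while $u$ is not yet \emph{dangerous}, and every such arrival increments $\baddeg(u)$; hence at most $\alpha\Delta$ arrivals per endpoint of $e$ can ever modify $Z_{eC}$, for a total of $2\alpha\Delta$ burns of at most $2A$ each, with no probabilistic event needed. (The set $\delta(e)_{\text{bad}}$ itself need not be small --- what matters is that only its first $\le 2\alpha\Delta$ members affect $Z_{eC}$ before the relevant endpoints turn dangerous and the value freezes.) Replacing your appeal to \cref{lemma:second_main_lemma} with this freezing argument both removes the circularity and sidesteps your worry about combining failure probabilities, since the only probabilistic ingredient left is the single application of \cref{lemma:main_lemma_oblivious}, which already gives the $1-2^{-100\Delta}n^{-100}$ guarantee.
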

\begin{proof}
    Note that, using $Z^{(0)}_{eC} = \frac{|C|}{\Delta}(1-\varepsilon)$, we have:
    \begin{equation} \label{eq:aux_identity_on_ZeC}
        Z^{(t)}_{eC} = \frac{|C|}{\Delta}(1-\varepsilon) + \sum_{t'=0}^{t-1} \mathds{1}[e_{t'} \in \delta(e)_{\text{good}}]\left( Z^{(t'+1)}_{eC} - Z^{(t')}_{eC} \right) +  \sum_{t'=0}^{t-1} \mathds{1}[e_{t'} \notin \delta(e)_{\text{good}}]\left( Z^{(t'+1)}_{eC} - Z^{(t')}_{eC} \right).
    \end{equation}
    To obtain the upper bound, notice that we may ignore the last sum of the right hand side. This is because edges $e_{t'} \notin \delta(e)_{\text{good}}$ can only lead to a decrease in $Z_{eC}$. Next, we trivially have:
    \begin{align*}
        Z^{(t)}_{eC} &\leq \frac{|C|}{\Delta}(1-\varepsilon) + \sum_{t'=0}^{t-1} \mathds{1}[e_{t'} \in \delta(e)_{\text{good}}]\left( Z^{(t'+1)}_{eC} - Z^{(t')}_{eC} \right) \\
        &\leq  \frac{|C|}{\Delta}(1-\varepsilon) + Y^{(t)}_{eC},
    \end{align*}
    which gives the claimed upper bound.
    
    For the lower bound, we begin by focusing on the last sum of the right hand side in \cref{eq:aux_identity_on_ZeC}. Note that:
    \begin{align*}
        &\sum_{t'=0}^{t-1} \mathds{1}[e_{t'} \notin \delta(e)_{\text{good}}]\left( Z^{(t'+1)}_{eC} - Z^{(t')}_{eC} \right) \\
        =&\sum_{t'=0}^{t-1} \mathds{1}[e_{t'} \in \delta(e)_{\text{bad}}]\left( Z^{(t'+1)}_{eC} - Z^{(t')}_{eC} \right) + \sum_{t'=0}^{t-1} \mathds{1}[e_{t'} \in \delta(e)_{\text{rest}}]\left( Z^{(t'+1)}_{eC} - Z^{(t')}_{eC} \right)
    \end{align*}
    For any edge $e_{t'}$ incident to a \emph{bad} vertex (be that $u$, $v$, or another vertex), \cref{alg:edge-coloring-oblivious} might zero out a variable $P_{eC}$ in \cref{line:obliv:burn}, leading to a decrease in $Z_{eC}$ of at most $2A$ (by \cref{lemma:invariant_ub_on_P}). Furthermore, once both $u$ and $v$ become \emph{dangerous} (see \cref{def:bad_vertices_etc}), the value of $Z_{eC}$ is frozen. Hence, there are at most $2\alpha\Delta$ edges $e_{t'} \in \delta(e)_{\text{bad}}$ which we need to sum over in the above first sum. By this argumentation, we obtain that:
    \begin{align*}
        \sum_{t'=0}^{t-1} \mathds{1}[e_{t'} \notin \delta(e)_{\text{good}}]\left( Z^{(t'+1)}_{eC} - Z^{(t')}_{eC} \right) \geq - (4\alpha\Delta + 2|\delta(e)_{\text{rest}}|)\cdot A.
    \end{align*}
    Turning back to \eqref{eq:aux_identity_on_ZeC}, we obtain:
    \begin{align*}
        Z^{(t)}_{eC} &\geq \frac{|C|}{\Delta}(1- \varepsilon) - (4\alpha\Delta + 2|\delta(e)_{\text{rest}}|)\cdot A + \sum_{t'=0}^{t-1} \mathds{1}[e_{t'} \in \delta(e)_{\text{good}}]\left( Z^{(t'+1)}_{eC} - Z^{(t')}_{eC} \right) \\
        &=\frac{|C|}{\Delta}(1- \varepsilon) + Y^{(t)}_{eC} - (4\alpha\Delta + 2|\delta(e)_{\text{rest}}|)\cdot A - \sum_{t'=0}^{t-1} \mathds{1}[e_{t'} \in \delta(e)_{\text{good}}]\left( \overline{Z^{(t'+1)}_{eC}} - Z^{(t'+1)}_{eC} \right),
    \end{align*}
    such that it remains to upper bound the sum:
    \begin{equation*}
        S_{\overline{Z},Z} = \sum_{t'=0}^{t-1} \mathds{1}[e_{t'} \in \delta(e)_{\text{good}}]\left( \overline{Z^{(t'+1)}_{eC}} - Z^{(t'+1)}_{eC} \right).
    \end{equation*}
    This is the ``drift'' downwards of the supermartingale $Z_{eC}$ coming from the bad colors.
    Since the sum is taken over edges with no bad endpoints,  \cref{alg:edge-coloring-oblivious}'s logic is identical to \cref{alg:edge-coloring}, and upper bounding $S_{\overline{Z}, Z}$ can be done exactly as in the proof of \cref{lemma:Zlbhelperlemma2}, giving an $20 \cdot \varepsilon^5 \Delta^2 A^2$ upper bound, now with probability at least $1-2^{-100\Delta}n^{-100}$ (since we are using \cref{lemma:main_lemma_oblivious} instead of \cref{lemma:main_lemma}).
\end{proof}

\subsection{Few Bad Vertices: Proof of \cref{lemma:second_main_lemma}} \label{sec:arguing_second_main_lemma}

For convenience, we recall the main result that is to be proven in this section:
\SecondMainLemma*

Before presenting the formal proof, we provide a brief outline. The plan is to show that any subset $U$ of $|U| = \alpha \Delta$ vertices has bounded total \emph{badness}, meaning that the quantity $\sum_{u \in U} \udeg(u)$ is not too large. If the proven bound on the total \emph{badness} holds with probability $1 - \exp(- \poly(\varepsilon) \Delta^2)$ for any such fixed subset $U$ of $\alpha \Delta$ vertices, the statement of the lemma follows by taking a union bound over all \emph{at most} $2^\Delta \ll \exp(\poly(\varepsilon) \Delta^2)$ possible subsets of $\alpha \Delta$ neighbors of the vertex $v$. 

To obtain the desired concentration for the bound on the total \emph{badness} $\sum_{u \in U} \udeg(u)$, we would ideally like to first claim an upper bound $\udeg(u) \leq O(\varepsilon \Delta)$ for each individual $u \in U$ (i.e., according to \cref{def:bad_vertices_etc}, $u$ is not \emph{bad}), with probability say $1 -\exp(-\poly(\varepsilon)\Delta)$, and then obtain by independence with Chernoff bounds that $\sum_{u \in U} \udeg(u) \leq O(\varepsilon \Delta^2)$ with probability $1 - \exp(-\poly(\varepsilon)\Delta^2)$. However, as usual in our analysis, the existence of correlations does not allow for this simple argument, and we need to circumvent it by an alternative martingale argument. This argument itself turns out to require further care, as we will explain later, since defining a ``naive'' martingale over all edges incident to vertices in $U$ might lead to an uncontrolled step size.

We now resume with the formal proof of \cref{lemma:second_main_lemma}. The proof relies on the following lemma:
\begin{restatable}{lemma}{BoundingZPerMatching} \label{lemma:bounding_Z_per_matching}
    Consider a fixed instance of the online edge coloring problem, and let $M$ be an arbitrary matching in the input graph, of size $|M| \leq \alpha \Delta$. Then, with probability at least $1 - 2^{-110\Delta}n^{-110}$, there are at most $2 \varepsilon \alpha \Delta$ edges $e$ in $M$ such that: no endpoint of $e$ is \emph{bad} at arrival time $t_e$ of $e$, and $Z^{(t_e-1)}_{e} \notin [1-c_K \cdot \varepsilon, 1]$, where $c_K = 35c^2_A \leq 1000$ is a constant.
\end{restatable}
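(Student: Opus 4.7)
The plan is to reduce the event ``$Z_e^{(t_e-1)} \notin [1-c_K\varepsilon,1]$'' to a deviation of the martingale $Y_e$ (Definition~\ref{def:lb_on_Z_extended}), and then exploit the matching structure of $M$ to apply Azuma's inequality to a \emph{summed} martingale $\Psi := \sum_{e \in M'} Y_e$ over subsets $M' \subseteq M$. The matching property keeps the per-step change of $\Psi$ at $O(A)$ rather than $O(A\Delta)$, upgrading per-edge $\exp(-\poly(\varepsilon)\Delta)$ concentration into per-matching $\exp(-\poly(\varepsilon)\Delta^2)$ concentration, which is what we need to survive a union bound over the roughly $2^{\alpha\Delta}$ subsets of $M$ of size $\varepsilon\alpha\Delta$.

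First I would invoke Lemma~\ref{lemma:Zlbhelperlemma2_extended} with $C = \Calg$. Since no endpoint of $e$ is bad at $t_e$ we have $|\delta(e)_{\text{rest}}| = 0$, yielding
$$
(1-\varepsilon) + Y_e^{(t_e-1)} - D \;\le\; Z_e^{(t_e-1)} \;\le\; (1-\varepsilon) + Y_e^{(t_e-1)},
$$
where $D := 20\varepsilon^5\Delta^2 A^2 + 4\alpha\Delta A = O(c_A^2\varepsilon)$ after substituting parameter values. A union bound over $e \in M$ makes this hold on all relevant edges (with a sufficiently strong version of the lemma, whose constants can be chosen to absorb this union bound). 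On this good event, $Z_e^{(t_e-1)} > 1$ forces $Y_e^{(t_e-1)} > \varepsilon$, while $Z_e^{(t_e-1)} < 1-c_K\varepsilon$ forces $Y_e^{(t_e-1)} < -(c_K-1)\varepsilon + D < -\varepsilon$ (using that $c_K = 35c_A^2$ dominates the drift constant). It therefore suffices, for each sign separately, to bound by $\varepsilon\alpha\Delta$ the number of edges $e \in M$ with $Y_e^{(t_e-1)}$ past the threshold.

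Focus on the upward direction; downward is symmetric. Suppose for contradiction some $M' \subseteq M$ of size $k := \lceil \varepsilon\alpha\Delta \rceil + 1$ has $Y_e^{(t_e-1)} > \varepsilon$ for every $e \in M'$. Set $\Psi^{(t)} := \sum_{e \in M'} Y_e^{(\min(t,t_e-1))}$: a martingale w.r.t.\ $K_1,K_2,\ldots$ (sum of martingales, each frozen at its own arrival time), with $\Psi^{(0)} = 0$. Because $M' \subseteq M$ is a matching, any arriving edge shares a vertex with at most two edges of $M'$, so Lemma~\ref{lemma:Zlbhelperlemma1_extended}'s per-edge $6A$ step bound gives $|\Psi^{(t+1)} - \Psi^{(t)}| \le 12A$. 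The number of steps with nontrivial change is at most $2\Delta \cdot |M'|$. The hypothesis implies $\Psi^{(\infty)} > k\varepsilon$, so Azuma's inequality yields
$$
\Pr[\Psi^{(\infty)} > k\varepsilon] \;\le\; \exp\!\Bigl(-\tfrac{(k\varepsilon)^2}{2 \cdot 2k\Delta \cdot (12A)^2}\Bigr) \;=\; \exp\!\Bigl(-\tfrac{k\varepsilon^6\Delta}{576\,c_A^2}\Bigr).
$$
Substituting $k \ge \varepsilon^4\Delta/100$ and $\varepsilon = c_\varepsilon(\sqrt{\ln n}/\Delta)^{1/16}$ makes the exponent $\Omega(\varepsilon^{10}\Delta^2) = \Omega(c_\varepsilon^{10}\sqrt{\ln n}\cdot \Delta)$, which comfortably exceeds $\alpha\Delta\ln 2 + 120\Delta + 120\ln n$ for our choices of constants. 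A union bound over the $\binom{|M|}{k} \le 2^{\alpha\Delta}$ choices of $M'$, combined with the symmetric downward bound and the failure probability of Step~1, puts the total failure probability below $2^{-110\Delta}n^{-110}$, as required.

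The main obstacle is the step-size control in the previous paragraph. A naive per-edge Azuma bound on a single $Y_e$ yields only $\exp(-\poly(\varepsilon)\Delta)$ failure probability, which is too weak to absorb the $2^{\alpha\Delta}$ combinatorial factor from the union bound over subsets. The matching property of $M$ is essential: were $M$ to contain a star of $\Delta$ edges, a single arriving edge could touch $\Omega(\Delta)$ terms of $\Psi$ and inflate the Azuma exponent by a factor of $1/\Delta$, erasing exactly the gain we need. It is only because $\varepsilon^{10}\Delta^2 \gg \Delta$ in the regime $\Delta = \omega(\sqrt{\ln n})$ --- i.e., $\varepsilon^{10}\Delta = \Theta(\sqrt{\ln n})$ grows ---  that the argument closes here, mirroring the role played by $\exp(-\poly(\varepsilon)\Delta^2)$ concentration against $n^{\Theta(\Delta)}$ adaptive futures in Section~\ref{sec:adaptive}.
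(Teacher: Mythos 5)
Your proposal is correct and follows essentially the same route as the paper: the paper packages the summed martingale $\sum_{e \in M'} Y_{e}$ (with step size $12A$ thanks to the matching structure) into its ``Matching Lemma'' (\cref{lemma:matching_lemma} and \cref{corollary:matching_lemma}), splits the bad event into low-/high-annoying edges exactly as your upward/downward directions, and union bounds over the $\le 2^{\Delta}$ submatchings of size $\varepsilon\alpha\Delta$. Your inlined version, including the handling of $|\delta(e)_{\text{rest}}|=0$ via the no-bad-endpoint condition and the $\Omega(\varepsilon^{10}\Delta^2)$ exponent, matches the paper's argument step for step.
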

Before discussing the proof of this latter lemma, we show how the Few Bad Vertices Lemma follows:
\begin{proof}[Proof of \cref{lemma:second_main_lemma}]
    Our goal is to prove that, with high probability in $n$, no vertex $v$ has more than $\alpha \Delta$ neighbors which are \emph{bad} by the time they are connected to $v$ (see \cref{def:bad_vertices_etc}). In fact, we prove a slightly stronger statement: fix an arbitrary subset $U$ of $|U| = \alpha \Delta$ vertices. We claim that, for any time step $t$:
    \begin{equation} \label{eq:sufficient_cond_second_main_lemma}
        \Pr \left[ \sum_{u \in U} \udeg^{(t)}(u) \geq 2c_K \cdot  \varepsilon \alpha \Delta^2 \right] \leq \frac{1}{2^{100\Delta}n^{100}}.
    \end{equation}
    To see why this suffices, let $\binom{N(v)}{\alpha\Delta}$ be the set of all subsets of $\alpha\Delta$ neighbors of $v$, and note that there are at most $\binom{\Delta}{\alpha\Delta}\le 2^{\Delta}$ such subsets. We obtain
    \begin{align*}
        \Pr\left[\text{$v$ dangerous}\right] &\leq \Pr\left[ \text{$\exists U \in \binom{N(v)}{\alpha\Delta}$ s.t. $\exists t \in [1,n^2]$ s.t. $\forall u \in U : \udeg^{(t)}(u) \geq 2c_K \cdot \varepsilon \Delta$} \right] \\
        &\leq \Pr\left[ \text{$\exists U \in \binom{N(v)}{\alpha\Delta}$ s.t. $\exists t \in [1,n^2]$ s.t. $\sum_{u \in U} \udeg^{(t)}(u) \geq 2c_K \cdot \varepsilon \alpha \Delta^2$} \right] \\
        &\leq \sum_{U \in \binom{N(v)}{\alpha\Delta}} \sum_{t \in [1,n^2]} \Pr \left[ \sum_{u \in U} \udeg^{(t)}(u) \geq 2 c_K \cdot \varepsilon \alpha \Delta^2 \right] \\
        &\stackrel{\eqref{eq:sufficient_cond_second_main_lemma}}{\leq} 2^{\Delta} \cdot n^2 \cdot \frac{1}{2^{100\Delta}n^{100}}
        \\
        &\leq \frac{1}{n^{90}}.
    \end{align*}
    To prove \eqref{eq:sufficient_cond_second_main_lemma}, we consider the subgraph $G_U \subseteq G$ of the input graph which contains those edges $E_U \subseteq E$ incident to vertices in $U$. Since $G_U$ has maximum degree $\Delta$, its edge set $E_U$ can be partitioned into at most $\Delta + 1$ matchings $M_1,\dots,M_{\Delta + 1}$. Further, each such matching $M_i$ can have size at most $\alpha \Delta$, because all edges in $G_U$ are incident to one of the $\alpha \Delta$ vertices in $U$.

    We analyze the evolution of the \emph{badness} of vertices in $U$ as the $|E_U| \leq \alpha \Delta^2$ edges from $E_U$ arrive. There are different cases to consider, depending on the status of the arriving edge $e_t$ at the time $t_{e_t}$ of its arrival:
    \begin{enumerate}
        \item If $e_t$ is incident to a \emph{bad} vertex on arrival, \cref{alg:edge-coloring-oblivious} does not increase the \emph{badness} of any vertex, such that this case is trivial.
        
        \item Otherwise, if $e_t$ connects two \emph{good} vertices, it might be the case that $Z^{(t_{e_t} - 1)}_{e_t} \notin [1-c_K \cdot \varepsilon, 1]$. Each such arrival can increase the sum $\sum_{u \in U} \udeg(u)$ by at most $2$, due to \cref{line:obliv:increment_badness}. By \cref{lemma:bounding_Z_per_matching}, with probability at least $1 - 2^{-110\Delta}n^{-110}$, this happens for at most $2 \varepsilon \alpha \Delta$ edges per matching $M_i$, implying the increase in \emph{badness} is of at most $5 \cdot \varepsilon \alpha \Delta (\Delta+1) \leq 6\varepsilon \alpha \Delta^2$ in total.
        
        \item It remains to analyze the case of edges $e_t$ connecting two \emph{good} vertices, and $Z^{(t_{e_t} - 1)}_{e_t} \in [1- c_K \cdot \varepsilon, 1]$. Let $F$ be the set of these edges. For each $f \in F$, define the random indicator $X_f$ which is $1$ iff $f$ leads to an increase of \emph{badness} (by $2$) in \cref{line:obliv:increment_badness}. Since $Z_f \in [1- c_K \cdot \varepsilon,1]$ at the time $t_f$ at which $f$ arrives, and $f$ is not incident to any \emph{bad} vertex, we have $\Pr[X_f = 1] \leq c_K \cdot \varepsilon$. This is because the probability that $f$ is marked is given by the probability that no valid color is sampled (i.e., $K_{t_f} = \perp$) for $f$ in \cref{line:obliv:sample}, and this is given by $Z^{(t_f-1)}_{f} \leq c_K \cdot \varepsilon$.

        Let $f_1,\dots,f_{|F|}$ be an ordering of the edges in $F$ by their arrival time. Our arguments above have shown that, for any $f_i \in F$, we have $\Pr[X_{f_i} \mid X_{f_1},\dots,X_{f_{i-1}}] \leq c_K \cdot \varepsilon$, since the decision w.r.t.\ coloring edge $f$ is independent of previous such decisions. Due to $|F| \leq |E_U| = \alpha \Delta^2$, we also have $\E[\sum_{f\in F} X_f] \leq c_K \cdot \varepsilon\alpha \Delta^2$. Thus, by standard coupling arguments, using Chernoff bounds we obtain that: 
        \begin{align*}
            \Pr \left[ \sum_{f \in F} X_f \geq (c_K + 1) \cdot  \alpha \varepsilon\Delta^2\right] &= \Pr \left[ \sum_{f \in F} X_f \geq c_K \cdot \alpha \varepsilon\Delta^2 \cdot \left( 1 + \frac{1}{c_K} \right) \right] \\
            &\leq \exp \left( - \frac{\alpha \varepsilon\Delta^2}{3 \cdot  c_K} \right) \\
            &\leq \frac{1}{2^{110\Delta}n^{110}}.
        \end{align*}
        Hence, with probability at least $1 - 2^{-110\Delta}n^{-110}$, the increase of \emph{badness} caused by edges considered in this case is at most $2 \cdot 5 \varepsilon \alpha \Delta^2 = 10 \varepsilon \alpha \Delta^2$.
    \end{enumerate}
    By the case distinction above, union bounding over the failures of the Chernoff bound and $\Delta+1$ applications of \cref{lemma:bounding_Z_per_matching}, we obtain that with probability at least $1 - (\Delta+2) 2^{-110\Delta}n^{-110} \geq 1 - 2^{-100\Delta}n^{-100}$, the total increase of \emph{badness} is upper bounded by $6 \varepsilon \alpha \Delta^2 + (c_K+1) \cdot  \varepsilon \alpha \Delta^2 \leq 2c_K \cdot  \varepsilon \alpha \Delta^2$. This implies \eqref{eq:sufficient_cond_second_main_lemma}.
\end{proof}

\paragraph{Arguing about Matchings.}
It remains to argue \cref{lemma:bounding_Z_per_matching}. This is achieved by going through another lemma (see below), which we write in a more general version than strictly needed for our purposes here. However, the general version will be useful in the later proof of the Bad Vertex Lemma (\cref{lemma:bad_vertex_lemma}) in \cref{sec:arguing_bad_vertex_lemma}. 
We also recall \cref{def:classifying_edges}, \cref{def:Z_extension}, \cref{def:lb_on_Z_extended}, and \cref{lemma:Zlbhelperlemma1_extended}, \cref{lemma:Zlbhelperlemma2_extended} from \cref{sec:algorithm_oblivious} since they are now relevant.

As we have seen previously, $Z^{(t)}_{e} = \sum_{c} P^{(t)}_{ec}$ is a supermartingale. This fact can be used to upper bound $Z^{(t)}_e$. Moreover, the drift downwards in $Z^{(t)}_e$ is (with high probability) small (since it does not drop far below the martingale $1-\eps + Y^{(t)}_{e}$, by \cref{lemma:Zlbhelperlemma2_extended}), which can be used to also lower-bound $Z^{(t)}_{e}$.

In \cref{sec:analysis}
(where crucially $\Delta = \Omega(\log n)$)
we used these facts to show that with high probability $Z_{e}\in [1-O(\eps), 1]$.
\emph{Unfortunately, this is not true when $\Delta \ll \log n$, even against a weaker oblivious adversary}. Indeed, if arguing the concentration of $Z_e\in [1-O(\eps),1]$ we can only prove that it holds with probability $1-\exp(-\poly(\eps)\Delta)$, meaning that roughly a $1 / 2^{\sqrt{\log n}}$-fraction of our edges could be outside the range when $\Delta \approx \sqrt{\log n}$ (even $Z_e = 0$ would happen).

Instead, we will resort to arguing that $\sum_{e\in M} Z_e$ is well-concentrated for some set of edges $M$ of size $\approx \poly(\eps) \Delta$. If all the edges in $M$ behaved independently, then we would immediately have concentration $\sum_{e\in M} Z_e \in [|M|(1-\eps), |M|]$ with probability $1-\exp(\poly(\eps)\Delta^2) = 1-\frac{1}{n^{100}}$. However, the $Z_e$ for $e\in M$ might, of course, not be independent from each other, so we follow our usual strategy of analyzing this sum with martingales and Azuma's inequality.

Still, the set of edges $M$ cannot be arbitrary: if they all are incident to the same vertex, they could potentially be very correlated (in more technical terms: the step size in our (super)martingale $\sum_{e\in M} Z_e$ would blow up).
However, if the set of edges $M$ forms a \emph{matching}, then they behave well: any one arriving edge $f$ can touch at most two edges in $M$, allowing us to argue that $\sum_{e\in M} Z_e$ essentially has the same step size as just a single $Z_e$. This motivates the following lemma (where we also restrict our attention to a subset of colors $C$):

\begin{lemma}[Matching Lemma] \label{lemma:matching_lemma}
    Let $M$ be a matching and $C \subseteq \Calg$ be a subset of colors, with the property that $|M| \cdot |C|^2 \geq \varepsilon^{10} \Delta^3$. Let, for any time step $t$, $K^{(t)}_{MC} = \sum_{e \in M} Z^{(t)}_{eC} = \sum_{e \in M} \sum_{c \in C}  P^{(t)}_{ec}$. Then:
    \begin{align*}
        &\Pr\left[K^{(t)}_{MC} > \frac{|M|\cdot |C|}{\Delta}(1-\tfrac{\eps}{2})\right] \le \frac{1}{2^{150\Delta}n^{150}}, \quad \text{and} \\
        &\Pr\left[K^{(t)}_{MC} < \frac{|M|\cdot |C|}{\Delta}-c_K \cdot \eps|M| \text{ and there are no bad endpoints of $M$}\right] \leq \frac{1}{2^{150\Delta}n^{150}},
    \end{align*}
    where $c_K = 35 c_A^2 \leq 1000$ is a fixed constant (see also \cref{sec:notation_assumptions_oblivious}).
\end{lemma}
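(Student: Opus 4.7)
The strategy is to treat $K^{(t)}_{MC}$ as (almost) a martingale by summing the decomposition of \cref{lemma:Zlbhelperlemma2_extended} over $e \in M$. Concretely, set $\mathcal{Y}^{(t)} := \sum_{e \in M} Y^{(t)}_{eC}$. By linearity and \cref{lemma:Zlbhelperlemma1_extended}, $\mathcal{Y}^{(t)}$ is a martingale with respect to $K_1, K_2, \ldots$. The reason we restrict to a matching (rather than a general edge subset) enters precisely in bounding its step size: any arriving edge $e_{t+1}$ shares a vertex with at most two edges in $M$, so combining this with the per-edge bound $|Y^{(t+1)}_{eC} - Y^{(t)}_{eC}| \leq 6A$ from \cref{lemma:Zlbhelperlemma1_extended} yields a joint step size $|\mathcal{Y}^{(t+1)} - \mathcal{Y}^{(t)}| \leq 12A$. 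If $M$ were a star, this bound would inflate by a factor of $\Theta(\Delta)$ and destroy the concentration, exactly as discussed in the ``Stars vs Matchings'' paragraph.

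For the upper bound, the right-hand inequality of \cref{lemma:Zlbhelperlemma2_extended}, summed over $e \in M$, gives $K^{(t)}_{MC} \leq \frac{|M||C|}{\Delta}(1-\varepsilon) + \mathcal{Y}^{(t)}$, so the event $K^{(t)}_{MC} > \frac{|M||C|}{\Delta}(1 - \varepsilon/2)$ forces $\mathcal{Y}^{(t)} > \frac{\varepsilon |M||C|}{2\Delta}$. Apply \cref{lemma:azuma} with $T \leq 2\Delta |M|$ non-zero steps (each $e \in M$ has at most $2\Delta$ incident edges), step size $12A$, and deviation $\lambda = \varepsilon|M||C|/(2\Delta)$; after plugging $A = c_A/(\varepsilon^2\Delta)$, the exponent simplifies to $\Theta\bigl(\varepsilon^6 |M| |C|^2 / \Delta\bigr)$. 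The hypothesis $|M||C|^2 \geq \varepsilon^{10}\Delta^3$ and the identity $\varepsilon^{16} = c_\varepsilon^{16}\sqrt{\ln n}/\Delta$ combine to make this at least $\Omega\bigl(c_\varepsilon^{16}\sqrt{\ln n}\,\Delta\bigr)$, which, using $\Delta \geq \sqrt{\ln n}$, exceeds $150\Delta + 150\ln n$ for the chosen constants.

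For the lower bound, the left-hand inequality of \cref{lemma:Zlbhelperlemma2_extended} contributes three drift terms when summed over $e \in M$: a $20\varepsilon^5\Delta^2A^2|M|$ term, a $4\alpha\Delta A|M|$ term, and a $2A \sum_{e \in M}|\delta(e)_{\text{rest}}|$ term. The first two evaluate to $O(c_A^2\varepsilon|M|)$ using $A = c_A/(\varepsilon^2\Delta)$ and $\alpha = \varepsilon^3/100$. The third vanishes under the hypothesis that no endpoint of $M$ is bad, since by \cref{def:classifying_edges} each $\delta(e)_{\text{rest}}$ consists only of edges incident to a bad endpoint of $e$. Together with an initial slack of $\varepsilon|M||C|/\Delta \leq \varepsilon|M|$ (using $|C| \leq \Delta$), the target event $K^{(t)}_{MC} < \frac{|M||C|}{\Delta} - c_K \varepsilon|M|$ implies $\mathcal{Y}^{(t)} < -(c_K - O(c_A^2))\varepsilon|M|$; with $c_K = 35c_A^2$ this threshold is $-\Theta(c_A^2\varepsilon|M|)$. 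Azuma's lower tail with the same step-size/step-count bounds produces an exponent $\Theta(c_A^2 \varepsilon^6 |M| \Delta)$, which by $|M||C|^2 \geq \varepsilon^{10}\Delta^3$ and $|C| \leq \Delta$ forces $|M| \geq \varepsilon^{10}\Delta$, and so is again $\Omega(\sqrt{\ln n}\,\Delta) \gg 150\Delta + 150\ln n$. A final union bound over the $\leq n$ failure events of \cref{lemma:Zlbhelperlemma2_extended} (one per $e \in M$) finishes the argument.

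The principal obstacle is the step-size control, which is exactly why the lemma is stated for matchings rather than arbitrary edge subsets. Beyond that, the proof is a careful bookkeeping exercise: one must verify that $c_K = 35c_A^2$ is large enough to swallow both the $\varepsilon$-slack and the $O(c_A^2 \varepsilon)$ per-edge drift, and that the $1 - 2^{-100\Delta}n^{-100}$ probability bound in \cref{lemma:main_lemma_oblivious} survives union bounding over the $|M| \leq n$ invocations of \cref{lemma:Zlbhelperlemma2_extended}---which it does with room to spare, since the Azuma exponents above dominate $150\Delta + 150\ln n$ by orders of magnitude.
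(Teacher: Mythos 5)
Your proposal matches the paper's proof essentially step for step: the same martingale $\sum_{e\in M} Y^{(t)}_{eC}$ (the paper calls it $L^{(t)}_{MC}$) with step size $12A$ from the matching property, the same summation of \cref{lemma:Zlbhelperlemma2_extended} over $M$ with the same treatment of the three drift terms, and the same Azuma application using $|M|\cdot|C|^2 \geq \varepsilon^{10}\Delta^3$. The argument is correct and takes the same route as the paper.
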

\begin{proof}
    Let $L^{(t)}_{MC} = \sum_{e \in M} Y^{(t)}_{eC}$. By adding the corresponding upper bound from \cref{lemma:Zlbhelperlemma2_extended} over all edges in $M$, we obtain:
    \begin{equation*}
        K^{(t)}_{MC} \leq \frac{|M| \cdot |C|}{\Delta} (1 - \varepsilon) + L^{(t)}_{MC},
    \end{equation*}
    such that it suffices to prove:
    \begin{equation} \label{eq:aux_toprove_later1}
        \Pr\left[ L^{(t)}_{MC} > \frac{|M| \cdot |C| \cdot \varepsilon}{2\Delta} \right] \leq \frac{1}{2^{150\Delta}n^{150}}.
    \end{equation}
    We come back to proving this inequality later, and focus on the lower bound next. By adding the corresponding lower bound from \cref{lemma:Zlbhelperlemma2_extended} over all edges in $M$, we obtain:
    \begin{equation*}
        K^{(t)}_{MC} \geq \frac{|M| \cdot |C|}{\Delta} (1 - \varepsilon) + L^{(t)}_{MC} - 20|M| \cdot \varepsilon^5 \Delta^2 A^2 - 4|M| \cdot  \alpha\Delta A- 2|M| \cdot  A \sum_{e \in M} |\delta(e)_{\text{rest}}|.
    \end{equation*}
    The lower bound is only claimed in conjunction with the event that $M$ has no bad endpoints. This event in particular implies $|\delta(e)_{\text{rest}}| = 0$ for any $e \in M$ (see \cref{def:classifying_edges}). This allows ignoring the last term in the above right hand side. Furthermore, we have $20|M| \cdot \varepsilon^5 \Delta^2 A^2 \leq 20 c^2_A \cdot \varepsilon |M|$ and $4|M| \cdot \alpha \Delta A \leq \frac{4c_A}{100} \cdot \varepsilon^3 |M| \leq c_A \cdot \varepsilon |M|$. Hence, the above lower bound implies:
    \begin{equation*}
        K^{(t)}_{MC} \geq \frac{|M| \cdot |C|}{\Delta} (1 - \varepsilon)  - 25 c_A^2 \cdot \varepsilon|M| +  L^{(t)}_{MC} \geq \frac{|M| \cdot |C|}{\Delta}- 30c_A^2 \cdot \varepsilon |M| +  L^{(t)}_{MC}.
    \end{equation*}
    It suffices to prove:
    \begin{equation} \label{eq:aux_toprove_later2}
        \Pr\left[ L^{(t)}_{MC} < - \frac{|M| \cdot |C| \cdot \varepsilon}{2\Delta} \right] \leq \frac{1}{2^{150\Delta}n^{150}},
    \end{equation}
    since, if this holds, we obtain immediately $ K^{(t)}_{MC} \geq \frac{|M| \cdot |C|}{\Delta} - 35c_A^2 \cdot \varepsilon |M|$, which gives the claim.

    To prove \eqref{eq:aux_toprove_later1} and \eqref{eq:aux_toprove_later2}, we first note that $L_{MC}$ is a martingale with initial value $0$, because it is a sum of such martingales, by \cref{lemma:Zlbhelperlemma1_extended}. Further, since any arriving edge is incident to at most two edges from the matching $M$, affecting at most two terms of the form $Y_{eC}$ in the sum, the step size of $L_{MC}$ can be at most twice the step size of such a term. By \cref{lemma:Zlbhelperlemma1_extended}, we deduce that the step size of $L_{MC}$ is at most $12A$. To set up Azuma's inequality, take $\lambda = \frac{|M| \cdot |C| \cdot \varepsilon}{2\Delta}$, and note that the martingale $L_{MC}$ takes at most $2\Delta|M|$ steps. Apply \cref{lemma:azuma} to obtain:
    \begin{equation*}
        \Pr \left[ \left| L^{(t)}_{MC}  \right| \geq \lambda \right] \leq 2\exp \left( - \frac{\lambda^2}{4\Delta|M| \cdot (12A)^2}. \right)
    \end{equation*}
    It suffices to argue that:
    \begin{equation*}
        \frac{\lambda^2}{4\Delta|M| \cdot (12A)^2} \geq 150 \ln n +150\Delta+\ln 2.
    \end{equation*}
    Plugging in $\lambda =  \frac{|M| \cdot |C| \cdot \varepsilon}{2\Delta}$ and $A = c_A/(\varepsilon^2 \Delta)$, this is equivalent to:
    \begin{equation*}
        \frac{|M| \cdot |C|^2 \cdot \varepsilon^6}{576 \cdot c_A^2 \Delta} \geq 150 \ln n + 150\Delta + \ln 2.
    \end{equation*}
    Using the fact that $|M| \cdot |C|^2 \geq \varepsilon^{10} \Delta^3$, and recalling $\varepsilon = c_\varepsilon \cdot (\sqrt{\ln n}/\Delta)^{1/16}$ it suffices to argue that:
    \begin{equation*}
        \frac{1}{576 \cdot c^2_A} \cdot c_\varepsilon^{16} \cdot \Delta \sqrt{\ln n} \geq 150 \ln n + 150\Delta + \ln 2,
    \end{equation*}
    which holds by the fact that $\Delta \geq \sqrt{\ln n}$ and the choices of $c_A$ and $c_\varepsilon$.
\end{proof}

\begin{corollary} \label{corollary:matching_lemma}
    Let $M$ be a matching of size $\varepsilon \alpha \Delta$. Then, for any time step $t$:
    \begin{align*}
        &\Pr\left[\sum_{e\in M} Z^{(t)}_e > |M|\right] \leq \frac{1}{2^{150\Delta}n^{150}}, \quad \text{and} \\
        &\Pr\left[\sum_{e\in M} Z^{(t)}_e < (1-c_K \cdot \eps)|M|\text{ and there are no bad endpoints of $M$}\right] \leq \frac{1}{2^{150\Delta}n^{150}}.
    \end{align*}
\end{corollary}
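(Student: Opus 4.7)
The plan is to apply the Matching Lemma (\cref{lemma:matching_lemma}) directly, with the subset of colors taken to be the full palette $C = \Calg$ and the matching $M$ of size $\varepsilon\alpha\Delta$. With this choice $|C| = \Delta$, so
\[
K^{(t)}_{M\Calg} = \sum_{e \in M}\sum_{c \in \Calg} P^{(t)}_{ec} = \sum_{e \in M} Z^{(t)}_e,
\]
and the two probabilities the corollary seeks to bound are precisely the upper and lower tail events of $K^{(t)}_{MC}$ that appear in the Matching Lemma.

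The only step requiring a small calculation is verifying the hypothesis $|M|\cdot|C|^2 \geq \varepsilon^{10}\Delta^3$ of \cref{lemma:matching_lemma}. Plugging in our values gives $|M|\cdot|C|^2 = \varepsilon\alpha\Delta \cdot \Delta^2 = \varepsilon\alpha\Delta^3$, so it suffices to check that $\alpha \geq \varepsilon^9$. Recalling from \cref{def:parameters-obliv} that $\alpha = \varepsilon^3/100$ and from \cref{fact:delta_large_enough-obliv} that $\varepsilon \leq 1/10$, we get $\alpha/\varepsilon^9 = 1/(100\varepsilon^6) \geq 10^4$, which is comfortably larger than $1$.

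Once the precondition is verified, both bounds follow by direct substitution. For the upper tail, the Matching Lemma yields $\Pr[K^{(t)}_{MC} > |M|(1-\varepsilon/2)] \leq 2^{-150\Delta}n^{-150}$; since $|M|(1-\varepsilon/2) < |M|$, the event $\{K^{(t)}_{MC} > |M|\}$ is contained in this event, giving the first inequality of the corollary. For the lower tail, the threshold $\tfrac{|M|\cdot|C|}{\Delta} - c_K\varepsilon|M|$ simplifies to exactly $(1-c_K\varepsilon)|M|$, matching the statement verbatim (the ``no bad endpoints'' qualifier carries over unchanged). I do not anticipate any real obstacle beyond this parameter check, since the corollary is essentially a direct specialization of the Matching Lemma to $C = \Calg$.
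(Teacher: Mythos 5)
Your proposal is correct and takes exactly the same route as the paper, which simply says ``Apply \cref{lemma:matching_lemma} using $C = \Calg$''; your additional verification of the hypothesis $|M|\cdot|C|^2 \geq \varepsilon^{10}\Delta^3$ (which reduces to $\alpha \geq \varepsilon^9$, true since $\alpha = \varepsilon^3/100$ and $\varepsilon \leq 1/10$) is a worthwhile detail the paper leaves implicit.
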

\begin{proof}
    Apply \cref{lemma:matching_lemma} using $C = \Calg$.
\end{proof}

We are now ready to prove \cref{lemma:bounding_Z_per_matching}, which we restate here for convenience:
\BoundingZPerMatching*
\begin{proof}
    Call an edge $e$ \emph{annoying} if no endpoint of $e$ is \emph{bad} at arrival time $t_e$, and $Z^{(t_e-1)}_{e} \notin [1 - c_K \cdot \varepsilon, 1]$. If $Z^{(t_e-1)}_{e} < 1 - c_K \cdot \varepsilon$, call the edge \emph{low-annoying}; if $Z^{(t_e-1)}_{e} > 1$, call the edge \emph{high-annoying}. We have:
    \begin{align*}
        &\Pr \left[ \text{$\exists 2\varepsilon \alpha \Delta$ \emph{annoying} edges in $M$} \right] \leq \\
        &\Pr \left[ \text{$\exists \varepsilon \alpha \Delta$ \emph{low-annoying} edges in $M$}\right] + \Pr \left[ \text{$\exists \varepsilon \alpha \Delta$ \emph{high-annoying} edges in $M$}\right].
    \end{align*}
    We bound the first of these two latter probabilities; the second one can be bounded analogously. Let $\binom{M}{\varepsilon \alpha \Delta}$ be the set of all sub-matchings of size $\varepsilon \alpha \Delta$ of $M$, and note that there are at most $\binom{\alpha \Delta}{\eps\alpha\Delta} \le 2^{\Delta}$ many such submatchings. We have, using \cref{corollary:matching_lemma}:
    \begin{align*}
        \Pr &\left[ \text{$\exists \varepsilon \alpha \Delta$ \emph{low-annoying} edges in $M$}\right] \\
        &= \Pr \left[ \text{$\exists M' \in \binom{M}{\varepsilon\alpha\Delta}$  s.t. all $e \in M'$ are \emph{low-annoying}} \right] \\ 
        &\leq \sum_{M' \in \binom{M}{\varepsilon\alpha\Delta}} \Pr\left[ \text{all $e \in M'$ are \emph{low-annoying}} \right] \\
        &\leq \sum_{M' \in \binom{M}{\varepsilon\alpha\Delta}} \Pr\left[ \sum_{e\in M'} Z^{(t)}_e < (1 - c_K \cdot \varepsilon) \cdot |M| \text{ and there are no bad endpoints of $M$}\right] \\
        &\leq \sum_{M' \in \binom{M}{\varepsilon\alpha\Delta}} \frac{1}{n^{150}}\\
        &
        \leq 2^{\Delta} \cdot \frac{1}{2^{150\Delta}n^{150}} \\
        &\leq \frac{1}{2^{140\Delta}n^{140}}.
    \end{align*}
    Similarly, one can show $\Pr \left[ \text{$\exists \varepsilon \alpha \Delta$ \emph{high-annoying} edges in $M$}\right] \leq 2^{-140\Delta}n^{-140}$ and the desired conclusion follows by summing up the two bounds.
\end{proof}

\subsection{Marked Edges of Bad Vertices: Proof of \cref{lemma:bad_vertex_lemma}} \label{sec:arguing_bad_vertex_lemma}

For convenience, we recall the lemma that is to be proven in this section:
\BadVertexLemma*
We will use the following preliminary result:
\begin{lemma} \label{lemma:helper_for_bad_vertex_lemma}
    Fix a subset $C \subseteq \Calg$ of size $|C| = 2c_K \cdot \varepsilon \Delta$, and a subset $U \subseteq N(v)$ of $|U| = \varepsilon \Delta$ neighbors of a fixed vertex $v$. Let $X_{UC}$ be the event that, at the end of the algorithm, no vertex in $U$ has any free color in $C$ and that no edge was ever \emph{dangerous}. Then: $\Pr[X_{UC}] \leq \frac{1}{2^{100\Delta}n^{100}}$.
\end{lemma}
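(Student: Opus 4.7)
The plan is to apply the Matching Lemma (\cref{lemma:matching_lemma}) to a deterministically constructed matching $M$ of ``late'' edges incident to $U$, and derive a contradiction under $X_{UC}$.

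First note that if $X_{UC}$ holds, then each $w \in U$ must have graph-degree $\deg(w) \geq |C| = 2c_K\varepsilon\Delta$, since $w$ must carry $|C|$ distinct $C$-colored incident edges. Set $L := \varepsilon^3 \Delta$. For each $w \in U$, let $H_w$ denote the set of the last $L$ edges (in the fixed arrival order) incident to $w$; this is well-defined provided $\deg(w) \geq L$, which is implied by $X_{UC}$ since $L \leq |C|$. Consider the multigraph $H := \bigcup_{w \in U} H_w$. Each vertex of $V$ has degree at most $\Delta$ in $H$ (inherited from the input graph), while $H$ has $|U| \cdot L = \varepsilon^4 \Delta^2$ edge-incidences on the $U$-side, so a standard greedy/K\"onig-type argument yields a matching $M \subseteq H$ with $|M| \geq |U| L / \Delta = \varepsilon^4\Delta$. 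Crucially, since the adversary is oblivious, both the graph and arrival order are fixed, hence $M$ is a deterministic object.

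Next I would bound $K^{(T)}_{MC} := \sum_{e \in M} Z^{(T)}_{e,C}$ from above under $X_{UC}$. For each $e = \{w,x\} \in M$ anchored at $w \in U$, by construction $e$ is among the last $L$ edges at $w$, so at most $L$ edges at $w$ arrive at time $\geq t_e$. Since under $X_{UC}$ all $|C|$ colors of $C$ are used at $w$ by the end of the algorithm, at most $L$ of them can be used at or after time $t_e$, so at least $|C|-L$ are used by time $t_e-1$. Hence at most $L$ colors of $C$ are free at $w$ at time $t_e - 1$. Combined with $P^{(t)}_{f,c} \leq 2A$ (\cref{lemma:invariant_ub_on_P}) and the fact that $P^{(T)}_{e,c} = P^{(t_e-1)}_{e,c}$ once $e$ has arrived, this gives $Z^{(T)}_{e,C} \leq 2AL$. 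Summing over $M$ and substituting $A = c_A/(\varepsilon^2\Delta)$, $L = \varepsilon^3\Delta$, yields
\begin{equation*}
K^{(T)}_{MC} \;\leq\; 2AL|M| \;=\; 2c_A\varepsilon|M|.
\end{equation*}

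Then I would apply \cref{lemma:matching_lemma}, first verifying $|M|\cdot|C|^2 = \varepsilon^4\Delta \cdot 4c_K^2\varepsilon^2\Delta^2 = 4c_K^2\varepsilon^6\Delta^3 \geq \varepsilon^{10}\Delta^3$. Its lower bound yields, with probability at least $1 - 1/(2^{150\Delta}n^{150})$, that either $M$ has a bad endpoint, or $K^{(T)}_{MC} \geq |M||C|/\Delta - c_K\varepsilon|M| = (2c_K - c_K)\varepsilon|M| = c_K\varepsilon|M|$. Since $c_K = 35c_A^2 > 2c_A$, this strictly contradicts the upper bound $K^{(T)}_{MC} \leq 2c_A\varepsilon|M|$ derived from $X_{UC}$.

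The main obstacle I expect is handling the case in which $M$ has a bad endpoint (the Matching Lemma's lower bound requires the ``no bad endpoints of $M$'' condition, and individual vertices can be bad with probability only $\exp(-\Omega(\varepsilon\Delta))$, which is not small enough on its own). To address this I would use the condition in $X_{UC}$ that no vertex is \emph{dangerous}, i.e., every vertex has fewer than $\alpha\Delta = \varepsilon^3\Delta/100$ neighbors that were bad at the time of connection. This can be leveraged either by refining the construction of $M$ to avoid bad-adjacent endpoints (with only a mild loss in $|M|$, since $|U| \gg \alpha\Delta$ and the same holds for non-$U$ endpoints by restricting to outside $N(v)$), or by a union bound over the $\leq 2^{O(\Delta)}$ possible configurations of ``which $M$-endpoints are bad''. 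Either route supplies the remaining $1/(2^{100\Delta}n^{100})$ bound, absorbing the $2^\Delta$-type overhead cleanly since the Matching Lemma's failure probability $1/(2^{150\Delta}n^{150})$ is much stronger than what we need.
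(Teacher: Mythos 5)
Your route is genuinely different from the paper's, and its deterministic core is sound: under $X_{UC}$ every $w\in U$ has degree at least $|C|$, the matching $M$ of ``late'' edges is a fixed object under an oblivious adversary, and the bound $Z^{(t_e-1)}_{eC}\le 2AL=2c_A\varepsilon$ (hence $K^{(T)}_{MC}\le 2c_A\varepsilon|M|$) follows correctly because all but at most $L$ colors of $C$ are already zeroed out at $w$ when $e$ arrives. The contradiction against the lower tail of \cref{lemma:matching_lemma} ($c_K\varepsilon|M|\gg 2c_A\varepsilon|M|$) would indeed close the proof --- \emph{if} the ``no bad endpoints of $M$'' hypothesis could be discharged. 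That is precisely where the argument breaks, and the obstacle you flag is not a technicality you can absorb. Your matching has $|M|\approx|U|L/\Delta=\varepsilon^4\Delta$ edges, every one of which has an endpoint in $U$. The event $X_{UC}$ gives no control over which vertices of $U$ (or which of the non-$U$ endpoints) eventually become \emph{bad}: the ``no dangerous vertex'' clause only bounds the number of neighbors of $v$ that were bad \emph{at the time they connected to $v$}, not at later times, and the strongest statement the paper's machinery yields (the analogue of \eqref{eq:sufficient_cond_second_main_lemma} applied to arbitrary fixed vertex sets, union-bounded over $2^{O(\Delta)}$ subsets) is that at most $\alpha\Delta=\varepsilon^3\Delta/100$ vertices of any fixed $\Delta$-sized set are ever bad. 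Since $\varepsilon^3\Delta/100\gg\varepsilon^4\Delta$ once $\varepsilon$ is small, \emph{every} edge of $M$ could have a bad endpoint, so neither of your two fixes works: the matching cannot be pruned in advance (badness is a random event, and the Matching Lemma needs $M$ fixed), and the union bound over ``which endpoints are bad'' degenerates because the surviving submatching may be empty (and in any case would fail the size requirement $|M'|\cdot|C|^2\ge\varepsilon^{10}\Delta^3$). Enlarging $L$ does not help: the contradiction forces $2AL<c_K\varepsilon$, i.e.\ $L=O(\varepsilon^3\Delta)$ and $|M|=O(\varepsilon^4\Delta)$, so the mismatch is structural.

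The paper sidesteps this entirely by never invoking the lower tail of the Matching Lemma here. It partitions \emph{all} of $E_U$ into $\Delta+1$ matchings and uses only the \emph{upper} tail (which carries no bad-endpoint condition) to show that few edges are ``unlucky'', i.e.\ have $Z^{(t_e-1)}_{e_tC}>\frac{|C|}{\Delta}(1-\frac{\varepsilon}{2})$; for the remaining good-endpoint edges it then applies a sequential coupling/Chernoff bound to the actual coloring outcomes, showing the total number of $C$-colored slots at $U$ falls short of $|C|\cdot|U|$; edges touching bad vertices are charged to the dangerous-vertex bound ($O(\alpha\Delta^2)\ll\varepsilon^2\Delta^2$). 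If you want to keep your structure, you would need to replace the lower-tail step by an argument of this one-sided type, or prove a substantially stronger ``few ever-bad vertices'' bound than the paper provides.
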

We first show how assuming this result allows to prove the Bad Vertex Lemma:
\begin{proof}[Proof of \cref{lemma:bad_vertex_lemma}]
First, by \cref{lemma:second_main_lemma}, with high probability in $n$ no vertices ever turn dangerous.
    Fix the vertex $v$, and assume it already turned \emph{bad} at time step $t_0$. Since its \emph{badness} exceeds $2c_K \cdot \varepsilon \Delta$, there are at least this many incident edges that are not colored from $\Calg$, and thus there must exist a subset $C \subseteq \Calg$ of colors of size $|C| = 2c_K \cdot \varepsilon \Delta$, such that none of these colors have been used for $v$ at the end of the algorithm. Let $U_{>t_0} \subseteq N(v)$ be those future neighbors of $v$ that will connect to $v$ at time steps $>t_0$. We observe that, for any neighbor $u \in U_{>t_0}$ connecting to $v$ with edge $e_t = \{u,v\}$, the condition $Z^{(t_{e_t}-1)}_{e_t} = 0$ can hold only if all colors from $C$ have already been used by $u$. Call such neighbors of $v$ \emph{hot}. By the previous discussion, we have that:
    \begin{equation*}
        \Pr\left[ \text{$\exists \varepsilon \Delta$ \emph{hot} neighbors of $v$} \right] \leq \Pr\left[ \text{$\exists U \in \binom{N(v)}{\varepsilon \Delta} \ \exists C \in \binom{\Calg}{2c_K \cdot \varepsilon \Delta}$ s.t. $X_{UC}$} \right],
    \end{equation*}
    where $\binom{N(v)}{\varepsilon \Delta}$ is the set of all subsets of $\varepsilon \Delta$ neighbors of $v$, and $\binom{\Calg}{2c_K \cdot \varepsilon \Delta}$ is the set of all subsets of $\Calg$ of size $2c_K \cdot \varepsilon \Delta$. Continuing by union bound and applying \cref{lemma:helper_for_bad_vertex_lemma}, we have:
    \begin{align*}
        \Pr\left[ \text{$\exists \varepsilon \Delta$ \emph{hot} neighbors of $v$} \right] &\leq \sum_{U \in \binom{N(v)}{\varepsilon \Delta}} \sum_{C \in \binom{\Calg}{2c_K \cdot \varepsilon \Delta}} \Pr[X_{UC}] \\
        &\leq \binom{|N(v)|}{\varepsilon \Delta} \cdot \binom{|\Calg|}{2c_K\eps\Delta} \cdot \frac{1}{2^{100\Delta}n^{100}}
        \\
        &\leq 2^{\Delta} \cdot 2^{\Delta} \cdot \frac{1}{2^{100\Delta}n^{100}}
        \\
        &\leq \frac{1}{n^{100}}.\qedhere{}
    \end{align*}
\end{proof}
We finish the section by proving \cref{lemma:helper_for_bad_vertex_lemma}:
\begin{proof}[Proof of \cref{lemma:helper_for_bad_vertex_lemma}]
    Similarly to the proof of \cref{lemma:second_main_lemma}, we consider the subgraph $G_U \subseteq G$ of the input graph which contains those edges $E_U \subseteq E$ incident to vertices in $U$. Since $G_U$ has maximum degree $\Delta$, its edge set $E_U$ can be partitioned into at most $\Delta + 1$ matchings $M_1,\dots, M_{\Delta + 1}$. Further, each such matching $M_i$ has size at most $\varepsilon \Delta$, as each edge in $G_U$ is incident to a vertex in $U$.

    We call an edge $e \in E_U$ with arrival time $t_e$ \emph{unlucky} if $Z^{(t_e-1)}_{eC} > \frac{|C|}{\Delta}(1-\frac{\varepsilon}{2})$. The first key claim in the proof is the following statement:
    \begin{equation} \label{eq:few_unlucky_per_matching}
        \Pr\left[\text{for all $i$, matching $M_i$ contain at most $\varepsilon^3 \Delta$ \emph{unlucky} edges}\right] \leq \frac{1}{2^{140\Delta}n^{140}}.
    \end{equation}
    To argue this claim, we first note that if $|M_i| < \varepsilon^3 \Delta$ there is nothing to prove. Otherwise, if $|M_i| \geq \varepsilon^3 \Delta$, it is clear for any submatching $M \subseteq M_i$ of size $|M| = \varepsilon^3 \Delta$ that $|M| \cdot |C|^2 \geq \varepsilon^{10} \Delta^3$, such that the Matching Lemma (\cref{lemma:matching_lemma}) is applicable.
    We obtain from the Matching Lemma, for any fixed such submatching $M$ of $M_i$:
    \begin{equation*}
        \Pr\left[ \text{all edges in $M$ are \emph{unlucky}} \right] \leq \frac{1}{2^{150\Delta}n^{150}}.
    \end{equation*}
    To obtain \eqref{eq:few_unlucky_per_matching}, we use the above inequality and union bound over all possible $\leq (\Delta + 1) \cdot \binom{\varepsilon \Delta}{\varepsilon^3 \Delta}$ submatchings $M$ of size $\varepsilon^3 \Delta$ of any matching $M_i$.

    With the above detour completed, we are ready to prove the statement of \cref{lemma:helper_for_bad_vertex_lemma}. Let $X'_{UC}$ be the number of edges incident to vertices in $U$ that are colored using the sub-palette $C$, where we count an edge twice if it connects two vertices from $U$. The statement of the lemma follows if we argue that $X'_{UC} < |C| \cdot |U| = 2c_K \cdot \varepsilon^2 \Delta^2$ (with high probability) at the end of the algorithm, since it implies that at least one vertex from $U$ has not used all colors from $C$ by the end of the algorithm. We proceed by analyzing the evolution of the quantity $X'_{UC}$ over time as the $|E_U| \leq \varepsilon \Delta^2$ edges from $E_U$ arrive. There are different cases to consider, depending on the status of the arriving edge $e_t$ at the time $t_{e_t}$ of its arrival:
    \begin{enumerate}
        \item If $e_t$ is incident to a \emph{bad} vertex on arrival, it might increase $X'_{UC}$ by $|e_t\cap U|\le 2$ if it was assigned a color $c\in C$. We count the number of such edges.
        Since we are only interested in bounding the probability in executions where no vertices turn dangerous (recall \cref{def:bad_vertices_etc}), we know that each vertex will have at most $\alpha \Delta$ bad neighbors.
        Thus at most $\alpha \Delta$ vertices in $U\subseteq N(v)$ are bad,
        having at most $\alpha \Delta^2$ incident edges to them in total. Next, we count the number of edges incident to $U$ such that they are incident to a \emph{bad} vertex which is not in $U$.
        Each vertex in $U$ has at most $\alpha \Delta$ \emph{bad} neighbors, such that there can be at most $\varepsilon \alpha \Delta^2$ such edges. Overall, our count shows that there can be at most $(\alpha + \varepsilon\alpha)\Delta^2$ edges incident to a \emph{bad} vertex arrival, such that the increase in $X'_{UC}$ caused by these edges is also bounded by $3\alpha\Delta^2$.

        \item Otherwise, if $e_t$ connects two \emph{good} vertices on arrival, it might be the case that it is \emph{unlucky}, i.e., $Z^{(t_{e_t}-1)}_{e_tC} > \frac{|C|}{\Delta}(1-\frac{\varepsilon}{2})$. By \eqref{eq:few_unlucky_per_matching}, with good probability there are at most $(\Delta + 1) \cdot \varepsilon^3 \Delta$ such edges. Each of them might increase $X'_{UC}$ by at most $2$, leading to a total increase of at most $2(\Delta + 1) \cdot \varepsilon ^3\Delta \leq 3 \cdot \varepsilon^3 \Delta^2$.

        \item It remains to analyze the case of edges $e_t$ connecting two \emph{good} vertices, and $Z^{(t_{e_t}-1)}_{e_tC} \leq \frac{|C|}{\Delta}(1-\frac{\varepsilon}{2})$. Let $F$ be the set of these edges. For each $f \in F$, define the random indicator $X_f$ which is $1$ iff $f$ is colored using a color from $C$. Since $Z^{(t_{e_t}-1)}_{e_tC} \leq \frac{|C|}{\Delta}(1-\frac{\varepsilon}{2})$ at the time $t_f$ at which $f$ arrives, and $f$ is not incident to any \emph{bad} vertex, we have $\Pr[X_f = 1] \leq \frac{|C|}{\Delta}(1-\frac{\varepsilon}{2}) = 2c_K \cdot \varepsilon(1 - \frac{\varepsilon}{2})$.

        Let $f_1,\dots,f_{|F|}$ be an ordering of the edges in $F$ by their arrival time. Our arguments above have shown that, for any $f_i \in F$, we have $\Pr[X_{f_i} \mid X_{f_1},\dots,X_{f_{i-1}}] \leq 2c_K \cdot \varepsilon(1 - \frac{\varepsilon}{2})$, since the decision w.r.t.\ coloring edge $f$ is independent of previous such decisions.
        Let $s_f = |f\cap U|$ be the amount (either $1$ or $2$) of vertices in $U$ that $f$ touches. We note that $\sum_{f\in F} s_f \le 
        \sum_{u\in U} \deg(u) \le \varepsilon \Delta^2$ and $|F|\le \eps\Delta^2$, so we also have $\E[ \sum_{f\in F} s_f X_f] \leq 2 c_K \cdot \varepsilon^2 \Delta^2 (1 - \frac{\varepsilon}{2})$. Thus, by standard coupling arguments, using Chernoff-Hoeffding bounds (noting that $s_f X_f \in [0,2]$) we obtain that:
        \begin{align*}
            \Pr \left[ \sum_{f \in F} s_f X_f \geq (2 c_K + 2\varepsilon) \cdot \varepsilon^2 \Delta^2 \left (1 - \frac{\varepsilon}{2} \right) \right] &= \Pr \left[ \sum_{f \in F} s_f X_f \geq 2 c_K \cdot \varepsilon^2 \Delta^2 \left( 1 - \frac{\varepsilon}{2} \right) \cdot \left (1 + \frac{\varepsilon}{c_K} \right) \right] \\
            &\leq \exp \left( - \frac{2 \cdot \varepsilon^4 \Delta^2 \left( 1 - \frac{\varepsilon}{2} \right)}{12 \cdot c_K} \right) \\
            &\leq \frac{1}{2^{110\Delta}n^{110}}.
        \end{align*}
        Hence, with probability at least $1 - 2^{-110\Delta}n^{-110}$, the total increase of $X'_{UC}$ caused by edges considered in this case is at most $(2 c_K + 2\varepsilon) \cdot \varepsilon^2 \Delta^2 \left (1 - \frac{\varepsilon}{2} \right)$.
    \end{enumerate}
    Recall that our objective is to prove that, at the end of the algorithm, the quantity $X'_{UC}$ is smaller than $|C| \cdot |U| = 2c_K \cdot \varepsilon^2 \Delta^2$. By the above discussion, the total increase is given (with probability at least $1-2^{-100\Delta}n^{-100}$, union bounding over all the above events) by:
    \begin{align*}
        &3\alpha\Delta^2 + 3 \cdot \varepsilon^3 \Delta^2 + (2 c_K + 2\varepsilon) \cdot \varepsilon^2 \Delta^2 \left (1 - \frac{\varepsilon}{2} \right) \\
        =\ &2c_K \cdot \varepsilon^2 \Delta^2 - ((c_K+\eps) \cdot \varepsilon^3 \Delta^2 - 
        3\alpha \Delta^2 - 3\eps^3 \Delta^2)
        \\
        <\ &
        2c_K \cdot \varepsilon^2 \Delta^2.
    \end{align*}
    The last inequality follows by the choice of $c_K \ge 100$ and $\alpha \le \eps^{3}/100$. 
\end{proof}

\appendix
\section*{APPENDIX}
\section{Lower Bounds}\label{sec:randomized-greedy}

\subsection{Online List Edge Coloring is Harder than Online Edge Coloring}
\label{sec:list-edge-coloring-lower-bound}

In the \emph{list edge coloring} problem, edges of the graph are revealed with an associated palette. 
An online algorithm must assign each edge a color from its designated palette (while guaranteeing that neighboring edges receive different colors).
In static settings it is known by the celebrated work of Kahn \cite{kahn1996asymptotically} that palettes of size $(1+\eps)\Delta$ suffice to color the graph (with $\eps\to 0$ as $\Delta\to \infty$).
Abusing notation slightly, we refer to this as a $\Delta(1+\eps)$ list edge coloring.
In online settings, \cite{blikstad2024online} showed that the bounds conjectured by \cite{bar1992greedy}, namely, a $(1+o(1))\Delta$-edge-coloring for graphs of maximum degree $\Delta=\omega(\log n)$ attained randomly, also holds for the more general list edge coloring problem. This may lead one to suspect that these problems are in some sense equivalent.

Here we provide a separation between these two problems. 
We prove that the tight bounds achievable for online edge coloring (\Cref{result:adaptive,result:oblivious}) are not achievable for the list edge coloring problem. Indeed, we observe that for higher degree thresholds, even palettes of size $2\Delta-1$, which suffice for a greedy algorithm, are needed.

The basic instance is again two stars of degree $\Delta-1$, where the roots $u,v$ are finally connected by an edge $\{u,v\}$, arriving last.
Each edge of the stars has a distinct palettes (of size at most $2\Delta-1$), resulting in these having distinct colrs.
We show that if the last edge $e:=\{u,v\}$ has palette $|P_e|$ size strictly less than $2\Delta-1$, then it cannot be colored (at least probabilistically).

For deterministic lower bounds: 
The final edge $e=\{u,v\}$ arrives with palette $P_e$ a subset of $|P_e|\leq 2\Delta-2$ of the neighboring edges' previously-assigned colors, and so it cannot colored.

\begin{theorem}
    No deterministic online list edge coloring algorithm succeeds with edge palettes of size strictly less than $2\Delta-1$, even for graphs with maximum degree $\Delta=\Omega(n)$.
\end{theorem}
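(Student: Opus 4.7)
The plan is to exhibit the simple adversarial instance already sketched in the paragraph preceding the theorem, and to pin down palettes so as to force the deterministic algorithm into a corner. I would take $n = 2\Delta$ vertices: two centers $u,v$ together with leaves $u_1,\dots,u_{\Delta-1}$ and $v_1,\dots,v_{\Delta-1}$. The graph consists of the two stars $\{u,u_i\}$ and $\{v,v_j\}$ together with a final edge $e = \{u,v\}$; both $u$ and $v$ have degree exactly $\Delta$, so this certifies $\Delta = \Omega(n)$.

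The instance arrives in two phases. First, the $2\Delta-2$ star edges arrive, each equipped with a \emph{singleton} palette: the edge $\{u,u_i\}$ carries palette $\{i\}$ for $i \in [\Delta-1]$, and the edge $\{v,v_j\}$ carries palette $\{\Delta-1+j\}$ for $j \in [\Delta-1]$. Since a singleton palette leaves no choice, any (online, deterministic or otherwise) algorithm is forced to color these star edges with the $2\Delta-2$ distinct colors $1,\dots,2\Delta-2$. Then the final edge $e$ arrives, carrying the palette $P_e = \{1,2,\dots,2\Delta-2\}$, which has size exactly $2\Delta-2 < 2\Delta-1$ as required by the statement.

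At this point the argument is immediate: every color $c \in P_e$ is already used on an edge incident to $e$ (color $i \le \Delta-1$ appears on $\{u,u_i\}$, and color $\Delta-1+j$ for $j \le \Delta-1$ appears on $\{v,v_j\}$). Hence no valid color is available for $e$, and the algorithm fails on this instance. Because the construction is fully fixed before the algorithm makes any decision, the same instance defeats \emph{every} deterministic online list edge coloring algorithm simultaneously.

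I do not expect a genuine obstacle here; the only subtle point is the choice of singleton palettes for the star edges, which removes the algorithm's agency on those arrivals and makes the forced color set on $e$'s neighborhood an adversarial input rather than something one must argue about probabilistically. Everything else is bookkeeping on $n$, $\Delta$, and palette sizes.
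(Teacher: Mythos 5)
Your proof is correct and follows essentially the same construction as the paper: two $(\Delta-1)$-stars joined by a final edge whose palette consists entirely of colors already used on its neighbors. Your use of singleton palettes on the star edges is a small refinement (it makes the instance fully oblivious without needing to simulate the deterministic algorithm's choices), but the argument is otherwise the same.
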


For randomized lower bounds: 
The palette $P_e$ of  the last edge $e=\{u,v\}$ is obtained by picking a color uniformly at random from the palettes of each of some $|P_e|\leq 2\Delta-2$ neighboring edges' palettes. 
Since each such palette has size at most $2\Delta-1$, the edge $\{u,v\}$ cannot be colored from its palette precisely when the random palette of $\{u,v\}$ contains only colors used by neighboring edges, which happens with probability $(1/c\Delta)^{2\Delta-2}=\Delta^{-\Theta(\Delta)}$.
Thus, if we repeat this construction some $\Delta^{\Theta(\Delta)}$ many times, any online algorithm has a constant probability of failing. The above requires $\Delta^{\Theta(\Delta)}$ many nodes, which is achievable already for $\Delta=O\big(\frac{\log n}{\log \log n}\big)$.
Again, this makes a list edge coloring counterpart to \Cref{result:oblivious} impossible, and shows that the list edge coloring result of \cite{blikstad2024online} is almost tight.

\begin{theorem}
    No randomized online list edge coloring algorithm  succeeds (against oblivious adversaries) with edge palettes of size strictly less than $2\Delta-1$, even for graphs with maximum degree $\Delta=\Omega\big(\frac{\log n}{\log \log n}\big)$.
\end{theorem}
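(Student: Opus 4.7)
The plan is to exhibit a hard distribution over instances and invoke Yao's minimax principle. The base gadget consists of two vertex-disjoint stars of $\Delta-1$ leaves centered at $u$ and $v$, followed by the edge $e=\{u,v\}$ arriving last. Every leaf edge is assigned a palette of size $2\Delta-2$, and the $2(\Delta-1)$ leaf palettes are chosen pairwise disjoint; consequently each leaf edge is trivially colorable and any two leaf edges automatically receive distinct colors. The palette $P_e$ of the final edge is then drawn by the adversary by selecting, independently and uniformly, one color from each of the $2\Delta-2$ leaf palettes, so $|P_e|=2\Delta-2<2\Delta-1$ as required.

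Fix any deterministic online algorithm and, for each leaf $f_i$, let $c(f_i)\in P_{f_i}$ be the color the algorithm commits to (determined before $P_e$ is revealed). The edge $e$ fails to be colorable iff every color in $P_e$ is already used by a neighbor; by the disjointness of the leaf palettes, the color sampled from $P_{f_i}$ lies in the set of used colors iff it equals $c(f_i)$. Since the adversary samples each $x_i\in P_{f_i}$ uniformly and independently of both the algorithm and of the other samples,
\[
\Pr[e \text{ uncolorable}] = \prod_{i=1}^{2\Delta-2}\Pr[x_i=c(f_i)] = \left(\frac{1}{2\Delta-2}\right)^{2\Delta-2} = \Delta^{-\Theta(\Delta)}.
\]
Forming a single input out of $N=\Delta^{\Theta(\Delta)}$ vertex-disjoint copies of the gadget (with independent palette samples across copies) drives the deterministic algorithm's success probability below $1/2$ by independence. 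Each copy uses $2\Delta$ vertices and has max degree $\Delta$, so the overall graph has max degree $\Delta$ and $n=2\Delta\cdot N=2^{\Theta(\Delta\log\Delta)}$ vertices, which rearranges to $\Delta=\Theta(\log n/\log\log n)$, matching the statement. Yao's minimax principle then converts this deterministic-vs-random-input lower bound into a bound against randomized algorithms under an oblivious adversary.

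The main delicate point I anticipate is the independence argument in the probability calculation: one must verify that, even though the algorithm may exhibit correlations across the colorings $c(f_i)$, the adversary's independent uniform sampling from the disjoint leaf palettes renders the $2\Delta-2$ collision events $\{x_i=c(f_i)\}$ mutually independent conditional on the algorithm's transcript. The disjointness of the leaf palettes is what rules out the alternative failure mode in which a sampled color is ``unluckily'' used by a neighbor other than $f_i$; keeping this disjointness explicit throughout is what makes the otherwise elementary counting argument go through cleanly.
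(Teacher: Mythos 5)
Your proposal is correct and follows essentially the same route as the paper: two $(\Delta-1)$-leaf stars joined by a last-arriving edge whose palette is formed by sampling one color uniformly from each of the pairwise-disjoint leaf palettes, giving per-gadget failure probability $\Delta^{-\Theta(\Delta)}$, then $\Delta^{\Theta(\Delta)}$ independent copies and an averaging/Yao argument. Your write-up actually makes explicit the independence and disjointness details that the paper only sketches, but there is no substantive difference in approach.
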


\subsection{Randomized Greedy Fails When $\Delta = O(\log n)$}
\label{sec:oblivious_lowerbound}
In this section we prove that the candidate \emph{randomized greedy} algorithm of \cite{bar1992greedy} does not yield a $(1+o(1))\Delta$-edge-coloring matching the bounds of our \Cref{result:oblivious}. 
We recall that randomized greedy (see \cref{alg:randomized-greedy-edge-coloring}) picks a uniformly random color from the palettes of colors available to both its endpoints, with all palettes initialized to some common palette $\Calg$.

\begin{algorithm}[ht!]
	\caption{Randomized Greedy Edge Coloring Algorithm \cite{bar1992greedy}}
	\label{alg:randomized-greedy-edge-coloring}
	\begin{algorithmic}[1]
        \Statex \underline{\smash{\textbf{Input:}}} Vertex set $V$ and maximum degree $\Delta\in \mathbb{Z}_{\geq 0}$ of the graph to arrive
        \Statex 
        \textbf{\underline{Initialization:}} For each vertex $v\in V$: Set a palette of colors
         $P_v\gets \C_{\mathrm{alg}}$.
        \Statex 
		\For{\textbf{each} online edge $e_t=\{u,v\}$ on arrival} 
            \State Assign $e_t$ a uniformly random color $c$ from $P_u\cap P_v$ (\textbf{if none exist: fail})
            \State $P_u\gets P_u \setminus \{c\}$ and $P_v\gets P_v\setminus \{c\}$.
		\EndFor
	\end{algorithmic}
\end{algorithm}

We observe that this algorithm cannot even beat the (non-randomized) greedy algorithm's $2\Delta-1$ colors for sufficiently small $\Delta=O(\log n)$, on obliviously generated graphs.

\begin{thm}\label{thm:rg-poor-obliviously}
For any $n\geq 16$ and $2\leq \Delta\leq \frac{1}{4}\log_2 n+1$, randomized greedy (\Cref{alg:randomized-greedy-edge-coloring}) fails with any palette of size $2\Delta-2$ or smaller on some $n$-node graph of maximum degree $\Delta$ generated in advance, by an oblivious adversary.
\end{thm}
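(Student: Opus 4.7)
The plan is to construct an oblivious instance consisting of many vertex-disjoint ``double-star'' gadgets, in the same spirit as the list-coloring lower bound of the preceding subsection, and to use independence across gadgets to amplify a small per-gadget failure probability into a positive failure probability. Each gadget consists of two vertex-disjoint stars centered at $u$ and $v$ with $\Delta-1$ leaves apiece, together with a final joining edge $\{u,v\}$. Within each gadget the arrival order is: first the $\Delta-1$ edges incident to $u$ (in arbitrary order), then the $\Delta-1$ edges incident to $v$, and finally the edge $\{u,v\}$ itself.

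I would first analyze one gadget. Let $A$ denote the set of colors removed from $P_u$ while the star at $u$ is processed, and $B$ the analogous set for $v$. The key distributional claim is that $A$ and $B$ are \emph{independent and each uniformly distributed over $(\Delta-1)$-subsets of $\Calg$}. Uniformity follows from a one-line induction: each leaf of $S_u$ arrives with the pristine palette $\Calg$, so the $i$-th edge of the star picks a uniformly random color from $P_u$ (which has size $|\Calg|-(i-1)$); hence the unordered set of colors used is uniform over $(\Delta-1)$-subsets of $\Calg$. Independence is immediate since the two stars share no vertices and therefore use disjoint randomness. With $|\Calg|=2\Delta-2$, the final edge $\{u,v\}$ cannot be colored iff $P_u\cap P_v=\Calg\setminus(A\cup B)=\emptyset$, which, since $|A|+|B|=2\Delta-2=|\Calg|$, is equivalent to $B=\Calg\setminus A$. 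By independence and uniformity of $B$, this event has probability exactly $p:=1/\binom{2\Delta-2}{\Delta-1}$.

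I would then stack $m:=\lfloor n/(2\Delta-1)\rfloor$ vertex-disjoint copies of the gadget (padding with isolated vertices if needed), and concatenate their arrival orders. Because the gadgets share no vertices and thus no palettes, the $m$ per-gadget failure events are mutually independent, so randomized greedy fails on the whole instance with probability at least $1-(1-p)^m$. The remaining step is a numerical check under the hypothesis $2\leq\Delta\leq\tfrac14\log_2 n+1$ and $n\geq 16$. Using the crude bound $\binom{2\Delta-2}{\Delta-1}\leq 2^{2\Delta-2}\leq\sqrt n$ and the observation that $2\Delta-1\leq\log_2 n$ for $n\geq 16$, one obtains $mp=\Omega(\sqrt n/\log n)$, so the failure probability is at least $1-\exp(-mp)$, which is bounded away from $0$ (and in fact tends to $1$). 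The only place requiring any care is the distributional claim about $A$ and $B$; the amplification and the parameter inequalities are routine.
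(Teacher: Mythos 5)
Your proposal is correct and follows essentially the same route as the paper: the same two-star gadget joined by a last edge, the same per-gadget failure probability $1/\binom{2\Delta-2}{\Delta-1}\geq 2^{-(2\Delta-2)}\geq 1/\sqrt n$, and the same amplification over vertex-disjoint independent copies. (Your per-gadget analysis is phrased slightly differently—you compute the exact failure probability via uniformity and independence of the used color sets $A$ and $B$, whereas the paper conditions on $P_u$ and lower-bounds via the event that the first $k=|\Calg\setminus P_u|$ edges at $v$ use exactly those colors—but the two computations coincide for $|\Calg|=2\Delta-2$.) Two small points to tidy up: the theorem covers all palettes of size at most $2\Delta-2$, and your exact computation only treats $|\Calg|=2\Delta-2$; for $|\Calg|<\Delta-1$ the algorithm already fails on a single star, and for $\Delta-1\le|\Calg|<2\Delta-2$ the same uniformity argument gives failure probability $\binom{|\Calg|-k}{\Delta-1-k}/\binom{|\Calg|}{\Delta-1}\ge 1/\binom{|\Calg|}{k}\ge 2^{-|\Calg|}$, so a one-line remark suffices. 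Also, each gadget has $2\Delta$ vertices ($2\Delta-1$ is its edge count), so the number of disjoint copies should be $\lfloor n/(2\Delta)\rfloor$ rather than $\lfloor n/(2\Delta-1)\rfloor$; this does not affect the asymptotics.
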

\begin{proof}
    Consider a run of randomized greedy with a palette $\Calg$ of size $m \leq 2\Delta-2$, on a (sub)graph consisting of two stars with $\Delta-1$ leaves, rooted at $u$ and $v$, with an edge $e_t=\{u,v\}$ arriving after the stars' edges.\footnote{\cite{bar1992greedy} used this construction, but only proved with it that randomized greedy needs $|\Calg|\geq \Delta+\Omega(\sqrt{\Delta})$ colors.}
    Fix the remaining palette $P_u$ of $u$ after coloring both stars. 
    Note that if $P_u\cap P_v = \emptyset$ before edge $e_t$ arrives, then randomized greedy fails to color $e_t$.
    So, the algorithm fails, for example, if the first $k:=|\Calg\setminus P_u|$ edges of $v$ are assigned colors in $\Calg\setminus P_u$. (Note that $k = |\Calg|-(\Delta-1) = m-(\Delta-1) \leq (2\Delta-2)-(\Delta-1)=\Delta-1$, so $v$ has at least $k$ edges in its star.) Consequently:
    \begin{align*}
        \Pr[\textrm{randomized greedy fails on $\{u,v\}$}] & \geq \Pr[\textrm{first $k$ edges of $v$ colored using $\Calg\setminus P_u$}] \\
        & =\prod_{i=1}^{k}\frac{k+1-i}{m+1-i} \\
        & = \frac{k!}{m\cdot(m-1)\cdot \dots \cdot(m-k+1)} \\
        & = \frac{k!}{m!/(m-k)!} \\
        & = \frac{1}{\binom{m}{k}} \\
        & \geq 1/2^m.
    \end{align*}
    Therefore, for $\Delta<\frac{1}{4}\log_2 n+1$, we have that $m\leq 2\Delta-2\leq \frac{1}{2}\log_2 n$, and so the probability of failure on one such star is at least $1/2^m\geq 1/\sqrt{n}$.
    Now, consider a graph containing $\lfloor \sqrt{n}/2 \rfloor$ copies of this graph, consisting of $\lfloor \sqrt{n}/2\rfloor(2\Delta)\leq \sqrt{n}\Delta \leq n$ nodes (using that $\Delta\leq \frac{1}{4}\log_2n+1\leq \sqrt{n}$ for all $n\geq 1$), and a further $n-\lfloor \sqrt{n}/2\rfloor(2\Delta)$ isolated nodes.
    Then, by independence of choices of randomized greedy, the above is an $n$-node graph with $n\geq 16$ 
    of maximum degree $\Delta=\frac{1}{4}\log_2n+1$, on which randomized greedy fails with constant probability, at least $1-(1-1/\sqrt{n})^{\lfloor \sqrt{n}/2\rfloor}\approx 1/e^2$.
\end{proof}

\begin{remark}
    The above bad example extends to random-order edges arrivals (where in each gadget the probability of failure decreases by a factor of $1/(2\Delta-1)$, which is precisely the probability of $\{u,v\}$ arriving last among the $2\Delta-1$ edges.
    Repeating the above construction some $\Theta(\Delta2^m)$ times results in a constant failure probability on $n$-node graphs with maximum degree $\Delta=O(\log n)$ under random-order arrivals. This shows that the analysis of \cite{dudeja2025randomized} for \Cref{alg:randomized-greedy-edge-coloring} under random-order arrivals is asymptotically optimal.
\end{remark}

\begin{remark}[Emergency palettes]
This lower bound example demonstrates that algorithms treating colors symmetrically and coloring disjoint stars independently will inevitably fail. This reveals the necessity for asymmetric treatment of the colors in our approach. We achieve this asymmetry in \cref{alg:edge-coloring,alg:edge-coloring-oblivious} by implementing a backup/emergency color palette $\Cgreedy$
that reserves certain colors for use only when necessary.
\end{remark}

\subsection{Randomized Greedy is Unlikely To Work Against Adaptive Adversaries}
\label{sec:experiments}
In this section we explain why we believe that randomized greedy (\cref{alg:randomized-greedy-edge-coloring}) is problematic if the adversary is adaptive, which motivates our use of a different algorithm to achieve our \cref{result:adaptive}. In particular, we have implemented experiments that run randomized greedy on a specific adaptively-generated graph that seem to suggest that randomized greedy fails if the palette is smaller than $\frac{e}{e-1}\Delta$.

Our insight is that the randomized greedy algorithm seemingly amplifies 
``bias'' (preferring some colors over others) on specific adversarially adaptively-generated graphs, despite picking colors uniformly at random from the common palettes $P_u \cap P_v$ for arriving edge $\{u,v\}$.  This is in contrast to our \cref{alg:edge-coloring,alg:edge-coloring-oblivious}, where a color is picked from the common palette $P_{u}\cap P_{v}$ not uniformly, but instead with weighted probabilities based on the current execution. These weighted (or biased) probabilities work in our favor, counteracting any bias the adaptive adversary could try to introduce.

\begin{remark}
\label{remark:simulation}
What we write below is \textbf{not} a formal lower bound ruling our randomized greedy. Instead, our aim is to provide some intuition explaining why we believe that randomized greedy is not the right ``deterministic'' (equivalently, randomized against adaptive adversaries) algorithm to look at. This also motivates our use of weighted probabilities in \cref{alg:edge-coloring,alg:edge-coloring-oblivious}.
While we do not formally prove that randomized greedy does not work on the hard instance described below, we have implemented a simulation of randomized greedy on this hard instance. In our simulations (when $\Delta \approx 10^{4}$ and $n\approx 10^{16}$) randomized greedy seemingly fails whenever the palette has size $\le 1.58\Delta$ (but succeeds for palettes of size $\ge 1.6 \Delta$), supporting our intuition that randomized greedy needs at least $\approx (\frac{e}{e-1}) \Delta\approx 1.582\Delta$ colors to work against adaptive adversaries.
\end{remark}

\paragraph{Hard (Adaptive) Instance For Randomized Greedy.}
We describe a construction with $\Delta \gg \log n$ where we believe randomized greedy requires at least $\approx \frac{e}{e-1}\Delta$ colors. The adversary first generates many $\Delta-1$ degree stars---randomized greedy will color each star independently and uniformly at random from a palette $\Calg$. Split up the palette into two equally sized parts $L\, \dot{\cup}\, R = \Calg$. For a vertex $v$ the center of a star, let $\ell_{v} = \frac{|L\cap P_{v}|}{|P_v|}$ and $r_{v} = \frac{|R\cap P_{v}|}{|P_v|}$ be the fraction of available colors in $L$ and $R$ respectively. Note that $\ell_v +r_v = 1$, and define the \emph{bias} of $v$ to be $b_v = \max\{\ell_{v},r_v\} - \frac{1}{2}$. For most stars, we expect $b_v = \Theta(\frac{1}{\sqrt{\Delta}})$ (i.e., $\ell_v$ is around one standard deviation away from its mean).

The construction is now easy: discard all vertices where $\ell_{v} \le r_v$, and, in a bottom up fashion, build a balanced rooted tree with the remaining stars as leafs. That is, the remaining stars are in layer $0$, and layer $i+1$ is constructed by creating new vertices $v$ that are each connected to $\Delta-1$ vertices from layer $i$. Note that the only \emph{adaptivity} is choosing what leafs to keep (discarding those with $\ell_v < r_v$), building the tree afterwards can be done by an oblivious adversary.  We will only need $O(\log \Delta)$ layers, so in total $\Delta^{O(\log \Delta)}$ nodes (for example setting $\Delta \approx 2^{\sqrt{\log n}}\gg \log^{100} n$ would work).

\paragraph{Bias Amplification for Randomized Greedy.}
Here we will sketch how using randomized greedy intuitively leads to the bias growing exponentially between layers (alternating between preferring colors in $L$ versus $R$). If this was the case, after $O(\log \Delta)$ layers vertices would have bias $b_v = 1$, that is either all of $L$ or all of $R$ is unavailable. Then a vertex on the next layer will not be able to successfully color all its neighbors as $|L| = |R| = |\Calg|/2 \ll \Delta$.

Consider a vertex $v$ connected to $\Delta-1$ nodes $u_1, \ldots, u_{\Delta-1}$ all with $\ell_{u_i} = \frac{1}{2} + \eps$ and $r_{u_i} = \frac{1}{2} - \eps$ (so their bias is $b_{u_i} = \eps$). We will sketch an argument that the bias of $v$ is, intuitively, twice as large, i.e.~$\ge 2\eps$. When the edge $\{v,u_1\}$ arrives, it will use a color from $L$ with probability $\frac{1}{2}+\eps$ or a color from $R$ with probability $\frac{1}{2}-\eps$. Let us for now (incorrectly) imagine this happens for the other edges $\{v,u_{i}\}$ similarly and independently. Then we would expect $v$ to have $(\frac{1}{2}+\eps)(\Delta-1)$ incident edges colored from $L$. Note that $|L|= |\Calg|/2 = \frac{e}{2(e-1)}\Delta$ and $v$ has now $|P_v| = |\Calg|-(\Delta-1) \approx \frac{1}{e-1}\Delta$ remaining free colors. We get:
\begin{equation*}
\ell_{v} = \frac{|L\cap P_e|}{|P_e|} \approx
\frac{\frac{e}{2(e-1)}\Delta-\frac{1}{2}\Delta-\eps\Delta}{\frac{1}{e-1}\Delta}
= \frac{1}{2}-(e-1)\eps.
\end{equation*}
That is, with our rough estimate we would expect the bias of $v$ to be $b_v = (e-1)\eps > 1.5\eps$ if it connects to $\Delta-1$ vertices with bias $b_{u_i} = \eps$.

The above analysis makes an unrealistic assumption that the colors of edges $\{v,u_i\}$ are independent. Indeed this is not the case, if among $u_1, \ldots, u_{k}$ many of the edges where colored using $L$,  then the edge $\{v,u_{k+1}\}$ is less likely to be colored from $L$.
The distribution of how many incident edges (at $v$) are colored from $L$ is more complicated to reason about. 
While not quite the same, this distribution intuitively behaves similarly to Wallenius' noncentral hypergeometric distribution\footnote{This distribution represents a biased urn model where each ball has a weight. With $m_1$ red balls (weight $\omega_1$) and $m_2$ white balls (weight $\omega_2$), we draw $n$ balls sequentially. At each draw, the probability of selecting any specific ball equals its weight divided by the total weight of all remaining balls. } with weights $\omega_1 = \frac{1}{2}+\eps$, $\omega_2 = \frac{1}{2}-\eps$, and $m_1 = |L|$, $m_2 = |R|$ and $n = \Delta-1$. Under the (still not quite correct, but significantly more realistic) assumption that the colors are drawn from Wallenius' noncentral hypergeometric distribution, the bias would still grow exponentially as above, \emph{but only if the palette is smaller than $\frac{e}{e-1}\Delta$}.\footnote{At \url{https://www.desmos.com/calculator/rmrf5jaevv} we show an analytic derivation of this threshold.} This intuition (while not formal) agrees with our simulations (see also \cref{remark:simulation}).

An alternative way to reason about this process is to set it up as a stochastic differential equation (SDE), which generally by standard arguments such as Kurtz's theorem \cite{kurtz1970solutions}, we expect to behave like the ordinary (deterministic) differential equation (ODE) obtained by replacing the discrete (expected) derivative by the its continuous (deterministic) counterpart, and that the SDE is with high probability at all times near the ODE's solution. Unfortunately, the obtained ODE seems challenging to solve, and we can only verify numerically that this ODE converges to a scenario increasing bias when $|\Calg|\leq (e/(e-1))\Delta$.
Unfortunately, this ODE seems challenging to solve other than numerically, and so we omit the details.

\begin{remark}
One could ask if a backup palette could help randomized greedy on the above instance? We believe the answer is no---that it would not help---as the root of the tree would get too many edges forwarded to the backup palette (also supported by implementating and simulating this).
\end{remark}

\paragraph{Bias Cancellation in \cref{alg:edge-coloring,alg:edge-coloring-oblivious}.}
Our \cref{alg:edge-coloring,alg:edge-coloring-oblivious} would not run into the same issues as randomized greedy on this tricky instance. If many edges incident to some vertex $v$ are all more biased toward picking colors from $L$, the algorithm would naturally scale up the probabilities $P_{ec}$ for picking the remaining colors $c\in L$ \emph{more} than that for colors $c\in R$. Even though vertex $v$ will eventually have fewer colors free in $L$ than in $R$, the colors $c\in L$ have larger weights $P_{ec}$ assigned to them (since they where scaled up more), and thus a future arriving edge $e$ will still be roughly equally likely to be colored from $L$ or $R$. That is, the vertex~$v$ is (from an another vertex's perspective) less, and not more, biased---preventing any bias amplification.

\bibliographystyle{alpha}
\bibliography{abb,ultimate,bibliography}

\end{document}